\theoremstyle{definition}\newtheorem{definition}{Definition}[section]
\newtheorem{theorem}[definition]{Theorem}
\theoremstyle{theorem}
\theoremstyle{theorem}\newtheorem{lemma}[definition]{Lemma}
\theoremstyle{theorem}\newtheorem{proposition}[definition]{Proposition}
\theoremstyle{theorem}\newtheorem{corollary}[definition]{Corollary}
\theoremstyle{example}
\theoremstyle{remark}
\newcommand{\lwid}{\ensuremath{\lambda_{\sym{widget}}}\xspace}
\newcommand\locs[1][]{\ensuremath{\sym{Loc}_{#1}}\xspace}
\newcommand\heaps[1][]{\ensuremath{\sym{Heap}^{#1}}\xspace}
\newcommand\chans{\ensuremath{\sym{Chan}}\xspace}
\newcommand\inp[1][\kappa]{\sym{wait}_{#1}}
\newcommand\buf[1][\kappa]{\sym{read}_{#1}}
\newcommand\mktick[2][\kappa\mapsto v]{\sym{tick}_{#1}\left(#2\right)}
\newcommand\cl[1]{\sym{cl}\left(#1\right)}
\newcommand{\stick}[2]{\left\langle #1 \mapsto #2 \right\rangle}
\newcommand{\synch}{\sym{sync}}
\newcommand\nin{\not\in}
\newcommand\sym[1]{\mathsf{#1}}
\newcommand\tuple[1]{\left\langle#1\right\rangle}
\newcommand\pair[2]{\left(#1,#2\right)}
\newcommand\set[1]{\left\{#1\right\}}
\newcommand{\setcom}[2]{\set{#1\left\vert\vphantom{#1}\,#2\right.}}
\newcommand{\tick}[1][]{
  \ifthenelse{\equal{#1}{}}
  {\checkmark}
  \checkmark_{\hspace{-4pt}#1\hspace{1pt}}
}
\newcommand\dom[1]{\sym{dom}\left(#1\right)}
\newcommand\domp[1]{\sym{dom}_{\cpush}\left(#1\right)}
\newcommand\tickle[1][\Delta]{\mathop{\sqsubseteq_\checkmark^{#1}}}
\newcommand\tickge[1][\Delta]{\mathop{\sqsupseteq_\checkmark^{#1}}}
\newcommand\heaple{\sqsubseteq}
\newcommand\heapge{\sqsupseteq}
\newcommand\allocate[2][\Theta]{\sym{alloc}^{#1}\left(#2\right)}
\newcommand\varid[1]{\ensuremath{\mathit{#1}}}
\newcommand{\interm}{\sym{in}}
\newcommand{\caseterm}[5]{\sym{case}\, #1 \, \sym{of}\, \interm_1\,#2
  . #3 ; \interm_2\,#4 . #5}
\newcommand{\caset}[1]{\sym{case}\, #1 \, \sym{of}}
\newcommand{\letterm}[3]{\sym{let}\, #1 = #2 \, \sym{in} \, #3}
\newcommand\case{\mathsf{case}}
\newcommand\unit{\langle\rangle}
\newcommand\zero{0}
\newcommand\suc{\sym{suc}}
\newcommand\Fix{\sym{Fix}}
\newcommand\fix{\sym{fix}}
\newcommand\dfix{\sym{dfix}}
\newcommand\adv{\sym{adv}}
\newcommand\select{\sym{select}}
\newcommand\delay{\sym{delay}}
\newcommand\rbox{\sym{box}}
\newcommand\unbox[1][]{\sym{unbox}^{#1}}
\newcommand\never{\sym{never}}
\newcommand\out{\sym{out}}
\newcommand\into{\sym{into}}
\newcommand{\cons}{::}
\newcommand{\gc}[1]{\sym{gc}(#1)}
\newcommand\recN[5]{\sym{rec}_{\Nat}(#1,#2\,#3.#4,#5)}
\newcommand{\stabilize}[1]{#1^\Box}
\newcommand{\cpush}{\sym{p}}
\newcommand{\cbuf}{\sym{b}}
\newcommand{\cbp}{\sym{bp}}
\newcommand{\capa}{c}
\newcommand{\superimpose}[2]{{%
  \ooalign{%
    \hfil$\m@th#1\@firstoftwo#2$\hfil\cr
    \hfil$\m@th#1\@secondoftwo#2$\hfil\cr
  }%
}}
\newcommand\Delay{\ensuremath{{\bigcirc}}}
\newcommand\DelayE{\ensuremath{\mathrlap{\hspace{2.1pt}{\scriptstyle\exists}}{\Delay}}}
\newcommand\DelayA{\ensuremath{\mathrlap{\hspace{2.1pt}{\scriptstyle\forall}}{\Delay}}}
\newcommand\DelayES{\ensuremath{\mathrlap{\hspace{1.7pt}{\scriptstyle\exists}}{\Delay}}}
\newcommand\DelayAS{\ensuremath{\mathrlap{\hspace{1.7pt}{\scriptstyle\forall}}{\Delay}}}
\newcommand\stable{\sym{stable}}
\newcommand\Sig{\sym{Sig}}
\newcommand\Nat{\sym{Nat}}
\newcommand\Float{\sym{Float}}
\newcommand\Unit{\sym{1}}
\newcommand\nats{{\mathbb N}}
\newcommand\Prod[1]{\sym{Prod}\left(#1\right)}
\newcommand\Gammaout{\Gamma_{\sym{out}}}
\newcommand\vinterp[3][\Delta]{\mathcal{V}_{#1}\llbracket #2\rrbracket(#3)}
\newcommand\tinterp[3][\Delta]{\mathcal{T}_{#1}\llbracket #2\rrbracket(#3)}
\newcommand\cinterp[3][\Delta]{\mathcal{C}_{#1}\llbracket #2\rrbracket(#3)}
\newcommand\sem[1]{\llbracket #1 \rrbracket}
\newcommand\wfcxt[2][\Delta]{#2 \vdash_{#1} }
\newcommand\hastype[4][\Delta]{#2 \vdash_{#1} #3 : #4}
\newcommand\isclock[3][\Delta]{#2 \vdash_{#1} #3 : \sym{Clock}}
\newcommand\stateS[3][\iota]{\left\langle #2;#3 ; #1 \right\rangle}
\newcommand\stateI[2][\iota]{\left\langle #2 ; #1 \right\rangle}
\renewcommand\state[2]{\left\langle #1;#2 \right\rangle}
\newcommand\heval[5][\iota]{ \state{#2}{#3} \Downarrow^{#1} \state{#4}{#5}}
\newcommand\cleval[1]{\left|#1\right|}
\newcommand\restr[2][\Theta]{\left[#2\right]_{#1}}
\newcommand\clin[2][\kappa]{\left[#2\right]_{#1 \in}}
\newcommand\clnin[2][\kappa]{\left[#2\right]_{#1 \nin}}
\newcommand{\pIState}[1]{I_{#1}}
\newcommand{\pState}[1]{S_{#1}}
\newcommand{\pTState}[2][\Delta']{T_{#2}^{#1}}
\newcommand\tsize[1]{\left| #1\right|}
\newcommand\forward[1]{\stackrel{#1}{\Longrightarrow}}
\newcommand{\iso}{\cong}
\begin{document}

\title{Asynchronous Modal FRP}


\author{Patrick Bahr}

\affiliation{
  \institution{IT University of Copenhagen}            
  \country{Denmark}                    
}
\email{paba@itu.dk}          

\author{Rasmus Ejlers M{\o}gelberg}

\affiliation{
  \institution{IT University of Copenhagen}            
  \country{Denmark}                    
}
\email{mogel@itu.dk}          


\begin{abstract}
  Over the past decade, a number of languages for functional reactive
  programming (FRP) have been suggested, which use modal types to
  ensure properties like causality, productivity and lack of space
  leaks. So far, almost all of these languages have included a modal
  operator for delay on a global clock. For some applications,
  however, a global clock is unnatural and leads to leaky
  abstractions as well as inefficient implementations. While modal
  languages without a global clock have been proposed, no operational
  properties have been proved about them, yet.

  This paper proposes Async RaTT, a new modal language for
  asynchronous FRP, equipped with an operational semantics mapping
  complete programs to machines that take asynchronous input signals
  and produce output signals. The main novelty of Async RaTT is a new
  modality for asynchronous delay, allowing each output channel to be
  associated at runtime with the set of input channels it depends on,
  thus causing the machine to only compute new output when
  necessary. We prove a series of operational properties including
  causality, productivity and lack of space leaks. We also show that,
  although the set of input channels associated with an output channel
  can change during execution, upper bounds on these can be determined
  statically by the type system.
\end{abstract}

\ccsdesc[500]{Software and its engineering~Functional languages}
\ccsdesc[500]{Software and its engineering~Data flow languages}
\ccsdesc[300]{Software and its engineering~Recursion}
\ccsdesc[300]{Theory of computation~Operational semantics}

\keywords{Functional Reactive Programming, Modal Types, Linear Temporal Logic, 
  Synchronous Data Flow Languages, Type Systems}  

\maketitle

\section{Introduction}

Reactive programs are programs that engage in a dialogue with their environment,
receiving input and producing output, often without ever
terminating. Examples include much of the most safety critical
software in use today, such as control software and servers, as well
as GUIs. Most reactive software is written in imperative languages
using a combination of complex features such as callbacks and shared
memory, and for this reason it is error-prone and hard to reason
about.

The idea of functional reactive programming (FRP)~\cite{FRAN} 
is to provide the programmer with the
right abstractions to write reactive programs in functional style, allowing for short 
modular programs, as well as modular reasoning about these.  For such abstractions
to be useful it is important that they are designed to allow for efficient low-level 
implementations to be automatically generated from programs. 

The main abstraction of FRP is that of signals, which are
time-dependent values.  In the case of discrete time given by a global
clock, a signal can be thought of as a stream of data. A reactive
program is essentially just a function taking input signals and
producing output signals. For this to be implementable, however, it
needs to be causal: The current output must only depend on current
and past input.  Moreover, the low-level implementations generated
from high-level programs should also be free of (implicit) space- and
time-leaks. This means that reactive programs should not store data
indefinitely causing the program to eventually run out of space, nor
should they repeat computations in such a way that the execution of
each step becomes increasingly slower.

These requirements have led to the development of modal
FRP~\cite{jeffrey2014,jeffrey2012,krishnaswami2011ultrametric,krishnaswami2012higher,krishnaswami13frp,bahr2019simply,rattusJFP,Jeltsch2012},
a family of languages using modal types to ensure that all programs
can be implemented efficiently.  The most important modal type
constructor is $\Delay$, used to classify data available in the next
time step on some global discrete clock. For example, the type of signals
should satisfy the type isomorphism
$\Sig\,A \iso A \times \Delay (\Sig\, A)$ stating that the current
value of the signal is available now, but its future values are only
available after the next time step. Using this encoding of signals,
one can ensure that all reactive programs are causal. Many modal FRP
languages also include a variant of the \citet{nakano2000} guarded
fixed point operator of type $(\Delay A \to A) \to A$. The type ensures
that recursive calls are only performed in future steps, thus ensuring 
termination of each step of computation, a property called \emph{productivity}.
Often these
languages also include a $\Box$ modality used to classify data that is
\emph{stable}, in the sense that it can be kept until the next time
step without causing space leaks. Other modal constructors, such as
$\Diamond$ (eventually) can be encoded, suggesting a Curry-Howard
correspondence between linear temporal logic~\cite{LTL} and modal
FRP~\cite{jeffrey2012,Jeltsch2012,cave14fair,bahr2021diamonds}.

However, for many applications, the notion of a global clock associated
with the $\Delay$ modal operator may not be natural and can also lead to 
inefficient implementations.
%
%
Consider, for example, a GUI which takes an input signal of user
keystrokes, as well as other signals that are updated more frequently,
like the mouse pointer coordinates. The global clock would have to tick at
least as fast as the updates to the fastest signal, and updates on the
keystroke signal will only happen on very few ticks on the global
clock. Perhaps the most natural way to model the keystroke signal is
therefore using a signal of type $\sym{Maybe(Char)}$. In the modal FRP
languages of \citet{krishnaswami13frp,bahr2019simply}, the processor
for this signal will have to wake up for each tick on the global
clock, check for input, and often also transport some local state to
the next time step by calling itself recursively. Perhaps more
problematic, however, is that an important abstraction barrier is
broken when a processor for an input signal is given access to the global clock.
Instead, we would like to write the GUI as a collection of processors
for asynchronous input signals that are only activated upon updates to
the signals on which they depend.

\subsection{Async RaTT}
\label{sec:async-ratt}
This paper presents Async RaTT, a modal FRP language in the RaTT
family \cite{bahr2019simply,rattusJFP,bahr2021diamonds}, designed for
processing asynchronous input. A reactive program in Async RaTT reads
signals from a set of \emph{input channels} and in response sends
signals to a set of \emph{output channels}. In a GUI application,
typical input channels would include the mouse position and keystroke
events, while output channels could for example include the content of
a text field or the colour of a text field.

\begin{figure}[t]
  \begin{subfigure}{0.45\textwidth}
    \begin{center}
  \begin{tikzpicture}
    \node (k1) at (-2,0.8) {$\kappa_1$};
    \node (k2) at (0,0.8) {$\kappa_2$};
    \node (k3) at (2,0.8) {$\kappa_3$};
    \begin{scope}[every node/.style={draw,minimum size=4mm }]
    \node (n1) at (-2,0) {};
    \node (n2) at (0,0) {};
    \node (n3) at (2,0) {};
    \node (m1) at (-1,-1) {};
    \node (m2) at (1,-1) {};
    \end{scope}
    \node[align=center] (o1) at (-1,-2) {$o_1$\\\small clock $\set{\kappa_1}$};
    \node[align=center] (o2) at (1,-2) {$o_2$\\\small clock $\set{\kappa_2,\kappa_3}$};

    \draw[-latex]%
    (k1) edge (n1)%
    (k2) edge (n2)%
    (k3) edge (n3)%
    (n1) edge (m1)%
    (n2) edge (m2)%
    (n3) edge (m2)%
    (m1) edge (o1)%
    (m2) edge (o2)%
    ;%
  \end{tikzpicture}
    \end{center}
    \caption{Dataflow graph with its computed clocks.}
  \label{fig:dataflowA}  
\end{subfigure}
  \begin{subfigure}{0.45\textwidth}
    \begin{center}
  \begin{tikzpicture}
    \node (k1) at (-2,0.8) {$\kappa_1$};
    \node (k2) at (0,0.8) {$\kappa_2$};
    \node (k3) at (2,0.8) {$\kappa_3$};
    \begin{scope}[every node/.style={draw,minimum size=4mm }]
    \node (n1) at (-2,0) {};
    \node (n2) at (0,0) {};
    \node (n3) at (2,0) {};
    \node (m1) at (-1,-1) {};
    \node (m2) at (1,-1) {};
    \end{scope}
    \node[align=center] (o1) at (-1,-2) {$o_1$\\\small clock $\set{\kappa_1,\kappa_2}$};
    \node[align=center] (o2) at (1,-2) {$o_2$\\\small clock $\set{\kappa_2,\kappa_3}$};

    \draw[-latex]%
    (k1) edge (n1)%
    (k2) edge (n2)%
    (k3) edge (n3)%
    (n1) edge (m1)%
    (n2) edge (m1)%
    (n2) edge (m2)%
    (n3) edge (m2)%
    (m1) edge (o1)%
    (m2) edge (o2)%
    ;%
  \end{tikzpicture}
    \end{center}
    \caption{Clocks are updated as dependencies change.}
  \label{fig:dataflowB}  
  \end{subfigure}
  \caption{Dynamically changing dataflow graph of an Async RaTT
    program with input channels $\kappa_1,\kappa_2,\kappa_3$ and
    output channels $o_1,o_2$.}
  \label{fig:dataflow}
\end{figure}

For each output channel $o$, the reactive program keeps track of the
set $\theta$ of input channels on which $o$ depends (cf.\
\autoref{fig:dataflowA}). We refer to such a set $\theta$ of input
channels as a \emph{clock}.  When the signal on an input channel
$\kappa$ is updated, only those output channels whose clock $\theta$
contains $\kappa$ will be updated. For example, the keystroke input
channel might be in the clock for the text field content but not the
text field colour. Since the program can dynamically change its
internal dataflow graph, the clock associated with an output channel
may change during execution (cf.\ \autoref{fig:dataflowB}) and so is
not known at compile time. For example, the text field might fall out
of focus and thus not react to keystrokes any longer. We refer to the
arrival of new data on an input channel in the clock $\theta$ as a
\emph{tick} on clock $\theta$.

Async RaTT has a modal operator $\Box$ used to classify stable data,
as well as two new modalities: $\DelayE$ for asynchronous delays and
$\DelayA$ for a delay on the global clock.  A value of type
$\DelayE A$ is a pair consisting of a clock $\theta$ and a computation
that can be executed to return data of type $A$ on the first tick on
$\theta$. The type $\DelayE A$ can therefore be thought of as an
existential type. The clocks of output channels, as illustrated in
\autoref{fig:dataflow}, are stored in the first component of this
existential type.  Our notion of signal is encoded in types as a
recursive type $\Sig\,A \iso A \times \DelayE \Sig\,A$. That means,
the clock associated with the tail of a signal may change from one
step to the next, allowing for dynamic updates of clocks associated with
output channels as in \autoref{fig:dataflow}.
%


Unlike the synchronous $\Delay$, the asynchronous $\DelayE$ does
not have an applicative action of type $\DelayE(A \to B) \to \DelayE A \to \DelayE B$
because the delayed function and the delayed input may not 
arrive at the same time, and to avoid space leaks, Async RaTT
does not allow the first input to be stored until the second input arrives. 
Instead, Async RaTT synchronises delayed data using an operator
\begin{equation*}
  \synch : \DelayE A_1 \to \DelayE A_2 \to 
   \DelayE((A_1\times\DelayE A_2) + (\DelayE A_1 \times A_2) + (A_1 \times A_2))
\end{equation*}
Given two delayed computations associated with clocks $\theta_1$ and
$\theta_2$, respectively, $\synch$ returns the delayed computation
associated with the union clock $\theta_1 \sqcup \theta_2$. This
delayed computation waits for an input on any input channel
$\kappa \in \theta_1 \sqcup \theta_2$, and then evaluates the
computations that can be evaluated depending on whether
$\kappa \in \theta_1$, $\kappa \in \theta_2$, or both. For example, if
the input arrives on channel $\kappa \in \theta_1 \setminus\theta_2$,
only the first delayed computation is evaluated.  The $\synch$ operator 
can be used to implement operators like
\[
  \sym{switch} : \Sig\,A \to \DelayE(\Sig\, A) \to \Sig\,A
\]
which dynamically update the dataflow graph of a program. 

Note that $\synch$ can be read as a linear time axiom:
Given two clocks, either one ticks before the other, or they
tick simultaneously. Async RaTT programs are therefore
dependent on the run-time environment to schedule the order
in which inputs are processed. What we mean by asynchronicity 
is that output channels are updated asynchronously. This is
reflected in the type system by $\DelayE$ not being an applicative
functor as explained above. 

The modal type $\DelayA A$ classifies computations that
can be run at any time in the future, but not now.
It is used in the
guarded fixed point operator, which in Async RaTT has type
\[
 \Box(\DelayA A \to A) \to A
\]
The input to the fixed point operator must be a stable function
(as classified by $\Box$), because it will be used in unfoldings
at any time in the future. 
The use of $\DelayA$ restricts fixed points to only unfold in 
the future, ensuring termination of each step of computation.

\subsection{Operational Semantics and Results}

We present an operational semantics mapping each complete Async RaTT
program to a machine that transforms a sequence of inputs received on
its input channels to a sequence of outputs on its output
channels. The transformation is done in steps, processing one input at
a time, producing new outputs on the affected output channels.

The operational semantics consists of two parts. The first is the evaluation 
semantics describing the evaluation of a term in each step of the evaluation. This 
takes a term and a store and returns a value and an updated store in the context
of current values on input signals. The store contains delayed computations, and the
evaluation semantics may run previously stored delayed computations as well 
as store new ones to be evaluated at a later step. 
The reactive semantics on the other hand, describes the machine
which, at each step, locates the output signals to be updated and executes the 
corresponding delayed computations to produce output. 

The transformation of input to
output described by the operational semantics is causal by construction. We show that it is also
deterministic and productive (in the sense that each step terminates
and never gets stuck). We also show that the execution of an Async
RaTT program is free of (implicit) space leaks. This is achieved following a
technique originally due to \citet{krishnaswami13frp}: At the end of
each step of execution, the machine deletes all delayed computations
that in principle could have been run in the current step --
regardless of whether they actually were run. All inputs are also
deleted, either at the end of the step or when the next input from the
same signal arrives, depending on the kind of the specific input
signal. Our results show that this aggressive garbage collection
strategy is safe. Of course, the programmer can still write programs
that accumulate space, but such leaks will be explicit in the source
program, not implicitly introduced by the implementation of the
language. (See \citet{krishnaswami13frp} for a further discussion of
implicit vs explicit space leaks.)

Finally, we show that an upper bound on the dynamic clocks associated
with an output signal can be computed statically. More precisely,
given an Async RaTT program consisting of a number of output signals
in a given context $\Delta$ of input channels, if one of the output
signals can be typed in a smaller context $\Delta'\subseteq\Delta$,
then that signal will never need to update on input arriving on
channels in $\Delta\setminus\Delta'$. Note that this result holds
despite the existence of operators like $\sym{switch}$, which
dynamically change the dataflow graph of a program.


\subsection{Overview}

The paper is organised as follows: Async RaTT is presented along with
its typing rules in \autoref{sec:asynch:RaTT}, and
\autoref{sec:examples} illustrates the expressivity of Async RaTT by
developing a small library of signal combinators, along with examples
that use the library for GUI programming and computing integrals and
derivatives of signals.  The operational semantics is defined in
\autoref{sec:oper-semant}, which also illustrates it with an example,
and presents the main results. Section~\ref{sec:metatheory} sketches
the proofs of the main results, and in particular defines the Kripke
logical relation used for the proofs. Finally,
\autoref{sec:related-work} and \autoref{sec:concl-future-work} discuss
related work, conclusions and future work. In addition,
Appendix~\ref{sec:proof-fund-prop} gives a detailed account of the
proof of the fundamental property of the Kripke logical relation.

\section{Async RaTT}
\label{sec:asynch:RaTT}

\begin{figure}[t]
  \[
    \arraycolsep=2pt
    \begin{array}{llcl}
   
      \text{Locations} &l& \in & \locs\\
      \text{Input Channels} &\kappa& \in & \chans\\
      \text{Clock Expr.} &\theta& ::=\; & \cl{v}\mid \theta
                                                \sqcup \theta'\\
      \text{Types} &A,B &::=\; & \alpha \mid \Unit \mid \Nat \mid  A \times B
                                 \mid A + B \mid A \to B \mid
                                 \DelayE A\mid \DelayA A \mid \Fix\;\alpha.A
      \mid \Box A\\
      \text{Stable Types} &S, S' &::=\; & \Unit \mid \Nat \mid  S \times S'
                                 \mid S + S' \mid \DelayA A
			      \mid \Box A\\
      \text{Value Types} &T, T' &::=\; & \Unit \mid \Nat \mid  T \times T'
                                 \mid T + T' \\
      \text{Values} &v,w &::=\; &x \mid \unit \mid \zero \mid \suc \, v \mid \lambda x.t \mid
                                  \pair{v}{w} \mid \interm_i\, v \mid
                                  l\mid \inp \mid \rbox\,t\mid \dfix\,x .t\mid \into\,v\\
      \text{Terms} &s,t &::=\; & v \mid \suc\, t \mid \recN{s}{x}{y}{t}{u} \mid 
                                 \pair{s}{t} \mid \interm_i\, t
                                 \mid \pi_i\,t \mid t_1t_2  \mid
                                 \letterm x s t\\ &&\mid&                               
   \caseterm{t}{x}{t_1}{x}{t_2}\mid \delay_\theta\, t \mid \adv\,t \mid \select\, v_1\,v_2 \mid \unbox\,t \\ && \mid&
   \fix\,x.t \mid \never \mid \into\, t \mid \out\, t \mid \buf
    \end{array}
  \]
\caption{Syntax.}
\label{fig:syntax}
\end{figure}

\begin{figure}[th]
  \begin{mathpar}
    \inferrule*%
  {~}%
  {\wfcxt{\cdot}}%
  \and
  \inferrule*%
  {\wfcxt{\Gamma} \\ x \nin \dom\Gamma}%
  {\wfcxt{\Gamma,x : A}}%
  \and
  \inferrule*%
  {\wfcxt{\Gamma} \\ \isclock{\Gamma}{\theta}\\ \Gamma \text{ tick-free}}%
  {\wfcxt{\Gamma,\tick[\theta]}}%
  \and
  \inferrule*%
  {\isclock{\Gamma}{\theta} \\ \isclock{\Gamma}{\theta'}}%
  {\isclock{\Gamma}{\theta \sqcup \theta'}}%
  \and
  \inferrule*%
  {\hastype{\Gamma}{v}{\DelayE A}}%
  {\isclock{\Gamma}{\cl{v}}}%
    \and
  \inferrule*%
  {\Gamma' \text{ tick-free or } A \; \stable \\ \wfcxt{\Gamma,x:A,\Gamma'}}%
  {\hastype{\Gamma,x:A,\Gamma'}{x}{A}}%
  \and%
  \inferrule* {~}%
  {\hastype{\Gamma}{\unit}{\Unit}}%
  \and%
  \inferrule*%
  {\hastype{\Gamma}{s}{A} \\ \hastype{\Gamma,x:A}{t}{B}}%
  {\hastype{\Gamma}{\letterm x s t}{B}}%
  \and%
  \inferrule*%
  {\hastype{\Gamma,x:A}{t}{B} \\ \Gamma\text{ tick-free}}%
  {\hastype{\Gamma}{\lambda x.t}{A \to B}}%
  \and%
  \inferrule*%
  {\hastype{\Gamma}{t}{A \to B} \\
    \hastype{\Gamma}{t'}{A}}%
  {\hastype{\Gamma}{t\,t'}{B}}%
  \and%
  \inferrule*%
  {\hastype{\Gamma}{t}{A_i} \\ i \in \{1, 2\}}%
  {\hastype{\Gamma}{\interm_i\, t}{A_1 + A_2}}%
  \and%
  \inferrule*%
  {\hastype{\Gamma,x: A_i}{t_i}{B} \\
    \hastype{\Gamma}{t}{A_1 + A_2} \\ i \in \{1,2\}}%
  {\hastype{\Gamma}{\caseterm {t}{x}{t_1}{x}{t_2}}{B}}%
  \and
  \inferrule*%
  {\hastype{\Gamma}{t}{A} \\
    \hastype{\Gamma}{t'}{B}}%
  {\hastype{\Gamma}{\pair{t}{t'}}{A \times B}}%
  \and%
  \inferrule*%
  {\hastype{\Gamma}{t}{A_1 \times A_2} \\ i \in \{1, 2\}}%
  {\hastype{\Gamma}{\pi_i\,t}{A_i}}%
  \and%
  \inferrule*%
  {\hastype{\Gamma,\tick[\theta]}{t}{A} \\ \isclock{\Gamma}{\theta}}%
  {\hastype{\Gamma}{\delay_\theta\,t}{\DelayE A}}%
    \and
  \inferrule*%
  {~}%
  {\hastype{\Gamma}{\never}{\DelayE A}}%
  \and%
  \inferrule*%
  {\kappa :_\capa A \in \Delta \\ \capa \in \{\cpush, \cbp\}}%
  {\hastype{\Gamma}{\inp}{\DelayE A}}%
  \and%
  \inferrule*%
  {\kappa :_\capa A \in \Delta \\ \capa \in \{\cbuf, \cbp\}}%
  {\hastype{\Gamma}{\buf}{A}}%
  \and%
  \inferrule*%
  {\hastype{\Gamma}{v}{\DelayE A}\\ \wfcxt{\Gamma,\tick[\cl{v}],\Gamma'}}%
  {\hastype{\Gamma,\tick[\cl{v}],\Gamma'}{\adv\,v}{A}}%
  \and%
  \inferrule*%
  {\hastype{\Gamma}{v_1}{\DelayE A_1}
    \\\hastype{\Gamma}{v_2}{\DelayE A_2}\\
    \vdash \theta_1\sqcup\theta_2 = \cl{v_1} \sqcup \cl{v_2}\\\wfcxt{\Gamma,\tick[\theta_1\sqcup\theta_2],\Gamma'}}%
  {\hastype{\Gamma,\tick[\theta_1\sqcup\theta_2],\Gamma'}{\select\,v_1\,v_2}{((A_1
      \times \DelayE A_2) + (\DelayE A_1
      \times A_2)) + (A_1 \times A_2)}}%
  \and%
  \inferrule* {~}%
  {\hastype{\Gamma}{\zero}{\Nat}}%
  \and%
  \inferrule*%
  {\hastype{\Gamma}{t}{\Nat}}%
  {\hastype{\Gamma}{\suc \, t}{\Nat}}%
  \and%
  \inferrule*%
  {\hastype{\Gamma}{s}{A} \\
    \hastype{\Gamma,x:\Nat,y:A}{t}{A} \\ \hastype{\Gamma}{n}{\Nat}}
  {\hastype{\Gamma}{\recN{s}{x}{y}{t}{n}}{A}}%
  \and
    \inferrule*
  {\hastype{\stabilize\Gamma,x : \DelayA A}{t}{A}}
  {\hastype{\Gamma}{\fix\,x. t}{A}}
  \and
  \inferrule*%
  {\hastype{\Gamma}{x}{\DelayA A}\\ \wfcxt{\Gamma,\tick[\theta],\Gamma'}}%
  {\hastype{\Gamma,\tick[\theta],\Gamma'}{\adv\,x}{A}}%
  \and
  \inferrule*
  {\hastype{\stabilize\Gamma}{t}{ A}}
  {\hastype{\Gamma}{\rbox\, t}{\Box A}}
  \and
  \inferrule*
  {\hastype{\Gamma}{t}{\Box A}}
  {\hastype{\Gamma}{\unbox\, t}{A}}
    \and
  \inferrule*
  {\hastype{\Gamma}{t}{\Fix \;\alpha.A}}
  {\hastype{\Gamma}{\out\,t}{A[\DelayE(\Fix \; \alpha.A)/\alpha]}}
  \and
    \inferrule*
  {\hastype{\Gamma}{t}{A[\DelayE(\Fix\,\alpha.A)/\alpha]}}
  {\hastype{\Gamma}{\into\,t}{\Fix\,\alpha.A}}
\end{mathpar}
\begin{gather*}
  \stabilize\cdot = \cdot \qquad \stabilize{(\Gamma,\tick[\theta])} = \stabilize\Gamma
  \qquad \stabilize{(\Gamma,x:A)} =
  \begin{cases}
    \stabilize\Gamma, x : A &\text{if $A$ stable}\\
    \stabilize\Gamma &\text{otherwise}
  \end{cases}
\end{gather*}
\caption{Typing rules.}
\label{fig:typing}
\end{figure}

This section gives an overview of Async RaTT, referring to
Figures~\ref{fig:syntax} and~\ref{fig:typing} for the full
specification of its syntax and typing rules.

An Async RaTT program has access to a set of input channels, each of
which receive updates asynchronously from each other. To account for
this, typing judgements are relative to an input channel context
$\Delta$ or \emph{input context} for short. An example of such a
context is
\[
  \sym{keyPressed} :_{\cpush} \Nat ,\sym{mouseCoord} :_{\cbp} \Nat
  \times \Nat, \sym{time} :_{\cbuf} \Float
\]
There are three classes of input channels, each corresponding to one
of the subscripts $\cpush, \cbuf$, and $\cbp$ as in the example
above. \emph{Push-only} input channels, indicated by $\cpush$, are
input channels whose updates are pushed through the program, possibly
causing output channels to be updated.  In the example context above,
the programmer will want to react to user keypresses immediately, and
so updates to this should be pushed. On the other hand, we may wish to
have access to a time input channel, which we can read from at any
time, but we may not want the program to wake up whenever the time
changes.  Time is therefore treated as a \emph{buffered-only} input
channel, indicated by $\cbuf$, whose most recent value is buffered,
but whose changes will not trigger the program to update any output
channel. Finally, input channels may be both buffered and pushed,
indicated by $\cbp$, which means that updates are pushed, but we also
keep the value around in a buffer, so that the latest value can always
be read by the program. This is unlike the push-only input channels
whose values are deleted for space efficiency reasons, once an update
push has been treated. For example, we might want to be informed when
the mouse coordinates are updated, but also keep these around so that
we can read the mouse coordinates when a key is pressed, even if the
mouse has not moved. We refer to input channels that are either
push-only or buffered-push ($\cpush$ or $\cbp$) as \emph{push channels} and
similarly to input channels that are either buffered-only or
buffered-push as \emph{buffered channels}.

All signals are assumed to have value types, i.e., any declaration
$\kappa :_\capa A$ in $\Delta$ must have a value type $A$. The grammar
for value types is given in Figure~\ref{fig:syntax}.

\subsection{Clocks and $\DelayE$}

A clock is intuitively a set of push channels ($\cpush$ or $\cbp$),
that the program may have to react to. For instance, $\emptyset$,
$\{\sym{keyPressed}\}$ and $\{\sym{keyPressed} ,\sym{mouseCoord}\}$
are all examples of clocks for the example input context mentioned
earlier. The type $\DelayE A$ is a type of delayed computation on an
existentially quantified clock. In other words, a value of type
$\DelayE A$ is a pair of a clock $\theta$ and a computation that will produce a
value of type $A$ once an update on one of the input channels
in $\theta$ is received. We refer to such an update as a \emph{tick}
on the clock $\theta$. For example, if the associated
clock is $\{\sym{keyPressed} ,\sym{mouseCoord}\}$, then the data of
type $A$ can be computed once $\sym{keyPressed}$ or $\sym{mouseCoord}$
receive new input.

Since $\DelayE A$ are existential types, one can obtain the clock
$\cl v$ for any \emph{value} of these types. The values of type
$\DelayE A$ are variables and $\inp$ where $\kappa$ is one of the
push channels. The latter acts as a reference to the next value pushed on
$\kappa$, and so intuitively $\cl{\inp} = \{\kappa\}$. Clocks can also
be combined using a union operator $\sqcup$. We also include an
element $\never$ which is associated with the empty clock.

We use Fitch-style~\cite{clouston2018fitch}, 
rather than the more traditional dual context
style~\cite{davies2001modal} for programming with the modal type constructors of Async
RaTT. 
In the case
of $\DelayE$, this means that introduction and elimination rules use
a special symbol $\tick[\theta]$,  referred to as a \emph{tick}, in the context. 
One can think of a tick $\tick[\theta]$ as representing ticks of the clock $\theta$,
and it divides the judgement 
into variables (to the left of $\tick[\theta]$) received before the tick, and everything else, 
which happens after the tick. 
For example, the elimination rule should be read as: If $v$ has type
$\DelayE A$ now, then after a tick on the clock $\cl{v}$,
$\adv(v)$ has type $A$. 
Similarly, the introduction rule for $\DelayE$ should be read as: If 
$t$ has type $A$ after a tick on clock $\theta$ then $\delay_\theta\,t$
has type $\DelayE A$ now. 
Note that there can be at most one tick in a context. This is a restriction that 
is required for the proof of the productivity theorem (\autoref{thr:productivity}), 
and also appears in other languages in the RaTT 
family~\cite{bahr2019simply,bahr2021diamonds}. However,
\citet{rattusJFP} shows that this restriction can be lifted by a
program transformation that transforms a program typable with multiple
ticks into one with only one tick and where $\adv$ is only applied to
variables.

Operationally, the term $\delay_\theta\,t$
creates a delayed computation which is stored in a heap until the 
input data necessary for evaluating it is available. It is therefore not considered 
a value, rather 
$\delay_\theta\,t$ evaluates to a heap reference $l$ to the delayed computation.
Although heap references are part of Async RaTT, and even considered values 
(Figure~\ref{fig:syntax}),
programmers are not allowed to use these directly, and there are therefore no typing
rules for them. 
%

Two delayed values $v_1 : \DelayE A_1$ and $v_2 : \DelayE A_2$ can be
synchronised using $\select$ once a tick on the union       
clock $\cl{v_1} \sqcup \cl{v_2}$ has been received. 
The type of  $\select\,v_1\,v_2$ reflects the three possible cases
for such a tick: 
It could be in one $\cl{v_i}$, but not the other, or it could be in both. 
For example, if the input
is in $\cl{v_1}$, but not $\cl{v_2}$, then data of type
$A_1 \times \DelayE A_2$ can be computed. 
The $\synch$ operator shown in section~\ref{sec:async-ratt} can be defined using $\select$.
The idea of using a term
like $\select$ to distinguish between these cases is due to
\citet{graulund2021adjoint}, who only require two cases to be defined, 
resorting to non-deterministic choice in the case where the tick is in 
the intersection of the clocks. In Async RaTT, providing all three cases
is crucial for the operational results of \autoref{sec:oper-semant}.

%
Note that the rules for $\select$ and $\adv$ restrict the application of
these constructions to values. One reason for this is that it 
simplifies the metatheory by preventing
arbitrary terms occurring in contexts through clocks. It also means that
clock expressions always are values that do not need to be 
evaluated. For example, evaluating $\delay_{\cl t}(\adv(t))$
requires evaluating $t$ twice: 
first for evaluating the clock, and then to evaluate the term itself. 
Elimination of $\DelayE$ can be done for more general terms $t$ 
using a combination of let-binding and $\adv$.

\subsection{Stable Types and Fixed Points}

General values in Async RaTT can contain references to time-dependent
data, such as delayed computations stored in the heap. One of the main
purposes of the type system is to prevent such references to be
dereferenced at times in the future when a delayed computation has
been deleted from the heap. For this reason, arbitrary data should not
be kept across time steps, and this is reflected in the type system in
the variable introduction rule which prevents general variables to be
introduced across ticks.

For some types, however, values can not contain such references. We
refer to these as \emph{stable types} and the grammar for these is
given in Figure~\ref{fig:syntax}. Stable types include all those of
the form $\Box A$, which classify computations that produce values of
type $A$ without any access to delayed computations.  The introduction
rule for $\Box$ constructs a delayed computation $\rbox (t)$ that can
be evaluated at any time in the future. This requires $t$ to be typed
in a stable context, and so the hypothesis of the typing rule removes
all ticks and all variables not of stable type from the context. 
The $\Box$ modality has a counit and a
comultiplication $\Box \to \Box\Box$.
%
%
Note that $\inp$ and $\buf$ are stable in the sense that 
$\hastype{}{\rbox (\inp)}{\Box(\DelayE A)}{}$ for any
$\kappa :_\capa A \in \Delta$ where $\capa\in \{\cpush, \cbp\}$ and
$\hastype{}{\rbox(\buf)}{\Box A}$ for any $\kappa :_\capa A \in \Delta$ where
$\capa\in \{\cbuf, \cbp\}$.  


Async RaTT is a terminating calculus in the sense that each step of 
computation terminates. It does, however, still allow recursive definitions 
through a fixed point operator, whose type ensures that recursive calls are
only done in later time steps. More precisely, the recursion variable 
$x$ in $\fix \,x.t$ has type $\DelayA A$, which means that the recursive 
definition can be unfolded to produce a term of type $A$ any time in the 
future, but not now. This is ensured through the elimination rule for $\DelayA$ 
which allows it to be advanced using a tick on any clock typable in the current
context. Since fixed points can be called recursively at any time in the future, 
these must be stable, and so $t$ is required to be typable in a stable context.

The types $\Fix\,\alpha. A$ are guarded recursive types that unfold to
$A[\DelayE(\Fix\,\alpha.A)/\alpha]$ via the terms $\into$ and
$\out$. The most important of these types is $\Sig\, A$ defined as
$\Fix\,\alpha . (A \times \alpha)$, which unfolds to
$A \times \DelayE(\Sig\,A)$. A signal consists of a current value and
a delayed tail, which at some time in the future may return a new
signal. 
Any push channel $\kappa :_\capa A \in \Delta$, where $\capa\in \{\cpush, \cbp\}$, 
induces a stable signal:
\[
 \rbox\left(\fix \,x . \delay_{\cl{\inp}}\left(\into\,\pair{\adv(\inp)}{\adv (x)}\right)\right) : \Box(\DelayE(\Sig\, A))
\]
where the recursion variable $x$ has type $\DelayA(\DelayE A)$. 
These signals, of course, operate on a fixed clock $\{\kappa\}$, 
but in general, the clock associated with the tail of a signal may change from one step 
to the next, which we shall see examples of in \autoref{sec:examples}.

Besides all these constructions, Async RaTT also has a number of standard constructions 
from functional programming: sum types, product types, natural numbers and function
types. The typing rules for these are completely standard, with the exception
that function types can only be constructed in contexts with no ticks. Similar 
restrictions are known from other calculi in the RaTT 
family~\cite{bahr2019simply,bahr2021diamonds}, and are 
necessary for the results of \autoref{sec:oper-semant}. The aforementioned program
transformation by \citet{rattusJFP} also removes this restriction.
Note that function types are not stable, since time-dependent 
references can be stored in closures. 

%


\section{Programming in Async RaTT}
\label{sec:examples}

In this section, we demonstrate the expressiveness of Async RaTT with
a number of examples. To this end, we assume a surface language that
extends Async RaTT with syntactic sugar for pattern matching,
recursion, and top-level definitions. These can be easily elaborated
into the Async RaTT calculus as described in
\autoref{sec:elaboration}. 

\subsection{Simple Signal Combinators}
\label{sec:signal-combinators}

We start by implementing a small set of simple combinators to
manipulate signals, i.e., elements of the guarded recursive type
$\Sig\, A$ defined as $\Fix\,\alpha . (A \times \alpha)$. For
readability we use the shorthand $s \cons t$ for $\into\pair s t$,
such that, given $s : A$ and $t : \DelayE(\Sig\,A)$, we have that
$s \cons t : \Sig\,A$.

We start with perhaps the simplest signal combinator:
\mapCode
The \varid{map} combinator takes a stable function $f$ and applies it
pointwise to a given signal. The fact that $f$ is of type
$\Box (A \to B)$ rather than just $A \to B$ is crucial: Since
$A \to B$ is not a stable type, $f$ would otherwise not be in scope
under the $\delay$, where we need $f$ for the recursive call. It also
has an intuitive justification: The function will be applied to values
of the input signal arbitrarily far into the future, but a closure of
type $A \to B$ may contain references to delayed computations that may
have been garbage collected in the future.

The \varid{map} combinator is stateless in the sense that the current
value of the output signal only depends on the current value of the
input signal. We can generalise this combinator to \varid{scan}, which
produces an output signal that in addition may depend on the previous
value of the output signal:
\scanCode
Every time the input signal updates, the output signal produces a new
value based on the current value of the input signal and the previous
value of the output signal. Since the previous value of the output
signal is accessed, $B$ must be a stable type. We use the
$\Rightarrow$ notation to delineate such constraints from the type signature.

For example, we can use
\varid{scan} to produce the sum of an input signal of numbers:
\sumCode

Often we only have access to a \emph{delayed} signal. For instance,
for each push channel $\kappa :_\capa A \in \Delta$,
$\capa \in \set{\cpush,\cbp}$ we have the signal
\sigAwaitCode
For example, we might have the push-only channels
$\mathsf{mouseClick} :_\cpush 1$ or
$\mathsf{keyPress} :_\cpush \mathsf{KeyCode}$ available. We can derive
a version of \varid{scan} for such signals:
\scanAwaitCode
A simple use case of \varid{scanAwait} is a combinator that counts the
updates of a given delayed signal, e.g., the number of key presses:
\counterCode

Finally, we have the most simple combinator that simply produces a
constant signal:
\constCode
In isolation this combinator may appear to be of little use. Its
utility becomes apparent once we also have the switching
combinators introduced in the next section.

\subsection{Concurrent Signal Combinators}
\label{sec:synchr-comb}

The combinators we looked at so far only consumed a single signal, and
thus had no need to account for the concurrent behaviour of two or
more clocks. For example, we may have two input
signals produced by two redundant sensors that independently provide
a reading we are interested in. To combine these two signals, we can
interleave them using the following combinator:
\interleaveCode
In this and subsequent definitions, we use the shorthands
\varid{Left}, \varid{Right}, and \varid{Both}, in the expected
way. For example, $\varid{Left}\,s\,t$ is short for
$\interm_1{(\interm_1{\pair{s}{t}})}$, i.e., the case that the left
clock ticked first.  The \varid{interleave} combinator uses $\select$
in order to wait until at least one of the input signals ticks, and
then updates the output signal accordingly. In case that both signals
tick simultaneously, the provided merging function $f$ is applied. For
example, $f$ could just always use the value of the first signal or
take the average. Note that the produced signal combines the clocks of
the input signals, i.e., it ticks whenever either of the input signals
ticks.

We might also be interested in the values of both input signals
simultaneously, in which case we would use \varid{zip}:
\zipCode
Similarly to \varid{interleave}, the output signal produced by
\varid{zip} ticks whenever either of the input signals does. However,
note that in the \varid{Left} and \varid{Right} cases, we copy the
previously observed value from the signal that did not tick into the
future. Hence, we need both types, $A$ and $B$, to be stable.

Finally, we consider the switching of signals. We wish to produce a
signal that behaves initially like a given input signal, but switches to
a different signal as soon as some event happens. This idea is
implemented in the \varid{switch} function:
\switchCode
The event that represents the future change of the signal is
represented as a delayed signal, and as soon as this delayed signal
ticks, as in the \varid{Right} and \varid{Both} cases, it takes
over. With the help of \varid{switch} we can construct dynamic
dataflow graphs since we replace a given signal with an entirely new
signal, which may depend on different input channels and intermediate
signals compared to the original signal. 

We will demonstrate an example of this dynamic behaviour in the next
section. In preparation for that we devise a variant of
$\varid{switch}$, where the new signal depends on the value of the
previous signal:
\switchfCode
Instead of a new signal, this combinator waits for a \emph{function}
that produces the new signal, and we feed this function the last value
of the first signal.

\subsection{A Simple GUI Example}
\label{sec:simple-gui-example}

To demonstrate how to use our signal combinators, we consider a very
simple example of a GUI application: Our goal is to write a reactive
program with two output channels that describe the contents of two
text fields. To this end, the two output channels are given the type
$\Sig\,\Nat$. The number displayed in text fields should be
incremented each time the user clicks a button, which is available as
an input channel $\mathsf{up} :_\cpush 1\in \Delta$. However, there is only
one `up' button and the user can change which text field should be
changed by the `up' button using a `toggle' button, which is available
as an input channel $\mathsf{toggle} :_\cpush 1\in \Delta$.

That means, the contents of the first text field can be described by
the \varid{count} combinator, but then switches to the signal
described by the \varid{const} combinator when `toggle' is
pressed. The behaviour of the other text field is reversed: first
\varid{const}, then \varid{count}. This continuous toggling between
behaviours can be concisely described by the following combinator:
\toggleSigCode
The first argument provides the events that determine when to toggle
between the two behaviours, which in turn are given as the next two
arguments. In the implementation we use the notation $s;t$ as a
shorthand for $\letterm{\unit}{s}{t}$. The \varid{toggleSig}
combinator uses \varid{switchf} to start with the first signal
provided by $f$, but then switches to $g$ as soon as the toggle
\varid{tog} ticks by using a recursive call that swaps the order of the
two arguments $f$ and $g$.

The output channels that describe the two text fields can now be
implemented by providing the appropriate input signals to \varid{toggleSig}:
\guiExampleCode
Note that the dataflow graph changes during the execution of the
program and how that change is reflected in the clock associated with
the output channels: the output channel for the first text field first
has the clock $\set{\sym{up},\sym{toggle}}$ as it must both count the
number of times the `up' button is clicked and change its behaviour in
reaction to the `toggle' button being clicked. Once the `toggle'
button has been clicked, the clock for output channel for the text
field changes to $\set{\sym{toggle}}$ as it now ignores the `up'
button. We will examine the run-time behaviour of this example in more
detail in \autoref{sec:example}.

\subsection{Integral and Derivative}

\begin{figure}
  \integral
  \derivative
  \caption{Integral and derivative signal combinators.}
  \label{fig:integral-derivative}
\end{figure}

Buffered input channels can be used to represent input signals that
change at discrete points in time, but whose current value can be
accessed at any time. For example, given a buffered push channel
$\kappa:_\cbp A \in \Delta$, we can construct the following signal
(using $\varid{sigAwait}_{\kappa}$ from \autoref{sec:signal-combinators}):
\sigCode
To illustrate what we can do with such input signals, we assume that
Async RaTT has a stable type $\Float$ together with typical operations
on floating-point numbers. Figure~\ref{fig:integral-derivative} gives
the definition of two signal combinators that each take a
floating-point-valued signal and produce the integral and the
derivative of that signal. To this end, we assume a buffered push
channel $\mathsf{sample} :_{\cbp} \Float \in \Delta$ that produces
a new floating-point number $s$ at some fixed interval (e.g., 10 times
per second). This number $s$ is the number of seconds since the last
update on the channel, e.g., $s = 0.1$ if $\mathsf{sample}$ ticks $10$
times per second.

The \varid{integral} combinator produces the integral of a given
signal starting from a given constant that is provided as the first
argument. Its implementation uses a simple approximation that samples
the value of the underlying signal each time the $\mathsf{sample}$
channel produces a value and adds the area of the rectangle formed by
the value of the signal and the time that has passed since the last
sampling.

The first equation of the definition is an optimisation and could be
omitted. It says that if the current value of the underlying signal is
$0$, we simply wait until the underlying signal is updated, since the
value of the integral won't change until the underlying signal has a
non-zero value. Hence, we don't have to sample every time the
$\mathsf{sample}$ channel ticks.

Similarly to $\varid{integral}$, we can implement a function
$\varid{derivative}$ that, given a floating-point-valued
signal, produces its derivative. Like the $\varid{integral}$ function,
also $\varid{derivative}$ samples the underlying signal every time
$\mathsf{sample}$ ticks. To do so it uses the auxiliary function
$\varid{der}$, which takes two additional arguments: the current value
of the derivative and the value of the underlying signal at the time
of the most recent input from of the $\mathsf{sample}$
channel. Similarly to \varid{integral}, the first line of
$\varid{der}$ performs an optimisation: If the computed value of the
derivative is $0$, the sampling will pause until the underlying signal
is updated. As soon as it does, we pretend that $\mathsf{sample}$
ticked to provide a timely update of the derivative.

These two combinators can be easily generalised from floating-point
values to any vector space. This can then be used to describe complex
behaviours in reaction to multidimensional sensor data.

\subsection{Elaboration of Surface Syntax into Core Calculus}
\label{sec:elaboration}

To illustrate how the surface language elaborates into the Async RaTT
core calculus, reconsider the definition of \varid{map}
\mapCode
which elaborates to the following term in plain Async RaTT:
\[
  \begin{aligned}
    \emph{map} = \fix\,r.\lambda f. \lambda s .
    &\letterm{x}{\pi_1(\out\,s)}
    {\sym{let}\, \varid{xs} = \pi_2(\out\,s)}\\
    &\sym{in} \,\into{\pair{unbox\, f\,
            x}{\delay_{\cl{\varid{xs}}} (\adv\,r\, f\, (\adv\, \varid{xs}))}}
  \end{aligned}
\]
Recall that $s \cons t$ is a shorthand for $\into\pair s t$. Pattern
matching is translated into the corresponding elimination forms,
$\out$ for recursive types, $\pi_i$ for product types, and $\case$ for
sum types. The recursion syntax -- \varid{map} occurs in the body of
its definition -- is translated to a fixed point $\fix \, r . t$ so
that the recursive occurrence of \varid{map} is replaced by
$\adv\,r$. Hence, recursive calls must always occur in the scope of a
$\tick$, which is the case in the definition of \varid{map} as it
appears in the scope of a $\delay$. Moreover, we elide the subscript
$\cl{\varid{xs}}$ of $\delay$ since it can be uniquely inferred from
the fact that we have the term $\adv\,\varid{xs}$ in the scope of the
$\delay$.

In addition, we make use of top-level definitions like \varid{map} and
\varid{scan}, which may be used in any context later on. For example,
\varid{scan} is used in the definition of \varid{scanAwait} in the
scope of a $\tick$. We can think these top-level definitions to be
implicitly boxed when defined and unboxed when used later on. That is,
these definitions are translated as follows to the core calculus:
\begin{align*}
  &\letterm{\varid{scan}}{\rbox (\dots)}{}\\
  &\letterm{\varid{scanAwait}}{\rbox(\dots(\unbox\, scan)\dots))}{}\\
  &\dots
\end{align*}

\section{Operational Semantics and Operational Guarantees}
\label{sec:oper-semant}
We describe the operational semantics of Async RaTT in two stages: We
begin in section~\ref{sec:evaluation-semantics} with the
\emph{evaluation semantics} that describes how Async RaTT terms are
evaluated at a particular point in time. Among other things, the
evaluation semantics describes the computation that must happen to
make updates in reaction to the arrival of new input on a push
channel. We then describe in section~\ref{sec:step-semantics} the
\emph{reactive semantics} that captures the dynamic behaviour of Async
RaTT programs over time. The reactive semantics is a machine that
waits for new input to arrive, and then computes new values for output
channels that depend on the newly arrived input. For the latter, the
reactive semantics invokes the evaluation semantics to perform the
necessary updating computations.

Finally, after demonstrating the operational semantics on an example
in section~\ref{sec:example}, we conclude the discussion of the
operational semantics in section~\ref{sec:main-results} with a precise
account of our main technical results about the properties of the
operational semantics: productivity, causality, signal independence,
and the absence of implicit space leaks. To prove the latter, the
evaluation semantics uses a store in which both external inputs and
delayed computations are stored. Delayed computations are garbage
collected as soon as the data on which they depend has arrived. In
this fashion, Async RaTT avoids implicit space leaks by construction,
provided we can prove that the operational semantics never gets stuck.


\subsection{Evaluation Semantics}
\label{sec:evaluation-semantics}

\begin{figure}[t]
  \begin{mathpar}
    \small
  \inferrule*%
  {~}%
  {\heval{v}{\sigma}{v}{\sigma}}%
  \and%
  \inferrule*%
  {\heval {t} {\sigma} {v} {\sigma'}\\
    \heval {t'} {\sigma'} {v'} {\sigma''}}%
  {\heval {\pair{t}{t'}} {\sigma} {\pair{v}{v'}} {\sigma''}}%
  \and%
  \inferrule*%
  {\heval {t} {\sigma} {\pair{v_1}{v_2}} {\sigma'} \\ i \in \{1,2\}}%
  {\heval {\pi_i(t)} {\sigma} {v_i} {\sigma'}}%
  \and%
  \inferrule*%
  {\heval t {\sigma} v {\sigma'} \\ i \in \{1,2\}}%
  {\heval {\interm_i(t)} {\sigma} {\interm_i(v)} {\sigma'}}%
  \and%
  \inferrule*%
  {\heval {t} {\sigma} {\interm_i(v)} {\sigma'}\\
    \heval {t_i[v/x]} {\sigma'} {v_i} {\sigma''} \\ i \in \{1,2\}}%
  {\heval {\caseterm{t}{x}{t_1}{x}{t_2}} {\sigma} {v_i} {\sigma''}}%
  \and%
  \inferrule*%
  {\heval{t}{\sigma}{\lambda x.s}{\sigma'} \\
    \heval{t'}{\sigma'}{v}{\sigma''}\\
    \heval {s[v/x]}{\sigma''}{v'}{\sigma'''}}%
  {\heval{t\,t'}{\sigma}{v'}{\sigma'''}}%
  \and%
  \inferrule*%
  {\heval{s}{\sigma}{v}{\sigma'} \\
    \heval {t[v/x]}{\sigma'}{w}{\sigma''}}%
  {\heval{\letterm x s t}{\sigma}{w}{\sigma''}}%
    \and%
  \inferrule*%
  {\kappa \in \dom\iota}%
  {\heval {\buf[\kappa]}{\sigma}{\iota(\kappa)}{\sigma}}%
  \and%
  \inferrule*%
  {l = \allocate[\cleval{\theta}]{\sigma}}%
  {\heval{\delay_{\theta}\,t}{\sigma}{l}{(\sigma,l \mapsto t)}}%
  \and%
  \inferrule*%
  {l = \allocate[\emptyset]{\sigma}}%
  {\heval{\never}{\sigma}{l}{\sigma}}%
  \and%
  \inferrule*%
  {~}%
  {\heval
    {\adv\,\inp[\kappa]}{\eta_N\stick{\kappa}{v}\eta_L}{v}{\eta_N\stick{\kappa}{v}\eta_L}}%
  \and%
  \inferrule*%
  {\heval {\eta_N(l)}{\eta_N\stick{\kappa}{v}\eta_L} {w}{\sigma}}%
  {\heval {\adv\,l}{\eta_N\stick{\kappa}{v}\eta_L}{w}{\sigma}}%
  \and%
  \inferrule*%
  {\kappa\in \cleval{\cl{v_i}}\setminus\cleval{\cl{v_{3-i}}} \\ \heval
    {\adv\, v_i}{\eta_N\stick{\kappa}{w}\eta_L} {u_i}{\sigma} \\
  u_{3-i} = v_{3-i}}%
  {\heval
    {\select\,v_1\,v_2}{\eta_N\stick{\kappa}{w}\eta_L}{\interm_1(\interm_i\pair
      {u_1} {u_2})}{\sigma}}%
  \and%
  \inferrule*%
  {\kappa\in \cleval{\cl{v_1}}\cap\cleval{\cl{v_2}} \\ \heval
    {\adv\,v_1}{\eta_N\stick{\kappa}{w}\eta_L} {u_1}{\sigma}
  \\ \heval {\adv\,v_2}{\sigma} {u_2}{\sigma'}}%
  {\heval
    {\select\,v_1\,v_2}{\eta_N\stick{\kappa}{w}\eta_L}{\interm_2\pair{u_1}{u_2}}{\sigma'}}%
  \and%
  \inferrule*%
  {\heval{t}{\sigma}{v}{\sigma'}}%
  {\heval{\suc \, t}{\sigma}{\suc \, v}{\sigma'}}%
  \and%
  \inferrule*%
  {\heval{n}{\sigma}{\zero}{\sigma'}
    \\ \heval{s}{\sigma'}{v}{\sigma''}}%
  {\heval{\recN{s}{x}{y}{t}{n}}{\sigma}{v}{\sigma''}}%
  \and%
  \inferrule*%
  {\heval{n}{\sigma}{\suc \, v}{\sigma'}
    \\ \heval{\recN{s}{x}{y}{t}{v}}{\sigma'}{v'}{\sigma''}
    \\ \heval{t[v/x,v'/y]}{\sigma''}{w}{\sigma'''}}%
  {\heval{\recN{s}{x}{y}{t}{n}}{\sigma}{w}{\sigma'''}}%
    \and%
  \inferrule*%
  {\heval
    {t[\dfix\,x.t/x]}{\sigma} {v}{\sigma'}}%
  {\heval {\adv\,(\dfix\,x. t)}{\sigma}{v}{\sigma'}}%
  \and%
    \inferrule*%
  {\heval
    {t[\dfix\,x.t/x]}{\sigma} {v}{\sigma'}}%
  {\heval {\fix\,x. t}{\sigma}{v}{\sigma'}}%
    \and%
  \inferrule*%
  {\heval {t}{\sigma} {\rbox\, t'}{\sigma'} \\ \heval
    {t'}{\sigma'} {v}{\sigma''}}%
  {\heval {\unbox\,t}{\sigma}{v}{\sigma''}}%
  \and
    \inferrule*
  {\heval{t}{\sigma}{v}{\sigma'}}
  {\heval{\into\,t}{\sigma}{\into\,v}{\sigma'}}
  \and
  \inferrule*
  {\heval{t}{\sigma}{\into\,v}{\sigma'}}
  {\heval{\out\,t}{\sigma}{v}{\sigma'}}
\end{mathpar}
  
  \caption{Operational semantics.}
  \label{fig:machine}
\end{figure}

Figure~\ref{fig:machine} defines the evaluation semantics as a
deterministic big-step operational semantics. We write
$\heval t \sigma v \tau$ to denote that when given a term $t$, a
\emph{store} $\sigma$, and an \emph{input buffer} $\iota$, the machine
computes a value $v$ and a new store $\tau$. During the computation,
the machine may defer computations into the future by storing
unevaluated terms in the store $\sigma$ to be retrieved and evaluated
later. Conversely, the machine may also retrieve terms whose
evaluation have been deferred at an earlier time and evaluate them
now. In addition, the machine may read the new value of the most
recently updated push channel from the store $\sigma$ and read the
current value of any buffered channel from the input buffer $\iota$.

To facilitate the delay of computations, the syntax of the language
features heap locations $l$, which are not typable in the calculus but
may be introduced by the machine during evaluation. A heap location
represents a delayed computation that can be resumed once a particular
clock has ticked, which indicates that the data the delayed
computation is waiting for has arrived. To this end, each heap
location $l$ is associated with a clock, denoted $\cl{l}$. As soon as
the clock $\cl{l}$ ticks, the delayed computation represented by $l$
can be resumed by retrieving the unevaluated term stored at heap
location $l$ and evaluating it. We write $\locs$ for the set of all
heap locations and assume that for each clock $\Theta$, there are
countably infinitely many locations $l$ with $\cl l = \Theta$. A clock
$\Theta$ is a finite set of push channels drawn from $\dom\Delta$, and
it ticks any time any of its channels $\kappa \in \Theta$ is
updated. For example, assuming an input context $\Delta$ for a GUI,
the clock $\set{\sym{keyPressed}, \sym{mouseCoord}}$ ticks whenever
the user presses a key or moves the mouse. Note that we are now more
precise in distinguishing \emph{clock expressions}, typically denoted
$\theta$, and \emph{clocks}, typically denoted $\Theta$. A closed
clock expression $\theta$ evaluates to a clock $\cleval\theta$ as
follows:
\begin{align*}
  \cleval{\cl l} = \cl l \qquad%
  \cleval{\cl{\inp[\kappa]}} = \set\kappa \qquad%
  \cleval{\theta \sqcup \theta'} =  \cleval{\theta} \cup \cleval{\theta'}
\end{align*}

Delayed computations reside in a heap, which is simply a finite
mapping $\eta$ from heap locations to terms. Of particular interest
are heaps $\eta$ whose locations, denoted $\dom{\eta}$, each have a
clock that contains a given input channel $\kappa$:
\[
\heaps[\kappa] = \setcom{\eta\in \heaps}{\forall l \in \dom{\eta} .\ \kappa \in
  \cl{l}}
\]
It is safe to evaluate terms stored in a heap $\eta \in \heaps[\kappa]$ as
soon as a new value on the input channel $\kappa$ has arrived. This
intuition is reflected in the representation of stores $\sigma$, which
can be in one of two forms: a single-heap store $\eta_L$ or a two-heap
store $\eta_N \stick{\kappa}{v} \eta_L$ with
$\eta_N \in \heaps[\kappa]$. We typically refer to $\eta_L$ as the
\emph{later} heap, which is used to store delayed computations for
later, and to $\eta_N$ as the \emph{now} heap, whose stored terms are
safe to be evaluated now. The $\stick{\kappa}{v}$ component of a
two-heap store indicates that the input channel $\kappa$ has been
updated to the new value $v$. The machine can thus safely resume
computations from $\eta_N$ since the data that the delayed
computations in $\eta_N$ were waiting for has arrived.

Let's first consider the semantics for $\delay$: To allocate fresh
locations in the store, we assume a function $\allocate{\cdot}$,
which, if given a store $\eta_L$ or $\eta_N \stick{\kappa}{v} \eta_L$,
produces a location $l \nin \dom{\eta_L}$ with $\cl{l} = \Theta$. This
results in a store $\eta_L,l\mapsto t$ or
$\eta_N \stick{\kappa}{v} \eta_L,l \mapsto t$, respectively, where
$\eta_L,l\mapsto t$ denotes the heap $\eta_L$ extended with the
mapping $l\mapsto t$.

Conversely, $\adv\,l$ retrieves a previously delayed computation. The
typing discipline ensures that $\adv\,l$ will only be evaluated in the
context of a store of the form $\eta_N\stick{\kappa}{v}\eta_L$ with
$l \in \dom{\eta_N}$ and therefore also $\kappa \in \cl{l}$. In
addition, $\adv$ may be applied to $\inp[\kappa]$ which simply looks
up the new value $v$ from the channel $\kappa$.

The $\select$ combinator allows us to interact with two delayed
computations simultaneously. Its semantics checks for the three
possible contingencies, namely which non-empty subset of the two
delayed computations has been triggered. Each of the two argument
values $v_1$ or $v_2$ is either a heap location or $\inp$ and thus the
machine can simply check whether the current input channel $\kappa$ is
in the clocks associated with $v_1$, $v_2$, or both. Depending on the
outcome, the machine advances the corresponding value(s).

Finally, the fixed point combinator $\fix$ is evaluated with the help
of the combinator $\dfix$, which similarly to heap locations is not
typable in the calculus but is introduced by the machine. Intuitively
speaking, we can think of a value of the form $\dfix\,x.t$ as
shorthand for $\Lambda \theta . (\delay_\theta (\fix\,x.t))$. That is,
$\dfix\,x.t$ is a thunk that, when given a clock $\theta$, produces a
delayed computation on $\theta$, which in turn evaluates a fixed point
once $\theta$ ticks. The action of the $\adv$ combinator for $\DelayA$
can thus also be interpreted as first providing the clock $\theta$ and
then advancing the delayed computation $\delay_\theta (\fix\,x.t)$,
which means evaluating $\fix\,x.t$.

\subsection{Reactive Semantics}
\label{sec:step-semantics}
%

An Async RaTT program interacts with its environment by
receiving input from a set of input channels and in return sends
output to a set of output channels. The input context $\Delta$
describes the available input channels. In addition, we also have an
output context $\Gammaout$, that only contains variables
$x : A$, where $A$ is a value type. We refer to the variables in
$\Gammaout$ as output channels. Taken together, we call the pair
consisting of $\Delta$ and $\Gammaout$ a \emph{reactive interface},
written $\Delta \Rightarrow \Gammaout$.

Given an output interface $\Gammaout = x_1 : A_1,\dots,x_n: A_n$, we
define the type $\Prod\Gammaout$ as the product of all types in
$\Gammaout$, i.e.,
$\Prod\Gammaout = \Sig\,A_1 \times \dots \times \Sig\,A_n$. The
$n$-ary product type used here can be encoded using the binary product
type and the unit type in the standard way. An Async RaTT term $t$ is
said to be a reactive program implementing the reactive interface
$\Delta \Rightarrow \Gammaout$, denoted
$t : \Delta \Rightarrow \Gammaout$, if
$\hastype[\Delta]{}{t}{\Prod{\Gammaout}}$.

\begin{figure}
  \begin{mathpar}
    \inferrule*[left=init]%
    {\heval{\tuple t }{\emptyset}{\tuple{ v_1\cons
          l_1,\dots,v_m \cons l_m}}{\eta}}%
    {\stateI{t}\forward{x_1\mapsto
        v_1,\dots,x_m\mapsto v_m} \stateS{x_1\mapsto
        l_1,\dots,x_m\mapsto l_m}{\eta}}%
    \and%
    \inferrule*[left=input]%
    {\iota' = \iota[\kappa \mapsto v] \text{ if }\kappa\in\dom\iota
      \text{ otherwise } \iota' = \iota }
    {\stateS{N}{\eta}\forward{\kappa \mapsto v}
      \stateS[\iota']{N}{\clin{\eta}\stick{\kappa}{v}\clnin{\eta}}}%
    \and%
    \inferrule*[left=output-end]%
    {~}%
    {\stateS{\cdot}{\eta_N\stick{\kappa}{v}\eta_L}\forward{\cdot}
      \stateS{\cdot}{\eta_L}}%
    \and%
    \inferrule*[left=output-skip]%
    {\kappa\nin\cl l\\\stateS{N}{\eta_N\stick{\kappa}{v}\eta_L}\forward{O}
      \stateS{N'}{\eta}}%
    {\stateS{x\mapsto l,N}{\eta_N\stick{\kappa}{v}\eta_L}\forward{O}
      \stateS{x\mapsto l,N'}{\eta}}%
    \and%
    \inferrule*[left=output-compute]%
    {\kappa\in\cl l\\
      \heval{\adv\,l}{\eta_N\stick{\kappa}{v}\eta_L}{v'\cons l'}{\sigma}\\
      \stateS{N}{\sigma}\forward{O}
      \stateS{N'}{\eta}}%
    {\stateS{x\mapsto l,N}{\eta_N\stick{\kappa}{v}\eta_L}\forward{x\mapsto
        v', O}
      \stateS{x\mapsto l',N'}{\eta}}%
\end{mathpar}
  \caption{Reactive semantics.}
  \label{fig:reactiveSemantics}
\end{figure}

The operational semantics of a reactive program is described by the
machine in Figure~\ref{fig:reactiveSemantics}. The state of the
machine can be of two different forms: Initially, the machine is in a
state of the form $\stateI t$, where $t: \Delta \Rightarrow \Gammaout$
is the reactive program and $\iota$ is the initial input buffer, which
contains the initial values of all buffered input
channels. Subsequently, the machine state is a pair
$\stateS{N}{\sigma}$, where $N$ is a sequence of the form
$x_1 \mapsto l_1,\dots,x_n \mapsto l_n$ that maps output channels
$x_i \in \dom{\Gammaout}$ to heap locations. That is, $N$ records for
each output channel the location of the delayed computation that will
produce the next value of the output channel as soon as it needs
updating.

The machine can make three kinds of transitions:
\begin{align*}
  &\text{an initialisation
    transition}
  &\stateI{t} &\forward{O} \stateS{N}{\eta}\\
  &\text{an input transition}&\stateS{N}{\eta} &\forward{\kappa\mapsto v}
  \stateS[\iota']{N}{\eta_N\stick{\kappa}{v}\eta_L}\\
  &\text{an output transition}&\hspace{-3cm}
  \stateS{N}{\eta_N\stick{\kappa}{v}\eta_L} &\forward{O}
  \stateS{N'}{\eta_L}
\end{align*}
where $O$ is a sequence $x_1\mapsto v_1,\dots, x_n\mapsto v_n$ that
maps output channels to values.  After the initial transition, which
initialises the values of all output channels, the machine alternates
between input transitions, each of which updates the value of an input
channel and possibly the input buffer (if the new input is on a
buffered channel), and output transitions, each of which provides new
values for all those output channels triggered by the immediately preceding
input transition.

The initialisation transition evaluates the reactive program $t$ in
the context of the initial input buffer $\iota$ and thereby produces a
tuple $\tuple{ v_1\cons l_1,\dots,v_m \cons l_m}$ whose components
$v_i\cons l_i$ correspond to the output channels
$x_i : A_i \in \Gammaout$. Each $v_i$ is the initial value of
the output channel $x_i$ and each $l_i$ points to a delayed
computation in the heap $\eta$ that computes future values of $x_i$.

An input transition receives an updated value $v$ on the input channel
$\kappa$ and reacts by updating the input buffer (if it already had a
value for $\kappa$) and transitioning the store $\eta$ to the new
store $\clin{\eta}\stick{\kappa}{v}\clnin{\eta}$. This splits the heap
$\eta$ into a part that contains $\kappa$ and a part that does not:
\[
  \clin{\eta}(l) = \eta(l) \quad \text{if } \kappa \in \cl{l} \hspace{2cm}
  \clnin{\eta}(l) = \eta(l) \quad \text{if } \kappa \nin \cl{l}
\]
That is, in the subsequent output transition, the machine can read
from $\clin{\eta}$, i.e., exactly those heap locations from $\eta$
that were waiting for input from $\kappa$, and access the new value
$v$ from $\kappa$.

Finally, the output transition checks for each element $x \mapsto l$ in
$N$, whether it should be advanced because it depends on $\kappa$
(\textsc{output-compute}) or should remain untouched because it does
not depend on $\kappa$ (\textsc{output-skip}). Only in the
\textsc{output-compute} case, a new output value for $x$ is produced.
In the end, the output transition performs the desired garbage
collection that deletes both the now heap $\eta_N$ and the input value
$v$ (\textsc{output-end}). This also means that the updates performed
by \textsc{output-compute}, are not only possible (because the
required data arrived), but also necessary (because both the input
data and the delayed computations they depend on will be gone after
this output transition of the machine).

\subsection{Example}
\label{sec:example}

To see the operational semantics in action, we revisit the simple
GUI program from section~\ref{sec:simple-gui-example} and run it on
the machine. To this end, we first elaborate the definition of
$\varid{toggleSig}$ into an explicit fixed point term of the core
calculus as described in section~\ref{sec:elaboration}:
\begin{align*}
  \varid{toggleSig} =\ &\fix\, r . \lambda \varid{tog}.\lambda f.\lambda g
  .\lambda
  x.\\&\letterm{\varid{tick}}{\unbox\,\varid{tog}}{\varid{switchf}\,(\unbox\,f\,x)(\delay_{\cl{\varid{tick}}}\,(adv\,\varid{tick};\adv\,r
   \,\varid{tog}\,g\,f))}
\end{align*}
During the execution, the machine turns fixed points like
\varid{toggleSig} into delayed fixed points that use $\dfix$ instead
of $\fix$. We write $\varid{toggleSig}'$ for this delayed fixed point,
i.e., $\varid{toggleSig}'$ is obtained from $\varid{toggleSig}$ by
replacing $\fix$ with $\dfix$. We will use the same notational
convention for other fixed point definitions and write
$\varid{sigAwait}'_\kappa$ and $\varid{scan}'$ for the $\dfix$
versions of $\varid{sigAwait}_\kappa$ and $\varid{scan}$ from
section~\ref{sec:signal-combinators}.

We consider the program
$\varid{field1} : \Delta \Rightarrow \Gammaout$ with
$\Delta=\set{\mathsf{up} :_\cpush \Unit,\mathsf{toggle} :_\cpush
  \Unit}$, $\Gammaout = x : \Nat$, and
\begin{align*}
  \varid{field1} = \varid{toggleSig}\,t\,s_1\,s_2\, 0
  \quad\text{where }\
  \begin{aligned}[t]
    t &= \rbox\,\inp[\mathsf{toggle}]\\ s_1 &=
    \rbox\,(\varid{count}\,\varid{sigAwait}_{\mathsf{up}})\\
    s_2 &= \rbox\, \varid{const}
  \end{aligned}
\end{align*}
That is, this program describes the behaviour of the text field that
initially is in focus and thus reacts to the `up' button.

For better clarity of the transition steps of the machine, we write
the machine's store as just the list of its heap locations, and write
the contents of the locations along with their clocks separately
underneath. The first step of the machine performs the initialisation
that provides the initial value of the output signal:
\begin{gather*}
  \stateI[\emptyset]{\varid{field1}} \forward{x \mapsto 0}
  \stateS[\emptyset]{x\mapsto l_1}{l_1,l_2,l_3,l_4}\\\text{where}\quad
  \begin{aligned}[t]
    l_1&\mapsto
    \caset{\select\,l_3\,l_4}\dots&\cl{l_1}&=\set{\mathsf{toggle},\mathsf{up}}\\
    l_2&\mapsto \adv\,\inp[\mathsf{up}]  \cons \adv\,\varid{sigAwait}'&\cl{l_2}&=\set{\mathsf{up}}\\
    l_3&\mapsto \varid{scan}\,(\rbox\,\lambda n.\lambda
    m.m+1)\,0\,(\adv\,l_2)&\cl{l_3}&=\set{\mathsf{up}}\\
    l_4&\mapsto \adv\,\inp[\mathsf{toggle}];\adv\,\varid{toggleSig}'\,
    t\, s_2\, s_1&\cl{l_4}&=\set{\mathsf{toggle}}
  \end{aligned}
\end{gather*}
We can see that the next value for the output channel $x$ is provided
by the delayed computation at location $l_1$, and since
$\cl{l_1}=\set{\mathsf{toggle},\mathsf{up}}$ we know that $x$ will
produce a new value as soon as the user clicks either of the two
buttons. If the user clicks the `up' button, we see the following:
\begin{gather*}
  \stateS[\emptyset]{x\mapsto l_1}{l_1,l_2,l_3,l_4}
  \forward{\mathsf{up} \mapsto \unit} \stateS[\emptyset]{x\mapsto l_1}{l_1,l_2,l_3\stick{\mathsf{up}}{\unit}l_4} \forward{x \mapsto 1}
  \stateS[\emptyset]{x\mapsto l_5}{l_4,l_5,l_6,l_7}\\\text{where}\quad
  \begin{aligned}[t]
    l_5&\mapsto
    \caset{\select\,l_7\,l_4}\dots&\cl{l_5}&=\set{\mathsf{toggle},\mathsf{up}}\\
    l_6&\mapsto \adv\,\inp[\mathsf{up}]  \cons \adv\,\varid{sigAwait}'&\cl{l_6}&=\set{\mathsf{up}}\\
    l_7&\mapsto \adv\,\varid{scan}'\,(\rbox\,\lambda n.\lambda
    m.m+1)\,0\,(\adv\,l_6)&\cl{l_7}&=\set{\mathsf{up}}
  \end{aligned}
\end{gather*}
The heap locations $l_1,l_2,l_3$ are garbage collected and only $l_4$
survives since only the clock of $l_4$ does not contain
$\mathsf{up}$. If the user now clicks the `toggle` button, we see the
following:
\begin{gather*}
  \stateS[\emptyset]{x\mapsto l_5}{l_4,l_5,l_6,l_7}
  \forward{\mathsf{toggle} \mapsto \unit} \stateS[\emptyset]{x\mapsto
    l_5}{l_4,l_5\stick{\mathsf{toggle}}{\unit}l_6,l_7}
  \forward{x \mapsto 1}
  \stateS[\emptyset]{x\mapsto l_8}{l_6,l_7,l_8,l_9}\\\text{where}\quad
  \cl{l_0} =\emptyset\quad
  \begin{aligned}[t]
    l_8&\mapsto
    \caset{\select\,l_0\,l_9}\dots&\cl{l_8}&=\set{\mathsf{toggle}}\\
    l_9&\mapsto \adv\,\inp[\mathsf{toggle}]  \cons \adv\,\varid{sigAwait}'\,t\,s_1\,s_2&\cl{l_9}&=\set{\mathsf{toggle}}
  \end{aligned}
\end{gather*}
The heap location $l_0$ is allocated by $\never$ and thus does not
appear on the heap. Now the output channel $x$ only depends on the
input channel $\mathsf{toggle}$. If the user now repeatedly clicks the
`up' button, no output is produced:
\begin{align*}
  \stateS[\emptyset]{x\mapsto l_8}{l_6,l_7,l_8,l_9}
  &\forward{\mathsf{up} \mapsto \unit} \stateS[\emptyset]{x\mapsto
    l_8}{l_6,l_7\stick{\mathsf{up}}{\unit}l_8,l_9} \forward{\cdot}
  \stateS[\emptyset]{x\mapsto
    l_8}{l_8,l_9}\\
  &\forward{\mathsf{up} \mapsto \unit} \stateS[\emptyset]{x\mapsto
    l_8}{\stick{\mathsf{up}}{\unit}l_8,l_9} \forward{\cdot}
  \stateS[\emptyset]{x\mapsto
    l_8}{l_8,l_9}
\end{align*}

Finally, note that since the input context $\Delta$ contains no
buffered input channels the input buffer remains empty during the
entire run of the program.

\subsection{Main Results}
\label{sec:main-results}

The operational semantics presented above allows us to precisely state
the operational guarantees provided by Async RaTT, namely
productivity, the absence of implicit space leaks, causality, and
signal independence. We address each of them in turn.

\subsubsection{Productivity}
\label{sec:productivity}

Reactive programs $t : \Delta \Rightarrow \Gammaout$ are productive in
the sense that if we feed $t$ with a well-typed initial input buffer
and an infinite sequence of well-typed inputs on its input channels,
then it will produce an infinite sequence of well-typed outputs on its
output channels. Before we can state the productivity property formally, we
need to make precise what we mean by well-typed:
\begin{itemize}
\item An input buffer $\iota$ is well-typed , denoted
  $\vdash \iota : \Delta$, if $\hastype[]{}{\iota(\kappa)}{A}$ for
  each $\kappa$ such that $\kappa :_{\cbuf} A \in \Delta$ or
  $\kappa :_{\cbp} A \in \Delta$.
\item An input value $\kappa \mapsto v$ is well-typed, written
  $\vdash \kappa \mapsto v : \Delta$, if $\kappa :_\capa A \in \Delta$
  and $\hastype[]{}{v}{A}$.
\item A set of output values $O$ is well-typed, written
  $\vdash O : \Gammaout$, if for all $x \mapsto v \in O$, we have that
  $x : A \in \Gammaout$ and $\hastype[]{}{v}{A}$.
\end{itemize}

We can now formally state the productivity property as follows:
\begin{theorem}[productivity]
  \label{thr:productivity}
  Given a reactive program $t : \Delta \Rightarrow \Gammaout$,
  well-typed input values $\vdash \kappa_i\mapsto v_i : \Delta$ for
  all $i \in \nats$, and a well-typed initial input buffer
  $\vdash \iota_0 : \Delta$, there is an infinite transition sequence
\[
  \stateI[\iota_0]{t} \forward{O_0} \stateS[\iota_0]
  {N_0}{\eta_0}\forward{\kappa_0\mapsto v_0}\stateS[\iota_1]{N_0}{\sigma_0}
  \forward{O_1}\stateS[\iota_1]{N_1}{\eta_1}\forward{\kappa_1\mapsto v_1}\dots
\]
with $\vdash O_i : \Gammaout$ for all $i \in \nats$.
\end{theorem}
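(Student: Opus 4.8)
The plan is to prove productivity by a Kripke logical relation argument, in the style of \citet{krishnaswami13frp} and the earlier RaTT developments \cite{bahr2019simply,rattusJFP}. The backbone of the proof is an invariant maintained along the transition sequence: at every ``output state'' $\stateS{N}{\eta}$ the heap $\eta$ is well-formed with respect to some \emph{world} --- a finite typing recording, for each location $l \in \dom\eta$, both its clock $\cl l$ and a type for the term stored there --- and $N$ sends each output channel $x_i : A_i \in \Gammaout$ to a location $l_i \in \dom\eta$ holding a delayed computation that produces a value of type $\Sig\,A_i$ once $\cl{l_i}$ ticks. The \textsc{init} transition must establish this invariant; each \textsc{input} transition splits the heap into $\clin{\eta}\stick{\kappa}{v}\clnin{\eta}$, which preserves well-formedness because the split respects clocks and because the new input satisfies $\hastype[]{}{v}{A}$ by hypothesis; and each output transition must be shown to not get stuck, to produce a well-typed output $O_{i+1}$, and to re-establish the invariant for the successor state with the garbage-collected heap.

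The key ingredient is a Kripke logical relation on closed terms and values, written here $\mathcal{T}_\Delta\sem{A}$ and $\mathcal{V}_\Delta\sem{A}$, relative to a well-typed input buffer and indexed by a world together with a step-index $n$ counting the number of ticks still to be observed. The step-index is what makes the definition well-founded in the presence of the guarded recursive types $\Fix\,\alpha.A$ and of $\fix$/$\dfix$: the unfolding of $\Fix\,\alpha.A$ to $A[\DelayE(\Fix\,\alpha.A)/\alpha]$ is guarded by a $\DelayE$, so the interpretation at index $n{+}1$ refers only to interpretations at index $n$. The clause for $\DelayE A$ (and, mutatis mutandis, $\DelayA A$) quantifies over future worlds reachable by one tick and drops the index by one, requiring that resuming the delayed computation land in $\mathcal{T}$ at the lower index; the clause for $\Box A$ quantifies over \emph{all} future worlds at every index, reflecting that a boxed term may be run arbitrarily far in the future; and function types get the usual clause, closed under future worlds. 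I would then prove the \emph{fundamental property}: if $\hastype[\Delta]{\Gamma}{t}{A}$ and $\gamma$ realises $\Gamma$ at the current world, then $t\gamma \in \mathcal{T}_\Delta\sem{A}$ --- i.e.\ $\heval{t\gamma}{\sigma}{v}{\tau}$ for some $v \in \mathcal{V}_\Delta\sem{A}$ and some $\tau$ that extends the world consistently. This is the statement whose proof is deferred to the appendix; it proceeds by induction on the typing derivation, the delicate cases being $\delay_\theta$ and $\never$ (which allocate fresh heap locations, so one checks the extended store still realises a valid future world), $\adv$ on locations and on $\inp$ (which dereference the now-heap, so evaluation must take place in a future world of the one in which the $\DelayE$ value was formed), $\select$ (with three subcases mirroring the three evaluation rules and the three summands of its result type), $\fix$/$\dfix$ (which closes the recursion at the lower step-index via the $\adv\,(\dfix\,x.t)$ rule), $\unbox$, and $\into/\out$.

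Given the fundamental property, productivity follows by constructing the transition sequence by induction on $i$. For the base case, apply the fundamental property to $\hastype[\Delta]{}{t}{\Prod\Gammaout}$ evaluated in the empty store under $\iota_0$; this yields the tuple $\tuple{v_1\cons l_1,\dots,v_m\cons l_m}$, a well-formed heap $\eta_0$, and the output $O_0$ with $\vdash O_0 : \Gammaout$, so the \textsc{init} rule gives the first transition and establishes the invariant. For the inductive step, given an output state satisfying the invariant and the next input $\vdash \kappa_i \mapsto v_i : \Delta$, the \textsc{input} rule produces the two-heap store $\clin{\eta_i}\stick{\kappa_i}{v_i}\clnin{\eta_i}$, which still realises a valid world (with a distinguished input value of the correct type). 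Then, by induction on the length of $N$, the rules \textsc{output-skip}, \textsc{output-compute}, and \textsc{output-end} always apply and exhaust the cases: whenever $x \mapsto l$ occurs in $N$ with $\kappa_i \in \cl l$, the location $l$ lies in the now-heap, so by the fundamental property $\adv\,l$ evaluates to some $v'\cons l'$ with $\hastype[]{}{v'}{A}$ and $l'$ again holding a delayed computation for $\Sig\,A$; collecting the $v'$'s yields the well-typed $O_{i+1}$, and \textsc{output-end} garbage-collects the now-heap, re-establishing the invariant for the next output state. Iterating produces the required infinite sequence.

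The hard part will be the fundamental property, and within it the bookkeeping of worlds and step-indices: the notion of ``future world'' must be chosen so that it is simultaneously closed under (i) the heap extensions performed by $\delay_\theta$ and $\never$, (ii) the heap split into $\clin{\eta}$ and $\clnin{\eta}$ and the garbage collection of \textsc{output-end}, and (iii) the decrement of the step-index across each tick --- while remaining a sufficiently rigid partial order that monotonicity of $\mathcal{V}$ and $\mathcal{T}$ along it still holds. A related point is that the restriction to at most one tick per context (noted after the rules for $\DelayE$) is precisely what guarantees that a store always has the shape $\eta_L$ or $\eta_N\stick{\kappa}{v}\eta_L$ with $\eta_N \in \heaps[\kappa]$, so that every location eligible for $\adv$ genuinely resides in the now-heap; the logical relation must be arranged to exploit this, and the restriction to forming $\lambda$-abstractions and $\fix$ only in tick-free contexts plays the analogous role for closures and recursive definitions.
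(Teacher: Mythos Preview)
Your overall strategy matches the paper's: a step-indexed Kripke logical relation, a fundamental property proved by induction on typing (Theorem~\ref{thr:lrl} in the paper), and then productivity derived from it via an invariant on machine states. The paper's world is simply a pair $(n,\sigma)$ of step-index and store rather than a semantic heap typing, but that is a design variation, not a substantive difference.

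There is, however, a real gap in your final step. You write that ``productivity follows by constructing the transition sequence by induction on $i$'', maintaining the invariant across each step. But the step-index strictly decreases along transitions: in the paper's formulation, a state in $\pState{n+1}$ steps to one in $\pState{n}$, and the same must hold in your setup since your clause for $\DelayE A$ ``drops the index by one''. A plain induction on $i$ therefore runs out of fuel after finitely many steps, and your invariant, as stated, is at a single unspecified index.

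The paper closes this gap by constructing, for each $n$, a \emph{finite} transition sequence $s_n$ of length $n$ (starting the logical relation at index $n$), and then invoking determinism of the operational semantics (Lemma~\ref{lem:deterministic}) to conclude that $s_n$ is a prefix of $s_m$ whenever $m > n$; the infinite sequence is the limit. An equivalent phrasing is to maintain the invariant ``for all $n$'' at each output state, but the inductive step then still needs determinism: the machine makes a single transition, and the fundamental property at each index $n+1$ certifies that this one transition lands in the relation at index $n$, hence at all indices. Either way, determinism is essential, and you do not invoke it.
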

While a reactive program will always produce a set of output values
$O_{i+1}$ for each incoming input value $\kappa_i \mapsto v_i$, this
set may be empty. This happens if none of
the heap locations in $N_i$ depends on the input $\kappa_i$, i.e., if
$\kappa_i \nin \cl{l}$ for all $x \mapsto l \in N_i$. As we will see
in Proposition~\ref{prop:bufSigIndependence}, this will necessarily be
the case for inputs $\kappa :_{\cbuf} A \in \Delta$ that are
buffered-only. Note that all output channels are initialised in the initialisation
transition. An empty set of output values therefore only means that no
output channels need to be updated. 

\subsubsection{Implicit Space Leaks}
\label{sec:implicit-space-leaks}

The absence of implicit space leaks is a direct consequence of the
productivity property (Theorem~\ref{thr:productivity}). More
precisely, the operational semantics of Async RaTT is formulated in
such a way that after each pair of input/output transitions
\[
  \stateS{N}{\eta} \forward{\kappa\mapsto v}
  \stateS[\iota']{N}{\eta_N\stick{\kappa}{v}\eta_L} \forward{O}
  \stateS[\iota']{N'}{\eta_L}
\]
all heap locations $l$ in $\eta$ that depend on $\kappa$, i.e., those
with $\kappa \in \cl l$, are garbage collected and thus do not appear
in $\eta_L$. That is, a delayed computation at location $l$ is only
kept in memory until its clock $\cl l$ ticks. By
Theorem~\ref{thr:productivity}, this aggressive garbage collection
strategy is safe: The machine never gets stuck attempting to
dereference a garbage collected heap location.

\subsubsection{Causality}
\label{sec:causality}

In the following we refer to the transition sequences for a reactive
program $t$ obtained by Theorem~\ref{thr:productivity} simply as
well-typed transition sequences for $t$.

A reactive program $t$ is causal, if for any of its well-typed
transition sequences
\begin{equation}  \label{eq:causality:trans:seq}
  \stateI[\iota_0]{t} \forward{O_0} \stateS[\iota_0]
  {N_0}{\eta_0}\forward{\kappa_0\mapsto v_0}\stateS[\iota_1]{N_0}{\sigma_0}
  \forward{O_1}\stateS[\iota_1]{N_1}{\eta_1}\forward{\kappa_1\mapsto v_1}\dots
\end{equation}
each set of output values $O_n$ only depends on the initial input
buffer $\iota_0$ and previously received input values
$\kappa_i \mapsto v_i$ with $i < n$. To see that this is always the
case, we first note that the operational semantics is deterministic in
the following sense:
%
\begin{lemma}[deterministic semantics]
  \label{lem:deterministic}
  ~
  \begin{enumerate}[(i)]
  \item $\heval{t}{\sigma}{v_1}{\sigma_1}$ and
    $\heval{t}{\sigma}{v_2}{\sigma_2}$ implies that $v_1 = v_2$ and
    that $\sigma_1 = \sigma_2$.
  \item $c \forward{\kappa \mapsto v} c_1$ and
    $c \forward{\kappa \mapsto v} c_2$ implies $c_1 = c_2$.
  \item $c \forward{O_1} c_1$ and $c \forward{O_2} c_2$ implies
    $O_1 = O_2$ and $c_1 = c_2$.
  \end{enumerate}
\end{lemma}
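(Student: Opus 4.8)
The plan is to prove all three parts by straightforward structural induction, relying on the syntactic observation that the evaluation semantics and the reactive semantics are \emph{syntax-directed}: for each term former and each machine-state former, at most one rule can fire, and the premises uniquely determine the subderivations.

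For part~(i), I would proceed by induction on the derivation of $\heval{t}{\sigma}{v_1}{\sigma_1}$ (equivalently, on the structure of $t$, with a sub-case analysis on the shape of the intermediate values produced). The key lemma to establish first is that the conclusion of each rule in Figure~\ref{fig:machine} determines which rule was applied: the term $t$ determines the top-level former in every case except that (a) a value $v$ matches both the axiom $\heval{v}{\sigma}{v}{\sigma}$ and, if $v = \interm_i(v')$ or $v = \pair{v_1}{v_2}$ or $v = \into v'$, the corresponding congruence rule — but these agree, since evaluating a value leaves it and the store unchanged; (b) $\adv$ splits into the $\inp[\kappa]$ case, the heap-location case, and the $\dfix$ case, which are mutually exclusive on the syntactic form of the argument; and (c) $\select\,v_1\,v_2$ splits into the two cases $\kappa \in \cl{v_i}\setminus\cl{v_{3-i}}$ and $\kappa \in \cl{v_1}\cap\cl{v_2}$, which are mutually exclusive and exhaustive given that the store has the form $\eta_N\stick{\kappa}{v}\eta_L$ — this is where I would invoke the typing discipline (the $\adv$/$\select$ rules force a matching tick in the context, hence a two-heap store) to rule out the remaining possibility $\kappa \notin \cl{v_1}\cup\cl{v_2}$. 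Once rule-determinism is in hand, in each inductive case the two derivations must end in the same rule, their premises are evaluations of syntactically identical terms in identical stores (using the induction hypothesis to propagate equality of the threaded store from one premise to the next), and the induction hypothesis gives equality of the results. The only non-obvious points are that $\allocate{\cdot}$ returns a determined location (assumed of the allocator) and that $\cleval{\theta}$ is a function (immediate from its defining equations).

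For parts~(ii) and~(iii), I would do a case analysis on the rules of Figure~\ref{fig:reactiveSemantics}. For part~(ii), only \textsc{input} has an arrow labelled $\kappa\mapsto v$, and given $c = \stateS{N}{\eta}$ the resulting state $\stateS[\iota']{N}{\clin{\eta}\stick{\kappa}{v}\clnin{\eta}}$ is completely determined by $c$, $\kappa$, and $v$ (the side condition on $\iota'$ is deterministic, and $\clin{\eta},\clnin{\eta}$ are functions of $\eta$ and $\kappa$). For part~(iii), the output transitions are driven by the shape of the $N$ component and of the store. When $N$ is empty, only \textsc{output-end} applies; when $N = x\mapsto l, N'$, exactly one of \textsc{output-skip} ($\kappa\notin\cl l$) and \textsc{output-compute} ($\kappa\in\cl l$) applies, and in the latter case the sub-transition $\stateS{N}{\sigma}\forward{O}\cdots$ and the evaluation $\heval{\adv\,l}{\eta_N\stick{\kappa}{v}\eta_L}{v'\cons l'}{\sigma}$ are on a smaller $N$ and are deterministic by the induction hypothesis and by part~(i) respectively. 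Hence I would prove part~(iii) by induction on the length of $N$, appealing to~(i) in the \textsc{output-compute} case. I would note in passing that the labels $O$ composed along the way are concatenated deterministically, so equality of the full output sequence follows.

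The main obstacle, such as it is, is the bookkeeping around the \emph{store threading} in part~(i): one must check in every multi-premise rule (application, $\letterm{}{}{}$, pairs, $\case$, $\recN{}{}{}{}{}$, $\unbox$, the $\cap$-case of $\select$) that the induction hypothesis applied to the first premise yields the same intermediate store fed to the second, so that the second premises of the two derivations really are evaluations of the same term in the same store. This is entirely routine but must be stated carefully; there is no genuine difficulty, and in particular no well-foundedness subtlety, since the big-step judgement is defined inductively and every premise is a strictly smaller derivation. A secondary point worth making explicit is that rule-determinism for $\select$ and $\adv$ relies on the store being well-formed in the way the typing rules guarantee (a two-heap store with the advanced location in its now-heap); in a purely syntactic statement of the lemma this can either be assumed as a side condition or derived from the fact that these configurations only ever arise in the reactive semantics from well-typed programs.
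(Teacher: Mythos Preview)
The paper does not actually give a proof of this lemma; it is stated and then used (e.g.\ in the proof of Theorem~\ref{thr:productivity} and of Lemma~\ref{lem:stateT}) as an evident fact about the syntax-directed rules. Your proposal is therefore strictly more detailed than anything in the paper, and the overall strategy---induction on the evaluation derivation for~(i), case analysis for~(ii), and induction on the length of $N$ for~(iii), threading the store through multi-premise rules---is the right one.

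Two small corrections. First, your appeal to the typing discipline in the $\select$ case is unnecessary and slightly misleading. Determinism only requires that the two $\select$ rules be \emph{mutually exclusive}, which they are purely syntactically: the side conditions $\kappa\in\cleval{\cl{v_i}}\setminus\cleval{\cl{v_{3-i}}}$ and $\kappa\in\cleval{\cl{v_1}}\cap\cleval{\cl{v_2}}$ cannot both hold. Whether they are \emph{exhaustive} (i.e.\ whether $\kappa\notin\cleval{\cl{v_1}}\cup\cleval{\cl{v_2}}$ can occur) is irrelevant here; if no rule fires there is no derivation and the implication is vacuous. Exhaustiveness matters for productivity, not for determinism, so you can drop the detour through typing entirely. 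The same remark applies to your closing paragraph about $\adv$ needing a well-formed two-heap store: if the store has the wrong shape, no $\adv$ rule applies and there is nothing to compare.

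Second, in part~(iii) the label $O$ also decorates the \textsc{init} transition from an initial state $\stateI{t}$, which your case analysis on $N$ does not cover. This case is immediate from part~(i) (the \textsc{init} rule's only premise is an evaluation), but it should be mentioned.
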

%

Causality now follows from Theorem~\ref{thr:productivity}  and
Lemma~\ref{lem:deterministic}.

\begin{corollary}[causality]
\label{cor:causality}
 Suppose (\ref{eq:causality:trans:seq}) as well as the following are 
 well-typed transition sequences 
 \[
   \stateI[\iota_0]{t} \forward{O_0'} \stateS[\iota_0']
  {N_0'}{\eta_0'}\forward{\kappa_0'\mapsto v_0'}\stateS[\iota_1']{N_0'}{\sigma_0'}
  \forward{O_1'}\stateS[\iota_1']{N_1'}{\eta_1'}\forward{\kappa_1'\mapsto v_1'}\dots
 \]
 Let $n \in \nats$ and suppose $\kappa_i' = \kappa_i$ and $v_i' = v_i$ for all $i<n$. 
 Then $O_n' = O_n$.
\end{corollary}

\subsubsection{Signal Independence}
\label{sec:signal-independence}

From the definition of the reactive semantics we can see that the
machine only updates an output channel $x : A \in \Gammaout$ if
it depends on the input value $\kappa \mapsto v$ that has just
arrived, i.e., if the machine is in a state
$\stateS{N}{\eta_N\stick{\kappa}{v}\eta_L}$ with
$\kappa \in \cl{N(x)}$. However, the typing system allows us to give
two useful \emph{static} criteria for when $\kappa \nin \cl{N(x)}$ is
guaranteed and thus the output signal $x$ need not (and indeed cannot)
be updated.

As alluded to earlier, values received on
buffered-only channels will never produce an output:
\begin{proposition}[buffered signal independence]
  \label{prop:bufSigIndependence}
  Suppose $t : \Delta \Rightarrow \Gammaout$ is a reactive program and
  \[
    \stateI[\iota_0]{t} \forward{O_0} \stateS[\iota_0]
    {N_0}{\eta_0}\forward{\kappa_0\mapsto v_0}\stateS[\iota_1]{N_0}{\sigma_0}
    \forward{O_1}\stateS[\iota_1]{N_1}{\eta_1}\forward{\kappa_1\mapsto v_1}\dots
  \]
  is a well-typed transition sequence for $t$. Then $O_{i+1}$ is empty
  whenever $\kappa_i :_{\cbuf} A \in \Delta$ for some $A$.
\end{proposition}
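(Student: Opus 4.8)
The plan is to reduce the statement to a single definitional observation together with a routine inspection of the reactive semantics. Recall that, by definition, a clock is a finite set of \emph{push} channels, i.e.\ channels declared in $\Delta$ with capability $\cpush$ or $\cbp$; a buffered-only channel $\kappa :_{\cbuf} A$ is by definition not a push channel, and hence $\kappa \nin \Theta$ for every clock $\Theta$. In particular, in the machine state reached just before the output transition that produces $O_{i+1}$ --- whose output-location map is $N_i$ and whose heaps consist of the locations of $\eta_i$ --- we have $\kappa_i \nin \cl l$ for every stored location $l$ and for each $l = N_i(x)$ with $x \in \dom{\Gammaout}$. Now inspect the output transition of Figure~\ref{fig:reactiveSemantics}: every entry $x \mapsto l$ of $N_i$ satisfies the side condition $\kappa_i \nin \cl l$ of \textsc{output-skip} and none satisfies the side condition $\kappa_i \in \cl l$ of \textsc{output-compute}, so the transition dispatches all of $N_i$ by \textsc{output-skip} and then terminates via \textsc{output-end}. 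Since \textsc{output-compute} is the only rule contributing a pair $x \mapsto v'$ to the emitted sequence, $O_{i+1}$ is empty. This is exactly the instance, noted in the remark following \autoref{thr:productivity}, in which no location of $N_i$ depends on $\kappa_i$.

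To make this fully rigorous one must also check that the ``$\kappa_i \nin \cl l$'' step is sound, i.e.\ that every heap location $l$ appearing in a reachable machine state has $\cl l$ a genuine clock. I would establish this as an invariant of well-typed transition sequences, by induction over the sequence with an inner induction over the evaluation relation of Figure~\ref{fig:machine}: the only rules that allocate a fresh location are the one for $\never$, whose clock is $\emptyset$, and the one for $\delay_\theta\, t$, whose clock is $\cleval\theta$. A further induction over the clock expression $\theta$, using $\cleval{\cl l} = \cl l$, $\cleval{\cl{\inp[\kappa]}} = \set\kappa$, and $\cleval{\theta \sqcup \theta'} = \cleval\theta \cup \cleval{\theta'}$, reduces the claim to: every location mentioned in $\theta$ already satisfies the invariant (induction hypothesis), and every $\inp[\kappa]$ mentioned in $\theta$ has $\kappa$ a push channel, which holds because the typing rule for $\inp$ applies only when the capability is $\cpush$ or $\cbp$.

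The one point that deserves care --- and the only place the argument is not purely mechanical --- is that at runtime the subscript $\theta$ of a $\delay_\theta$ is in general not a source-level clock expression but one produced by substitution during evaluation, so for the last step above one needs that evaluation-time substitutions replace a variable of type $\DelayE A$ only by a genuine value of that type, namely a heap location or a term $\inp[\kappa]$ with $\kappa$ a push channel. This is part of the preservation-of-well-typedness invariant of the machine that is in any case needed for the proof of \autoref{thr:productivity}, so I would cite it from the metatheory (or re-derive the small fragment required here). Given that invariant, the case analysis on the output transition in the first paragraph applies verbatim, and the proposition follows; this is a short corollary of the design of the reactive semantics rather than a result requiring new machinery.
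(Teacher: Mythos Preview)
Your argument is correct, but it takes a different route from the paper's. The paper's proof is a one-line application of the productivity lemma (Lemma~\ref{lem:productivity}) together with the definition of the logical relation: any state reached along a well-typed transition sequence lies in $\pState{1}$, so each $l$ in $N_i$ satisfies $l \in \vinterp{\DelayE(\Sig\,A)}{1,\eta_i}$, which by definition of the value relation forces $l \in \locs[\Delta]$, i.e.\ $\cl l \subseteq \domp{\Delta}$. Since a buffered-only $\kappa_i$ is not in $\domp{\Delta}$, the conclusion follows immediately.

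You instead try to establish the invariant ``every allocated location has a push-only clock'' directly from the operational semantics and typing, without invoking the Kripke logical relation. This is sound in principle, and your sketch correctly isolates the one non-mechanical step: at runtime the subscript of a $\delay_\theta$ has been substituted, so one needs that values of type $\DelayE A$ are always locations with push-only clocks or $\inp[\kappa]$ with $\kappa$ a push channel. The catch is that the paper has no syntactic subject-reduction theorem to cite for this; the ``preservation-of-well-typedness invariant'' you refer to \emph{is} the logical relation (specifically the clause $\vinterp{\DelayE A}{n,\sigma} \subseteq \locs[\Delta] \cup \{\inp[\kappa] \mid \dots\}$ and the fundamental property). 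So your ``cite it from the metatheory'' ultimately points to the same machinery the paper uses, just unpacked differently. The paper's presentation is more economical given that Lemma~\ref{lem:productivity} is already available; yours is more self-contained conceptually but would require either re-deriving the relevant fragment of the logical relation or developing a separate syntactic preservation lemma that the paper does not provide.
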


Secondly, the input context $\Delta$ for a given output signal
implementation gives us an upper bound on the push channels that will
trigger an update:
\begin{theorem}[push signal independence]
  \label{thr:sigIndependence}
  Suppose $\pair{t}{s} : \Delta \Rightarrow \Gammaout$ is a reactive
  program with $\Gammaout = \Gamma, z : C$ such that also
  $s : \Delta' \Rightarrow (z : C)$ is a reactive program for some
  $\Delta'\subset \Delta$ and
  \[
    \stateI[\iota_0]{\pair{t}{s}} \forward{O_0} \stateS[\iota_0]
    {N_0}{\eta_0}\forward{\kappa_0\mapsto v_0}\stateS[\iota_1]{N_0}{\sigma_0}
    \forward{O_1}\stateS[\iota_1]{N_1}{\eta_1}\forward{\kappa_1\mapsto v_1}\dots
  \]
  is a well-typed transition sequence for $\pair{t}{s}$. Then
  $z \mapsto v \in O_{i+1}$ implies that $\kappa_i \in
  \dom{\Delta'}$. In other words, the output channel $z$ is only
  updated when inputs in $\Delta'$ are updated.
\end{theorem}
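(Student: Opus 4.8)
The plan is to reduce the statement to a single invariant along the transition sequence: for every $i$, the heap location $N_i(z)$ currently associated with the output channel $z$ satisfies $\cl{N_i(z)} \subseteq \dom{\Delta'}$. Granting this, the theorem follows at once. By the reactive semantics, $z \mapsto v$ can appear in $O_{i+1}$ only through an application of \textsc{output-compute} to the entry $z \mapsto N_i(z)$, and that rule carries the side condition $\kappa_i \in \cl{N_i(z)}$; together with $\cl{N_i(z)} \subseteq \dom{\Delta'}$ this forces $\kappa_i \in \dom{\Delta'}$. So everything rests on the invariant.

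To prove the invariant I would strengthen it to a semantic statement phrased in terms of the Kripke logical relation of \autoref{sec:metatheory}, but \emph{instantiated at the smaller input context} $\Delta'$ instead of $\Delta$. Concretely, I would carry along the run a sub-store $\sigma^z$ of the current store which (i) is closed under the reachability relation induced by the heap, (ii) contains $N_i(z)$, and (iii) together with $N_i(z)$ lies in the $\Delta'$-logical relation at the type $\DelayE\,(\Sig\, C)$. The point of working over $\Delta'$ is an elementary lemma worth isolating on its own: in the $\Delta'$-logical relation, every heap location appearing at a type of the form $\DelayE A$ has clock contained in $\dom{\Delta'}$. This is because every clock expression $\theta$ that is well formed over $\Delta'$ evaluates, under $\cleval{\cdot}$, to a subset of $\dom{\Delta'}$ --- by induction on the clock-well-formedness rules, using $\cleval{\cl{\inp[\kappa]}} = \set{\kappa}$ with $\kappa$ a push channel of $\Delta'$ --- while the evaluation semantics for $\delay_\theta$ allocates only locations with clocks of this form and $\never$ allocates a location with the empty clock. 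In particular, $\cl{N_i(z)} \subseteq \dom{\Delta'}$ is an immediate consequence of (ii) and (iii).

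It then remains to establish this strengthened invariant at initialisation and show it is preserved by input and output transitions, which amounts to localising the corresponding cases of the proof of \autoref{thr:productivity}. At \textsc{init}, $\pair{t}{s}$ is evaluated starting from the empty, single-heap store; since the store is never in two-heap form during this evaluation, no $\adv$ on a heap location or on a channel reference $\inp[\kappa]$ can fire, so the evaluation of $s$ reads nothing from the heap and is therefore insensitive to the locations produced while evaluating the first component. Applying the fundamental property of the $\Delta'$-logical relation to $s : \Delta' \Rightarrow (z : C)$ (noting that the restriction of $\iota_0$ to the buffered channels of $\Delta'$ is well typed over $\Delta'$) yields that its result $v_z \cons l_z$, together with the reachable closure $\sigma^z$ of the locations it allocated, satisfies (i)--(iii); we set $N_0(z) = l_z$. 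For an \textsc{input} transition on $\kappa_i$, the store is split as $\clin[\kappa_i]{\eta}\,\stick{\kappa_i}{v_i}\,\clnin[\kappa_i]{\eta}$; by the lemma, if $\kappa_i \nin \dom{\Delta'}$ then $\sigma^z$ lies entirely inside the later heap $\clnin[\kappa_i]{\eta}$, and otherwise $\sigma^z$ splits compatibly with the world structure of the $\Delta'$-relation --- in both cases (i)--(iii) pass through just as in the productivity argument. For an \textsc{output} transition, \textsc{output-skip} on $z$ leaves $N(z)$ and $\sigma^z$ untouched, while \textsc{output-compute} on $z$ evaluates $\adv\,N_i(z)$: since, by (iii), the computation stored at $N_i(z)$ is semantically well typed at $\DelayE\,(\Sig\, C)$ over $\Delta'$ and $\kappa_i \in \cl{N_i(z)}$ is a tick of its clock, the logical relation guarantees that it evaluates to some $v' \cons l'$ that is semantically well typed at $\Sig\, C$ over $\Delta'$ in the resulting store, with every location it reaches having clock in $\dom{\Delta'}$ and surviving the subsequent garbage collection; taking the new $\sigma^z$ to be the surviving part of the old one together with these locations and $N_{i+1}(z) = l'$ re-establishes (i)--(iii). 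The closing \textsc{output-end} step deletes only the now heap, which is disjoint from $l'$ and from the rest of the new $\sigma^z$, so the invariant is undisturbed.

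The main obstacle is bookkeeping rather than conceptual. One has to set up the $\Delta'$-restricted logical relation --- equivalently, restrict the worlds of the relation of \autoref{sec:metatheory} to heaps all of whose locations carry $\Delta'$-clocks --- and, above all, track \emph{ownership} of heap locations carefully enough that the sub-store $\sigma^z$ attributed to $z$ stays closed under heap reachability even as the machine garbage-collects and as the remaining output channels in $\Gamma$ (whose locations may carry clocks mentioning channels outside $\Delta'$) are updated within the same output transition. Since all of this is a localisation to $\Delta'$ of machinery already developed for \autoref{thr:productivity}, once the restricted relation is in place the individual transition cases are routine adaptations of the productivity argument.
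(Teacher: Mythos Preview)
Your high-level strategy is exactly the paper's: strengthen the invariant to ``$N_i(z)$ lies in the $\Delta'$-indexed logical relation at type $\DelayE(\Sig\,C)$'', read off $\cl{N_i(z)}\subseteq\domp{\Delta'}$ from the clause $l\in\locs[\Delta']$ in the definition of $\vinterp[\Delta']{\DelayE A}{n,\sigma}$, and show the invariant is preserved along the run. The paper packages this as a family of state sets $\pTState{n}$ requiring $N(x)\in\vinterp[\Delta'_x]{\DelayE(\Sig\,A)}{n,\eta}$ at the \emph{full} machine store $\eta$, and isolates the preservation step as a separate lemma (Lemma~\ref{lem:stateT}).

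Where you diverge is in carrying an auxiliary sub-store $\sigma^z$ ``closed under heap reachability'' and, parenthetically, restricting worlds to heaps whose locations all carry $\Delta'$-clocks. Neither device is needed, and both create friction. The machine's store genuinely contains locations with non-$\Delta'$ clocks (allocated by the other output channels), and the actual evaluation of $\adv\,N_i(z)$ in \textsc{output-compute} happens at that full store; any invariant stated at a smaller $\sigma^z$ must therefore be lifted by monotonicity before it can be compared (via determinism) with the machine's run, at which point $\sigma^z$ is doing no work. Reachability-closure is also not obviously maintained: the logical relation constrains the \emph{result} $l'$ of an $\adv$ step, not the intermediate heap locations it dereferences. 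The paper avoids all of this by keeping the invariant at the full store throughout; the one place a restriction is genuinely required --- an input on some $\kappa\notin\cl{N_i(z)}$ --- is handled directly by the clock-independence property (Lemma~\ref{lem:vinterp_restr}), which says that membership of $l$ in $\vinterp[\Delta']{\DelayE A}{n,\sigma}$ depends only on $\restr[\cl l]{\sigma}$, followed by monotonicity into the later heap. If you drop $\sigma^z$ and state (iii) at the full store, your argument collapses to the paper's.
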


\section{Metatheory}
\label{sec:metatheory}

In this section, we sketch the proof of the operational properties
presented in section~\ref{sec:main-results}, namely
Theorem~\ref{thr:productivity},
Proposition~\ref{prop:bufSigIndependence}, and
Theorem~\ref{thr:sigIndependence}. All three follow from
a more general semantic soundness property. To prove this property, we
first devise a semantic model of the Async RaTT calculus in the form
of a Kripke logical relation. That is, the model consists of a family
$\sem{A}(w)$ of sets of closed terms that satisfy the soundness
properties we are interested in. This family of sets is indexed by a
\emph{world} and is defined by induction on the structure of the type
$A$ and world $w$. The soundness proof is thus reduced to a proof that
$\hastype{}{t}{A}$ implies $t \in \sem{A}(w)$, which is also known as
the \emph{fundamental property} of the logical relation.

\subsection{Kripke Logical Relation}

The worlds $w$ for our logical relation consist of two components: a
natural number $n$ and a store $\sigma$. The number $n$ allows us to
model guarded recursive types via step-indexing
\citep{appel01indexed}. This is achieved by defining
$\sem{\DelayE A}(n + 1,\sigma)$ in terms of $\sem{A}(n,\sigma')$ for
some suitable $\sigma'$. Since recursive types $\Fix\,\alpha.A$ unfold
to $A[\DelayE(\Fix\,\alpha.A)/\alpha]$, we can define
$\sem{\Fix\,\alpha.A}(n+1,\sigma)$ in terms of $\sem{A}(n+1,\sigma)$
and $\sem{\Fix\,\alpha.A}(n,\sigma')$, which is well-founded since in
the former we refer to the smaller type $A$ and in the latter we refer
to a smaller step index $n$.

A key aspect of the operational semantics of Async RaTT is that it
stores delayed computations in a store $\sigma$. Hence, in order to
capture the semantics of a term $t$, we have to account for the fact
that $t$ may contain heap locations that point into some suitable
store $\sigma$. Intuitively speaking, the set $\sem{A}(n,\sigma)$
contains those terms that, starting with the store $\sigma$, can be
executed safely
to produce a value of type $A$. Ultimately, the index $\sigma$ enables us
to prove that the garbage collection performed by the reactive semantics
is indeed sound.

What makes $\sem{A}(n,\sigma)$ a Kripke logical relation is the fact
that we have a preorder $\lesssim$ on worlds such that
$(n,\sigma) \lesssim (n',\sigma')$ implies
$\sem{A}(n,\sigma) \subseteq \sem{A}(n',\sigma')$. We can think of
$(n',\sigma')$ as a future world reachable from $(n,\sigma)$, i.e., it
describes how the surrounding context changes as the machine performs
computations. There are four different kinds of changes, which we
address in turn below:

Firstly, time may pass, which means that we have fewer time steps left, i.e.,
$n > n'$.
Secondly, the machine performs garbage collection on the store
$\sigma$. The following garbage collection function describes this:
\[
  \gc{\eta_L}= \eta_L \hspace{2cm} \gc{\eta_N\stick{\kappa}{v}\eta_L}= \eta_L
\]
Third, the machine may store delayed computation in $\sigma$, which we
account for by the order $\heaple$ on heaps and stores:
\begin{mathpar}
  \inferrule*
  {\eta(l) = \eta'(l) \text{ for all } l \in \dom{\eta}}
  {\eta \heaple \eta'}
  \and
  \inferrule*
  {\eta_N \heaple \eta_N'\\ \eta_L \heaple \eta_L'}
  {\eta_N\stick{\kappa}{v}\eta_L\heaple \eta_N'\stick{\kappa}{v}\eta_L'}
\end{mathpar}
That is, $\sigma \heaple \sigma'$ iff $\sigma'$ is obtained from
$\sigma$ by storing additional terms.

Finally, the machine may receive an input value $\kappa \mapsto v$,
which is captured by the following order $\tickle$ on stores:
\begin{mathpar}
  \inferrule*
  {\sigma \heaple \sigma'}
  {\sigma \tickle \sigma'}
  \and
  \inferrule*
  {\eta_L \heaple \eta_L'\\ \kappa :_{\capa} A \in \Delta\\ \hastype[]{}{v}{A}}
  {\eta_L \tickle \eta_N\stick{\kappa}{v}\eta_L'}
\end{mathpar}
That is, in addition to the allocations captured by $\heaple$, the
order may also introduce an input value $\kappa \mapsto v$.

Taken together, we can define the Kripke preorder $\lesssim$ on worlds
as follows:
\[
  (n, \sigma) \lesssim (n', \sigma') \quad \text{iff} \quad n \ge
  n'\text{ and } \sigma \tickle \sigma'
\]
This does not include garbage collection, as it is restricted to
certain circumstances. Indeed, the machine performs garbage collection
only at certain points of the execution, namely at the end of an
output transition.

Finally, before we can give the definition of the Kripke logical
relation, we need to semantically capture the notion of input
independence that is needed both for the operational semantics of
$\select$ and the signal independence properties
(Proposition~\ref{prop:bufSigIndependence} and
Theorem~\ref{thr:sigIndependence}). In essence, we need that a heap
location $l$ in the world $(n+1,\sigma)$ should still be present in
the future world $(n,\sigma')$ in which we received an input on
a channel $\kappa \nin l$. We achieve this by making the logical
relation $\sem{\DelayE A}(n,\sigma)$ satisfy the following clock
independence property:
\[
\text{If }  l \in \sem{\DelayE A}(n,\sigma), \text{ then } l \in \sem{\DelayE A}(n,\restr[\cl{l}]{\sigma})
\]
where $\restr[\Theta]{\sigma}$ restricts $\sigma$ to heap locations
whose clocks are \emph{subclocks} of $\Theta$:
\begin{align*}
  \restr\eta(l) &= \eta(l) \qquad \text{ if } \cl{l} \subseteq
  \Theta\\
  \restr{\eta_N\stick{\kappa}{v} \eta_L} &=
  \begin{cases}
    \restr{\eta_N}\stick{\kappa}{v} \restr{\eta_L}
    &\text{if } \kappa\in \Theta\\
    \restr{\eta_L}
        &\text{if } \kappa\nin \Theta
      \end{cases}
\end{align*}

The full definition of the Kripke logical relation is given in
Figure~\ref{fig:log-rel}. In addition to the aspects discussed above,
it is parameterised by the context $\Delta$ and distinguishes between
the value relation $\vinterp{A}{w}$ and the term relation
$\tinterp{A}{w}$. The two relations are defined by well-founded
recursion by the lexicographic ordering on the tuple
$(n,\tsize{A},e)$, where $\tsize{A}$ is the size of $A$ defined below,
and $e = 1$ for the term relation and $e = 0$ for the value relation.
\begin{align*}
  \tsize{\alpha}
  &= \tsize{\DelayE A} = \tsize{\DelayA A} = \tsize{1} = \tsize{\Nat} =  1\\
  \tsize{A \times B}
  &= \tsize{A + B}= \tsize{A \to B} = 1 +
  \tsize{A} + \tsize{B}\\
  \tsize{\Box A}
  &= \tsize{\Fix\, \alpha. A} =  1 + \tsize{A}
\end{align*}
Note that in the definition for $\vinterp{\DelayE A}{w}$ in
Figure~\ref{fig:log-rel}, we use the shorthand $\sigma(l)$ for
$\eta_L(l)$, where $\eta_L$ is the later heap of $\sigma$.

\begin{figure}[t]
  \small
  \begin{align*}
  \vinterp{1}{w}
  &= \set{\unit},
  \\
  \vinterp{\Nat}{w}
  &= \setcom{\suc^n \, \zero}{ n \in \nats},
  \\
  \vinterp{A \times B}{w}
  &= \setcom{\pair{v_1}{v_2}}{v_1 \in \vinterp{A}{w}
    \land v_2 \in \vinterp{B}{w}},
  \\
  \vinterp{A + B}{w}
  &= \setcom{\interm_1 \, v}{v \in
    \vinterp{A}{w}} \cup
  \setcom{\interm_2 \, v}{v \in
    \vinterp{B}{w}}
  \\
  \vinterp{A \to B}{n,\sigma}
  &= \setcom{ \lambda x.t}{
  \forall \sigma' \tickge[\Delta] \gc{\sigma}, n'
    \le n, v \in \vinterp[\Delta]{A}{n',\sigma'}. 
    t[v/x] \in \tinterp[\Delta]{B}{n',\sigma'}}
  \\
  \vinterp{\Box A}{n,\sigma} &= \setcom{\rbox\,t}{t \in \tinterp{A}{n,\emptyset}}
  \\
  \vinterp{\DelayAS A}{0,\sigma} &=
  \setcom{\dfix\,x. t}{\dfix\,x. t \text{ a closed term}}\\
  \vinterp{\DelayAS A}{n+1,\sigma}
  &= \setcom{\dfix\,x.t}{ t[\dfix\,x.t/x] \in
    \tinterp{A}{n,\emptyset}}\\
  \vinterp{\DelayES A}{0,\sigma} &= \locs[\Delta] \cup \setcom{\inp[\kappa]}{\kappa :_\capa A \in \Delta, \capa \in \{\cpush, \cbp\}} \\
  \vinterp{\DelayES A}{n+1,\sigma}
  &= \setcom{l\in \locs[\Delta]}{
    \forall  \kappa \in \cl{l},
    \hastype[]{}{v}{\Delta(\kappa)} .  \sigma (l) \in
    \tinterp{A}{n,\restr[\cl{l}]{\clin{\gc{\sigma}}\!\!\stick{\kappa}{v}\!\!\clnin{\gc{\sigma}}}}
  }
  \\
  &\cup \setcom{\inp[\kappa]}{\kappa :_\capa A \in \Delta, \capa \in \{\cpush, \cbp\}}
  \\
  \vinterp{\Fix\,\alpha. A}{w}
  &= \setcom{\into\,v}{v \in \vinterp{A[\DelayES(\Fix\,\alpha.A)/\alpha]}{w}}
  \\[1em]
  \tinterp{A}{n,\sigma}
  &= \setcom{t}{t \text{ closed, } \forall \iota : \Delta.\forall \sigma' \tickge[\Delta] \sigma.\exists \sigma'', v
    . \heval{t}{\sigma'}{v}{\sigma''}
    \land v \in \vinterp[\Delta]{A}{n,\sigma''}}
\end{align*}
\vspace{-0.5em}
\begin{align*}
    \cinterp{\cdot}{w}  &= \set{\ast}
  \\
  \cinterp{\Gamma,x:A}{w}
  &= \setcom{\gamma[x \mapsto v]}
  {\gamma \in \cinterp{\Gamma}{w},v
    \in\vinterp{A}{w}}
  \\
  \cinterp{\Gamma,\tick[\theta]}{n,\eta_N\stick{\kappa}{v}\eta_L}
  &= \setcom{\gamma \in \cinterp{\Gamma}{n+1,(\eta'_N,\clnin{\eta'_L})}}{
    \begin{aligned}
      & \restr{\eta_L}=\restr{\eta'_L},\restr{\eta_N}=\restr{\eta'_N},\\
      &\eta'_N \in
      \heaps[\kappa],\hastype[]{}{v}{\Delta(\kappa)},\kappa \in
      \Theta,\text{ and }\\&\Theta \subseteq \domp{\Delta}, \text{ where } \Theta = \cleval{\theta\gamma}
    \end{aligned}
  }
\end{align*}
\center{\textbf{Using}}
\begin{align*}
 \domp{\Delta} & = \setcom{\kappa}{ \exists A, \capa \in\{\cpush,\cbp\}. x :_{\capa} A \in \Delta} & 
 \locs[\Delta] & = \setcom{l \in \locs}{\cl l \subseteq \domp{\Delta}} 
\end{align*}
\caption{Logical relation.}
  \label{fig:log-rel}
\end{figure}

Our goal is to prove the fundamental property, i.e., that
$\hastype{}{t}{A}$ implies $t \in \tinterp{A}{n,\sigma}$, by induction
on the typing derivation. Therefore, we need to generalise the
fundamental property to open terms as well. That means we also need a
corresponding logical relation for contexts, which is given at
the bottom of Figure~\ref{fig:log-rel}. The interpretation of
$\tick[\theta]$ in a context is quite technical, but is essentially
determined by the interpretation of $\DelayE$ due to the requirement
of being left adjoint~\cite{birkedal2020modal}.

The three logical relations indeed satisfy the preservation under the
Kripke preorder $\lesssim$.
\begin{lemma}
  \label{lem:vinterp_weakening}
  Let $n \ge n'$ and $\sigma \tickle \sigma'$.
  \begin{enumerate}[(i)]
  \item
    $\vinterp{A}{n,\sigma} \subseteq
    \vinterp[\Delta]{A}{n',\sigma'}$.
  \item
    $\tinterp{A}{n,\sigma} \subseteq
    \tinterp[\Delta]{A}{n',\sigma'}$.
      \item
    $\cinterp{\Gamma}{n,\sigma} \subseteq
    \cinterp[\Delta]{\Gamma}{n',\sigma'}$.
  \end{enumerate}
\end{lemma}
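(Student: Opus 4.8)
The plan is to prove (i) and (ii) simultaneously by well-founded induction on the triple $(n,\tsize A,e)$ — the very ordering used to define $\mathcal V$ and $\mathcal T$ — and then to derive (iii) by a separate induction on the structure of $\Gamma$, invoking part (i) in the variable-extension case. Before starting, I would record a handful of routine closure properties of the store orders that get used throughout: $\heaple$ and $\tickle$ are both preorders with $\heaple \subseteq \tickle$; if the source of a $\tickle$-step is a two-heap store $\eta_N\stick\kappa v\eta_L$, then that step is necessarily a $\heaple$-step (the input rule of $\tickle$ has a single-heap source); $\sigma\tickle\sigma'$ implies $\gc\sigma\heaple\gc{\sigma'}$; and the operations $\clin{-}$, $\clnin{-}$, and $\restr[\Theta]{-}$ all preserve $\heaple$ and commute with the two-heap constructor in the expected way. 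Note also that the target world may have step index $0$, but $\vinterp{\DelayE A}{0,\sigma'}$ and $\vinterp{\DelayA A}{0,\sigma'}$ contain every closed term of the appropriate shape, so those sub-cases are trivial; hence it suffices to treat targets of the form $(n'+1,\sigma')$.

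For part (i), the base-type cases ($\Unit$, $\Nat$, $A\times B$, $A+B$) are immediate or follow from the induction hypothesis at a strictly smaller type. The case $A\to B$ is discharged directly from the definition, with no appeal to the induction hypothesis: given $\lambda x.t\in\vinterp{A\to B}{n,\sigma}$ and a challenge $\sigma''\tickge\gc{\sigma'}$, $n''\le n'$, $v\in\vinterp{A}{n'',\sigma''}$, we note $n''\le n'\le n$ and $\gc\sigma\heaple\gc{\sigma'}\tickle\sigma''$, so $\gc\sigma\tickle\sigma''$; feeding $(\sigma'',n'',v)$ as a challenge to the hypothesis yields $t[v/x]\in\tinterp{B}{n'',\sigma''}$, as required. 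The cases $\Box A$ and $\DelayA A$ do not mention $\sigma$ and reduce to part (ii) of the induction hypothesis at a strictly smaller argument ($\tsize A<\tsize{\Box A}$; strictly smaller step index for $\DelayA A$), using $\emptyset\tickle\emptyset$. The case $\Fix\,\alpha.A$ reduces, via the induction hypothesis at the unfolded type, to part (i) for $A[\DelayE(\Fix\,\alpha.A)/\alpha]$, which is strictly smaller in $\tsize\cdot$ because $\tsize{\DelayE(-)}=1=\tsize\alpha$.

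The substantive case of part (i) is $\DelayE A$ at a successor index. For $l\in\vinterp{\DelayE A}{n+1,\sigma}$ with $\cl l\neq\emptyset$ (the cases $\cl l=\emptyset$ and $l=\inp[\kappa]$ are immediate), non-triviality of the defining condition forces $l$ to lie in the later heap of $\sigma$, and since $\sigma\tickle\sigma'$ extends the later heap we get $\sigma'(l)=\sigma(l)$. Fixing $\kappa\in\cl l$ and $\hastype[]{}{v}{\Delta(\kappa)}$, the closure properties above give $\gc\sigma\heaple\gc{\sigma'}$, and then, using $\kappa\in\cl l$ together with the fact that $\clin{-}$, $\clnin{-}$, and $\restr[\cl l]{-}$ preserve $\heaple$, the restricted-and-split store built from $\gc\sigma$ is $\heaple$-below (hence $\tickle$-below) the one built from $\gc{\sigma'}$, i.e. $\restr[\cl l]{\clin{\gc\sigma}\stick\kappa v\clnin{\gc\sigma}}\heaple\restr[\cl l]{\clin{\gc{\sigma'}}\stick\kappa v\clnin{\gc{\sigma'}}}$. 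Part (ii) of the induction hypothesis, invoked at the strictly smaller step index $n$, then carries $\sigma(l)\in\tinterp{A}{n,\dots}$ to $\tinterp{A}{n',\dots}$, which is exactly what membership of $l$ in $\vinterp{\DelayE A}{n'+1,\sigma'}$ requires. With part (i) established for all types, part (ii) is then easy: for $t\in\tinterp{A}{n,\sigma}$ and a challenge $\sigma''\tickge\sigma'$, transitivity of $\tickle$ gives $\sigma''\tickge\sigma$, so the hypothesis yields $\heval[\iota]{t}{\sigma''}{v}{\sigma'''}$ with $v\in\vinterp{A}{n,\sigma'''}$, and part (i) of the induction hypothesis (at $(n,\tsize A,0)$, lexicographically below $(n,\tsize A,1)$) upgrades this to $v\in\vinterp{A}{n',\sigma'''}$.

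Finally, part (iii) is proved by induction on $\Gamma$. The empty case is trivial and the case $\Gamma,x:A$ follows from the induction hypothesis for $\Gamma$ together with part (i) for $A$. The case $\Gamma,\tick[\theta]$ is the most delicate: the source store must be a two-heap store $\eta_N\stick\kappa v\eta_L$, so $\sigma\tickle\sigma'$ is forced to be a $\heaple$-step to some $\eta_N^{*}\stick\kappa v\eta_L^{*}$ with $\eta_N\heaple\eta_N^{*}$ and $\eta_L\heaple\eta_L^{*}$, and given a witness heap pair for $\gamma\in\cinterp{\Gamma,\tick[\theta]}{n,\sigma}$ we must produce one at $(n'+1,\sigma')$. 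I would obtain it by transporting onto the witness heaps the locations that $\eta_N^{*}$ (resp.\ $\eta_L^{*}$) adds over $\eta_N$ (resp.\ $\eta_L$): since $\eta_N^{*}\in\heaps[\kappa]$ the transported ``now'' heap stays in $\heaps[\kappa]$; the $\restr[\Theta]{-}$-equalities are preserved because $\restr[\Theta]{-}$ leaves the common part untouched and copies the added part verbatim; the clock $\Theta=\cleval{\theta\gamma}$ and the clock side conditions are unchanged; and the transported pair is $\heaple$-above the original one, so the induction hypothesis for $\Gamma$ (at step index $n+1\ge n'+1$) applies. I expect this transport construction — in particular, ensuring the added locations can be copied onto the existentially chosen witness heaps without name clashes — to be the main technical obstacle; everything else is bookkeeping with the store-order closure properties.
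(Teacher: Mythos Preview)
Your proposal is correct and follows exactly the approach the paper sketches: parts (i) and (ii) by simultaneous well-founded induction on the same ordering $(n,\tsize A,e)$ that grounds the definition of the logical relations, and part (iii) by induction on the length of $\Gamma$ using part (i). The paper's own proof is a two-line sketch stating precisely this; you have supplied the details.

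One remark on your flagged ``main technical obstacle'': the name-clash worry in the tick case of (iii) dissolves on inspection. The added locations you copy from $\eta_N^{*}\setminus\eta_N$ into the witness heap $\eta_N'$ all satisfy $\cl l\subseteq\Theta$ (you only copy the $\restr[\Theta]{-}$ part, since that is all the constraint requires), and the hypothesis $\restr[\Theta]{\eta_N}=\restr[\Theta]{\eta_N'}$ together with $l\notin\dom{\eta_N}$ forces $l\notin\dom{\eta_N'}$; the same argument works for $\eta_L$. Disjointness of $\eta_N''$ and $\clnin{\eta_L''}$ follows because every location in $\eta_N''$ has $\kappa\in\cl l$ (both $\eta_N'$ and $\eta_N^{*}$ lie in $\heaps[\kappa]$). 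So the transport is straightforward and the induction hypothesis for $\Gamma$ applies via a plain $\heaple$-step on the combined single-heap store.
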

Preservation under garbage collection, however, only holds for values
and tick-free contexts:
\begin{lemma}(garbage collection)
  \label{lem:vinterp_gc}
  \begin{enumerate}[(i)]
  \item
    $\vinterp{A}{n,\sigma} \subseteq
    \vinterp{A}{n,\gc{\sigma}}$.
  \item
    $\cinterp{\Gamma}{n,\sigma} \subseteq
    \cinterp{\Gamma}{n,\gc{\sigma}}$ if $\Gamma$ is tick-free.
  \end{enumerate}
\end{lemma}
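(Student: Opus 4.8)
The plan is to establish (i) by a case analysis on the type $A$, organised as an induction on $\tsize A$, and then to read off (ii) by a short induction on the length of the tick-free context $\Gamma$, appealing to (i) at the variable-extension step.

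For (i), the observation that drives every case is that $\gc{\cdot}$ is idempotent and leaves the \emph{later} heap of a store unchanged: if $\sigma = \eta_L$ then $\gc\sigma = \sigma$ and there is nothing to prove, whereas if $\sigma = \eta_N\stick{\kappa}{v}\eta_L$ then $\gc\sigma = \eta_L$, whose later heap is again $\eta_L$, and $\gc{\gc\sigma} = \gc\sigma$. I would then go through the clauses of the value relation in Figure~\ref{fig:log-rel}, grouping them by how the store occurs. In the clauses for $\Unit$, $\Nat$, $\Box A$, $\DelayA A$, and $\vinterp{\DelayE A}{0,\sigma}$ the store is not mentioned at all — the $\Box$ and $\DelayA$ clauses only use $\tinterp{A}{n,\emptyset}$ — so the two sides are literally equal. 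In the clause for $A \to B$ the store occurs only in the side condition $\sigma' \tickge \gc\sigma$; replacing $\sigma$ by $\gc\sigma$ turns this into $\sigma' \tickge \gc{\gc\sigma} = \gc\sigma$, the same condition, so again the two sides are equal. In the clause for $\vinterp{\DelayE A}{n+1,\sigma}$ the store occurs only via the shorthand $\sigma(l) = \eta_L(l)$ and via $\gc\sigma$ inside the step-indexed requirement $\tinterp{A}{n,\restr[\cl l]{\clin{\gc\sigma}\stick{\kappa}{v}\clnin{\gc\sigma}}}$, both of which are invariant under $\gc{\cdot}$ by the observation above — so this clause also yields an equality, and it needs no induction hypothesis about the term relation (which is fortunate, since the term relation is \emph{not} preserved by garbage collection: its defining clause quantifies over $\sigma' \tickge \sigma$, not $\sigma' \tickge \gc\sigma$). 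Finally, in the clauses for $A \times B$, $A + B$, and $\Fix\,\alpha.A$ the set is built from $\vinterp{\cdot}{n,\sigma}$ at strictly smaller types — note $\tsize{A[\DelayE(\Fix\,\alpha.A)/\alpha]} = \tsize A < \tsize{\Fix\,\alpha.A}$ since $\tsize{\DelayE B} = 1$ for every $B$ — so the induction hypothesis applies componentwise and the inclusion follows by monotonicity of the pairing, injection, and $\into$ constructions.

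For (ii), I would induct on $\Gamma$. The empty context gives $\set\ast$ on both sides. For $\Gamma = \Gamma', x : A$ with $\Gamma'$ tick-free, the induction hypothesis gives $\cinterp{\Gamma'}{n,\sigma} \subseteq \cinterp{\Gamma'}{n,\gc\sigma}$, part (i) gives $\vinterp{A}{n,\sigma} \subseteq \vinterp{A}{n,\gc\sigma}$, and since $\cinterp{\Gamma',x:A}{n,\sigma} = \setcom{\gamma[x\mapsto v]}{\gamma \in \cinterp{\Gamma'}{n,\sigma}, v \in \vinterp{A}{n,\sigma}}$, the inclusion is immediate. The tick-freeness hypothesis is genuinely needed here: the clause for $\cinterp{\Gamma,\tick[\theta]}{n,\eta_N\stick{\kappa}{v}\eta_L}$ refers explicitly to the two-heap store and its $\stick{\kappa}{v}$ component, and constrains $\eta'_N \in \heaps[\kappa]$, $\kappa \in \cleval{\theta\gamma}$, and so on — precisely the information that $\gc{\cdot}$ discards — so no analogue of the inclusion can hold once a tick appears in the context.

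The argument is essentially bookkeeping over the clauses of the logical relation, so I do not expect a substantial obstacle. The one step that warrants care is the $\DelayE$ clause of (i): one must notice that the store enters the clause only through its later heap and through $\gc\sigma$, so that the clause is in fact invariant — not merely monotone — under garbage collection, and that consequently no fact about the (non-$\gc$-stable) term relation is required.
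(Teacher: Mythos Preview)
Your proposal is correct and follows exactly the approach the paper indicates: induction on $\tsize A$ for (i) and on the length of $\Gamma$ for (ii). The paper's own proof is a one-liner (``Both items are proved by induction on the size $\tsize{A}$ and on the length of $\Gamma$, respectively''), so your write-up in fact supplies the details the paper omits, including the key observation that the $A \to B$ and $\DelayE A$ clauses only access $\sigma$ through $\gc\sigma$ and the later heap, making them literally invariant under garbage collection.
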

Moreover, the clock independence property holds for both the value
and context relations:
\begin{lemma}~%
  \label{lem:vinterp_restr}%
  \begin{enumerate}[(i)]
  \item If $v \in \vinterp{\DelayE A}{n,\sigma}$, then
    $v \in \vinterp{\DelayE A}{n,\sigma'}$, for any $\sigma'$ 
    with $\restr[\cl v]{\sigma} = \restr[\cl v]{\sigma'}$.
  \item If $\gamma \in \cinterp{\Gamma,\tick[\theta]}{n,\sigma}$,
    then $\gamma \in \cinterp{\Gamma,\tick[\theta]}{n,\restr[\cleval{\theta\gamma}]\sigma}$
  \end{enumerate}
\end{lemma}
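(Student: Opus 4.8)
The plan is to reduce both items to a handful of elementary, context-free identities about the restriction operator $\restr[\Theta]{\sigma}$, the garbage collector $\gc\sigma$, and the heap-splitting maps $\clin\eta,\clnin\eta$. The preliminary facts I would establish first, each by a routine case split on whether a store is single- or two-heap and then on clock membership in $\Theta$, are: (a) idempotence, $\restr[\Theta]{\restr[\Theta]{\sigma}}=\restr[\Theta]{\sigma}$; (b) commutation with garbage collection, $\restr[\Theta]{\gc\sigma}=\gc{\restr[\Theta]{\sigma}}$; (c) since $\clin\eta$, $\clnin\eta$ and $\restr[\Theta]{\eta}$ are all domain restrictions of a heap they commute with one another, and for a two-heap store $\restr[\Theta]{\eta_N\stick\kappa v\eta_L}=\restr[\Theta]{\eta_N}\stick\kappa v\restr[\Theta]{\eta_L}$ whenever $\kappa\in\Theta$; (d) if $\cl l\subseteq\Theta$ then the later heap of $\restr[\Theta]{\sigma}$ sends $l$ to $\sigma(l)$ (the shorthand from Figure~\ref{fig:log-rel}), so from $\restr[\cl l]{\sigma}=\restr[\cl l]{\sigma'}$ it follows that $l$ lies in the domain of the later heap of $\sigma$ iff it lies in that of $\sigma'$, and then $\sigma(l)=\sigma'(l)$.

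For item (i) the cases $n=0$ and $v=\inp[\kappa]$ are immediate, since membership in $\vinterp{\DelayE A}{n,\sigma}$ then does not refer to $\sigma$ at all. In the remaining case $v=l\in\locs[\Delta]$ and $n=m+1$, the hypothesis says $\sigma(l)\in\tinterp{A}{m,\restr[\cl l]{\clin{\gc\sigma}\stick\kappa w\clnin{\gc\sigma}}}$ for every $\kappa\in\cl l$ and $\hastype[]{}{w}{\Delta(\kappa)}$. By (d) we get $\sigma(l)=\sigma'(l)$; and since $\kappa\in\cl l$, fact (c) lets me rewrite the world-index as $\restr[\cl l]{\clin{\gc\sigma}}\stick\kappa w\restr[\cl l]{\clnin{\gc\sigma}}$, after which (b), (c) and the assumption $\restr[\cl l]{\sigma}=\restr[\cl l]{\sigma'}$ give $\restr[\cl l]{\clin{\gc\sigma}}=\clin{\restr[\cl l]{\gc\sigma}}=\clin{\gc{\restr[\cl l]{\sigma}}}=\clin{\gc{\restr[\cl l]{\sigma'}}}=\restr[\cl l]{\clin{\gc{\sigma'}}}$, and likewise for $\clnin$. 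Hence the worlds fed to $\tinterp{A}{m,\cdot}$ at $\sigma$ and at $\sigma'$ coincide on the nose, so the membership condition for $l$ is literally identical and $l\in\vinterp{\DelayE A}{m+1,\sigma'}$.

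For item (ii) we may assume $\sigma=\eta_N\stick\kappa v\eta_L$, since the context relation is empty on single-heap stores. Put $\Theta=\cleval{\theta\gamma}$. The clock computed in the definition of $\cinterp{\Gamma,\tick[\theta]}{n,\restr[\Theta]{\sigma}}$ is again $\Theta$ because $\gamma$ is unchanged, and $\kappa\in\Theta$ holds (this is part of what it means for $\gamma$ to lie in the relation at $\sigma$), so by (c) the store $\restr[\Theta]{\sigma}$ is the two-heap store with now-heap $\restr[\Theta]{\eta_N}$ and later-heap $\restr[\Theta]{\eta_L}$. Unfolding the hypothesis at $\sigma$ yields witnesses $\eta'_N,\eta'_L$ with $\gamma\in\cinterp{\Gamma}{n+1,(\eta'_N,\clnin{\eta'_L})}$, $\restr[\Theta]{\eta_N}=\restr[\Theta]{\eta'_N}$, $\restr[\Theta]{\eta_L}=\restr[\Theta]{\eta'_L}$, $\eta'_N\in\heaps[\kappa]$, and $\hastype[]{}{v}{\Delta(\kappa)}$, $\kappa\in\Theta\subseteq\domp\Delta$. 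I would use the very same witnesses $\eta'_N,\eta'_L$ for $\restr[\Theta]{\sigma}$: the only requirements that change are $\restr[\Theta]{\restr[\Theta]{\eta_N}}=\restr[\Theta]{\eta'_N}$ and $\restr[\Theta]{\restr[\Theta]{\eta_L}}=\restr[\Theta]{\eta'_L}$, which follow from the displayed equalities by idempotence (a); everything else is unchanged.

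The only place that calls for genuine care is the computation in item (i) showing that $\restr[\cl l]{\clin{\gc\sigma}\stick\kappa w\clnin{\gc\sigma}}$ depends on $\sigma$ solely through $\restr[\cl l]{\sigma}$ --- this is where (b) and (c) get chained together, and where one must be careful that $\clin{},\clnin{}$ and $\restr{}$ are being applied in a compatible order. Facts (a)--(d) are each a direct case analysis, and item (ii) is pure bookkeeping once the witnesses are chosen. In particular I do not expect to need the well-founded recursion defining the logical relation: in both items the conclusion is obtained by exhibiting coinciding worlds rather than by re-invoking the lemma at a smaller index.
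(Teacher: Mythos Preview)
Your proposal is correct and takes essentially the same approach as the paper: the paper's proof is the single sentence ``by inspection of the definitions of $\vinterp{\DelayE A}{n,\sigma}$ and $\cinterp{\Gamma,\tick[\theta]}{n,\sigma}$, respectively,'' and what you have written is precisely that inspection carried out in full. Your preliminary facts (a)--(d) are exactly the elementary store identities the definitions are built so as to exploit (your (b) and the two-heap half of (c) are the paper's Lemma~\ref{lem:heapOperations}), and your observation that the world fed to $\tinterp{A}{m,\cdot}$ depends on $\sigma$ only through $\restr[\cl l]{\sigma}$, together with reusing the same witnesses $\eta'_N,\eta'_L$ in part (ii), is the intended content of ``by inspection.''
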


Finally, we obtain the soundness of the language by the following
fundamental property of the logical relation:
\begin{theorem}
  \label{thr:lrl}
  Given $\wfcxt{\Gamma}$, $\hastype{\Gamma}{t}{A}$, and
  $\gamma \in \cinterp{\Gamma}{n,\sigma}$, then
  $t\gamma \in \tinterp{A}{n,\sigma}$.
\end{theorem}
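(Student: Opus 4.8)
The plan is to prove the fundamental property (\autoref{thr:lrl}) by induction on the derivation of $\hastype{\Gamma}{t}{A}$, keeping the input context $\Delta$ fixed throughout, with a nested induction on the step index $n$ for the $\fix$ rule and a nested induction on the numeral for the recursor on $\Nat$. The generic pattern, dictated by the shape of $\tinterp{A}{n,\sigma}$, is: to show $t\gamma\in\tinterp{A}{n,\sigma}$, fix an input buffer $\iota$ and a store $\sigma'\tickge\sigma$, apply the evaluation rule of \autoref{fig:machine} for the head symbol of $t$, and evaluate the subterms in turn, invoking the induction hypothesis at the store reached so far. This relies on the elementary observation that evaluation only grows the store (so $\sigma'\heaple\sigma''$, hence $\sigma'\tickle\sigma''$, whenever $\heval{s}{\sigma'}{v}{\sigma''}$), on transitivity of $\tickle$, and on \autoref{lem:vinterp_weakening} to transport the values produced by earlier subterms into the larger worlds needed by later ones.

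Under this pattern the non-modal fragment is essentially routine: products, sums, $\lambda$-abstraction and application (the latter instantiating the future-store quantifier inside $\vinterp{A\to B}{n,\sigma_1}$ at the post-argument store, which is legitimate because $\gc{\sigma_1}\tickle\sigma_1$), $\letterm{x}{s}{t}$, $\suc$, the recursor on $\Nat$, and the pair $\into/\out$ for $\Fix\,\alpha.A$; the $\inp[\kappa]$ and $\buf[\kappa]$ cases additionally use that the term relation quantifies only over well-typed input buffers $\iota:\Delta$ and the trivial fact that a closed value of value type lies in the value relation. The $\rbox$ and $\unbox$ cases reduce to $\vinterp{\Box A}{n,\sigma}=\setcom{\rbox\,t}{t\in\tinterp{A}{n,\emptyset}}$ together with the observation that $\gamma\in\cinterp{\Gamma}{n,\sigma}$ restricts to an element of $\cinterp{\stabilize\Gamma}{n,\emptyset}$, which rests on the straightforward auxiliary lemma that for a stable type $S$ the relation $\vinterp{S}{m,\tau}$ is independent of $\tau$ and (by \autoref{lem:vinterp_weakening}) antitone in $m$. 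That same auxiliary lemma handles the variable rule when the bound variable sits to the left of a $\tick$: the context relation there only records membership at world $(n+1,\cdot)$, but stability lets us descend to the current index $n$ and the current store.

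The delicate cases are those for $\DelayE$ and the guarded fixed point, where the three preservation lemmas must be orchestrated carefully. For $\delay_\theta\,t$ we evaluate to the fresh location $l=\allocate[\cleval{\theta\gamma}]{\sigma'}$ with $l\mapsto t\gamma$ appended, and must put $l$ in $\vinterp{\DelayE A}{n,(\sigma',l\mapsto t\gamma)}$; for $n>0$, writing $\sigma=(\sigma',l\mapsto t\gamma)$ and unfolding that clause, we must show $\sigma(l)=t\gamma$ lies in $\tinterp{A}{n-1,\restr[\cl l]{\clin{\gc\sigma}\stick{\kappa}{v}\clnin{\gc\sigma}}}$ for every $\kappa\in\cl l$ and every $v$ with $\hastype[]{}{v}{\Delta(\kappa)}$. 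This follows from the induction hypothesis for $\hastype{\Gamma,\tick[\theta]}{t}{A}$ once we exhibit the unchanged $\gamma$ as an element of $\cinterp{\Gamma,\tick[\theta]}{n-1,\cdot}$ at a store that, under $\restr[\cl l]{\cdot}$, agrees with the one above --- which is exactly the data the (admittedly technical) $\tick$-clause of the context relation packages, derived from $\gamma\in\cinterp{\Gamma}{n,\sigma'}$, the tick-freeness of $\Gamma$, and \autoref{lem:vinterp_gc}. Dually, for $\adv\,v$ and $\select\,v_1\,v_2$ the $\tick$-clause of the context relation hands us precisely a two-heap store $\eta_N\stick{\kappa}{v}\eta_L$ together with $\kappa\in\cleval{\theta\gamma}$, so the matching evaluation rules apply; the three-way contingency of $\select$ mirrors the three summands of its result type, and in the branches where only one side fires we invoke \autoref{lem:vinterp_restr} to discard the part of the store irrelevant to the value that did not. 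For $\fix\,x.t$ I would run a Löb-style induction on $n$: since $\heval{\fix\,x.t\gamma}{\sigma'}{v}{\sigma''}$ holds iff $\heval{t\gamma[\dfix\,x.t\gamma/x]}{\sigma'}{v}{\sigma''}$ does, it suffices to place $\dfix\,x.t\gamma$ in $\vinterp{\DelayA A}{n,\emptyset}$, whose clause at $n+1$ only constrains step $n$; the inner induction hypothesis supplies this, and the premise $\hastype{\stabilize\Gamma,x:\DelayA A}{t}{A}$ together with the projection of $\gamma$ onto $\stabilize\Gamma$ closes the recursion.

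I expect the main obstacle to be exactly the $\delay_\theta$/$\adv$/$\select$ cluster: one must verify that the store one can legitimately feed to the induction hypothesis for the body coincides, after $\restr[\cl l]{\cdot}$, with the store literally named in the definition of $\vinterp{\DelayE A}{n+1,\sigma}$, and that the $\tick$-clause of the context relation is simultaneously weak enough to follow from $\gamma\in\cinterp{\Gamma}{n,\sigma'}$ and strong enough to drive that hypothesis. Threading \autoref{lem:vinterp_weakening}, \autoref{lem:vinterp_gc} and \autoref{lem:vinterp_restr} through this, while tracking the side conditions on $\allocate{\cdot}$ --- freshness of $l$ and $\cl l=\cleval{\theta\gamma}\subseteq\domp\Delta$ --- is where the real work lies, and I would expect it to occupy the bulk of a full write-up.
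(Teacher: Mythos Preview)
Your proposal is correct and follows essentially the same approach as the paper: structural induction on the typing derivation, with inner inductions for $\fix$ (on the step index) and for the $\Nat$-recursor (on the numeral), threading Lemmas~\ref{lem:vinterp_weakening}, \ref{lem:vinterp_gc}, and \ref{lem:vinterp_restr} through the modal cases. Your identification of the $\delay$/$\adv$/$\select$ cluster as the crux and your sketch of how the $\tick$-clause of the context relation mediates between the induction hypothesis and the unfolded clause of $\vinterp{\DelayE A}{n+1,\sigma}$ match the paper's argument; the paper additionally factors out a small auxiliary lemma (its Lemma~\ref{lem:adv}) that packages the ``given $v\in\vinterp{\DelayE A}{n+1,(\eta_N,\clnin{\eta_L})}$ and $\kappa\in\cleval{\cl v}$, $\adv\,v$ evaluates into $\vinterp{A}{n,\cdot}$'' step, and uses \autoref{lem:vinterp_restr} once more in the $\adv$ case itself (not only in the non-firing branch of $\select$) to pass from the $\eta'_N,\eta'_L$ witnesses recorded in the $\tick$-clause back to the actual $\eta_N,\eta_L$ of the current store.
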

The proof is a standard induction on the typing relation
$\hastype{\Gamma}{t}{A}$ that makes use of the aforementioned closure
properties of the logical relations and is included in
Appendix~\ref{sec:proof-fund-prop}.

\subsection{Operational Properties}

We close this section by showing how we can use the fundamental
property to prove the operational properties presented in
section~\ref{sec:main-results}. To this end, we will sketch the proofs
of Theorem~\ref{thr:productivity},
Proposition~\ref{prop:bufSigIndependence}, and
Theorem~\ref{thr:sigIndependence}.

\subsubsection{Productivity}

In the following we assume a fixed reactive interface
$\Delta \Rightarrow \Gammaout$, for which we define the following sets
of machine states $\pIState{n}$ and $\pState{n}$ for the reactive
semantics:
\begin{align*}
  \pIState{n} &=\setcom{\stateI{t}}{\iota : \Delta \land t \in
    \tinterp{\Prod{\Gammaout}}{n,\emptyset}}\\
  \pState{n} &=\setcom{\stateS{N}{\eta}}{\iota : \Delta \land
    \forall x \mapsto l \in N.\exists x :
    A \in \Gammaout.l \in
    \vinterp{\DelayE (\Sig\,A)}{n,\eta}}
\end{align*}

The following lemma proves that the machine stays inside the sets of
states defined above and will only produce well-typed outputs. For the
latter, we make use of the fact that $v \in \vinterp{A}{n,\sigma}$ iff
$\hastype[]{}{v}{A}$ for every value type $A$.

\begin{lemma}[productivity]~
  \label{lem:productivity}
  \begin{enumerate}[(i)]
  \item If $t : \Delta \Rightarrow \Gammaout$ and $\iota : \Delta$, then
    $\stateI{t} \in \pIState{n}$ for all $n \in \nats$.%
    \label{item:productivityA}
  \item If $\stateI{t} \in \pIState{n}$, then there is a transition
    $\stateI{t} \forward{O} \stateS{N}{\eta}$ such that
    $\stateS{N}{\eta} \in \pState{n}$ and $\vdash O : \Gammaout$.%
    \label{item:productivityB}
  \item If $\stateS{N}{\eta} \in \pState{n+1}$ and
    $\vdash \kappa \mapsto v : \Delta$, then there is a sequence of
    two transitions
    \[\stateS{N}{\eta} \forward{\kappa \mapsto
      v}\stateS[\iota']{N}{\sigma}\forward{O} \stateS[\iota']{N'}{\eta'}\] such
    that $\stateS[\iota']{N'}{\eta'} \in \pState{n}$ and
    $\vdash O : \Gammaout$.
    \label{item:productivityC}
  \end{enumerate}
\end{lemma}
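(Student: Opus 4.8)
The plan is to prove the three items by unfolding the definitions of $\pIState{n}$, $\pState{n}$ and the logical relations, and invoking the fundamental property (Theorem~\ref{thr:lrl}) together with the three closure lemmas (Lemmas~\ref{lem:vinterp_weakening}, \ref{lem:vinterp_gc} and~\ref{lem:vinterp_restr}). Two preliminary facts will be used freely: first, that evaluation only grows the store and never alters its now-heap, i.e.\ $\heval{t}{\sigma}{v}{\tau}$ implies $\sigma \heaple \tau$ with $\tau$ having the same now-heap as $\sigma$ (a routine induction on the evaluation derivation); second, that for a value type $A$ one has $v \in \vinterp{A}{n,\sigma}$ iff $\hastype{}{v}{A}$, which lets us read off well-typedness of the produced outputs. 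Item~(i) is then immediate: $t : \Delta \Rightarrow \Gammaout$ means $\hastype{}{t}{\Prod{\Gammaout}}$, and since $\ast \in \cinterp{\cdot}{n,\emptyset}$, Theorem~\ref{thr:lrl} gives $t \in \tinterp{\Prod{\Gammaout}}{n,\emptyset}$; with the hypothesis $\iota : \Delta$ this is exactly $\stateI{t} \in \pIState{n}$.

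For item~(ii), unfold $t \in \tinterp{\Prod{\Gammaout}}{n,\emptyset}$ at the buffer $\iota$ and the store $\emptyset$ (trivially $\emptyset \tickle \emptyset$) to obtain an evaluation $\heval{t}{\emptyset}{v}{\eta}$ with $v \in \vinterp{\Prod{\Gammaout}}{n,\eta}$. Writing $\Gammaout = x_1 : A_1, \dots, x_m : A_m$, so that $\Prod{\Gammaout} = \Sig\,A_1 \times \dots \times \Sig\,A_m$ and $\vinterp{\Sig\,A_i}{n,\eta} = \setcom{v_i \cons l_i}{v_i \in \vinterp{A_i}{n,\eta},\ l_i \in \vinterp{\DelayE(\Sig\,A_i)}{n,\eta}}$, the value $v$ has exactly the shape $\tuple{v_1 \cons l_1, \dots, v_m \cons l_m}$ required by rule \textsc{init}; moreover $\eta$ is a single-heap store since $\emptyset$ is and evaluation introduces no $\stick{\kappa}{v}$. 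Firing \textsc{init} yields the transition $\stateI{t} \forward{O} \stateS{x_1 \mapsto l_1, \dots, x_m \mapsto l_m}{\eta}$ with $O = (x_1 \mapsto v_1, \dots, x_m \mapsto v_m)$; $O$ is well-typed because each $A_i$ is a value type, and the target state lies in $\pState{n}$ by the memberships $l_i \in \vinterp{\DelayE(\Sig\,A_i)}{n,\eta}$.

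Item~(iii) carries the weight. The \textsc{input} transition is always enabled and deterministic, sending $\stateS{N}{\eta}$ to $\stateS[\iota']{N}{\sigma_0}$ with $\sigma_0 = \clin{\eta}\stick{\kappa}{v}\clnin{\eta}$ and $\iota' : \Delta$ (since $\vdash \kappa \mapsto v : \Delta$ and $\iota : \Delta$). For the ensuing output transition I would induct on the list of entries of $N = x_1 \mapsto l_1, \dots, x_m \mapsto l_m$ still to be processed, maintaining the invariant that the running store $\sigma$ has now-heap $\clin{\eta}$ and a later-heap extending $\clnin{\eta}$, hence $\sigma_0 \heaple \sigma$ (by the store-growth fact). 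At a step $x_i \mapsto l_i$, the hypothesis $\stateS{N}{\eta} \in \pState{n+1}$ supplies $l_i \in \vinterp{\DelayE(\Sig\,A_i)}{n+1,\eta}$. If $\kappa \nin \cl{l_i}$, rule \textsc{output-skip} applies, producing no output and leaving $\sigma$ unchanged. If $\kappa \in \cl{l_i}$, instantiating the defining clause of $\vinterp{\DelayE(\Sig\,A_i)}{n+1,\eta}$ with the given $v$ yields $\eta(l_i) \in \tinterp{\Sig\,A_i}{n, \restr[\cl{l_i}]{\sigma_0}}$; since $\restr[\cl{l_i}]{\sigma_0} \heaple \sigma_0 \heaple \sigma$ (whence $\sigma \tickge \restr[\cl{l_i}]{\sigma_0}$), the term relation, instantiated at $\iota'$, provides an evaluation $\heval{\eta(l_i)}{\sigma}{v_i' \cons l_i'}{\sigma'}$ with $v_i' \in \vinterp{A_i}{n,\sigma'}$ and $l_i' \in \vinterp{\DelayE(\Sig\,A_i)}{n,\sigma'}$; using that the now-heap of $\sigma$ is still $\clin{\eta}$, this is the premise $\heval{\adv\,l_i}{\sigma}{v_i' \cons l_i'}{\sigma'}$ of rule \textsc{output-compute}, which emits the well-typed output $x_i \mapsto v_i'$ and continues with store $\sigma'$. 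When the list is exhausted, rule \textsc{output-end} fires on $\stateS{\cdot}{\clin{\eta}\stick{\kappa}{v}\eta_L}$ (call this final store $\sigma_\star$), discarding the now-heap and the input value and leaving $\stateS[\iota']{N'}{\eta_L}$, where $N'$ records $l_i^* = l_i$ for each skipped entry and $l_i^* = l_i'$ for each computed one, and the accumulated output is well-typed.

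The remaining point, which I expect to be the main obstacle, is that $\stateS[\iota']{N'}{\eta_L} \in \pState{n}$, i.e.\ $l_i^* \in \vinterp{\DelayE(\Sig\,A_i)}{n,\eta_L}$ for every $i$; this is precisely where the soundness of the aggressive garbage collection must be argued. For a skipped entry, $l_i^* = l_i$ and, since $\kappa \nin \cl{l_i}$, the $\cl{l_i}$-restriction of $\eta$ is unaffected by removal of the now-heap, so $l_i \in \vinterp{\DelayE(\Sig\,A_i)}{n+1,\eta}$ transports to $l_i \in \vinterp{\DelayE(\Sig\,A_i)}{n,\eta_L}$ using Lemma~\ref{lem:vinterp_restr}(i) (passing to the $\cl{l_i}$-restriction and back, via idempotence of $\restr[\Theta]{\cdot}$) together with Lemma~\ref{lem:vinterp_weakening}(i) (for the step $n+1 \ge n$ and the allocations added to the later heap). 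For a computed entry, $l_i^* = l_i'$ with $l_i' \in \vinterp{\DelayE(\Sig\,A_i)}{n,\sigma'}$ at the store $\sigma'$ produced by its \textsc{output-compute} evaluation; weakening along $\sigma' \heaple \sigma_\star$ gives $l_i' \in \vinterp{\DelayE(\Sig\,A_i)}{n,\sigma_\star}$, and then Lemma~\ref{lem:vinterp_restr}(i) replaces $\sigma_\star$ by $\restr[\cl{l_i'}]{\sigma_\star}$ — which, in the case $\kappa \in \cl{l_i'}$, is a two-heap store whose garbage collection $\gc{\restr[\cl{l_i'}]{\sigma_\star}}$ equals $\restr[\cl{l_i'}]{\eta_L}$, so Lemma~\ref{lem:vinterp_gc}(i) lands us in $\vinterp{\DelayE(\Sig\,A_i)}{n,\restr[\cl{l_i'}]{\eta_L}}$, whence Lemma~\ref{lem:vinterp_restr}(i) once more yields $l_i' \in \vinterp{\DelayE(\Sig\,A_i)}{n,\eta_L}$ (the case $\kappa \nin \cl{l_i'}$ is the same but needs no garbage-collection step). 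The places demanding care are keeping the step index aligned — it drops from $n+1$ to $n$ exactly once, at the \textsc{output-compute} evaluations, through the $\DelayE$-clause of the value relation — and verifying the restriction and garbage-collection identities on stores on the nose; the rest is routine bookkeeping.
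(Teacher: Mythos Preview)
Your proposal is correct and follows essentially the same approach as the paper: item~(i) via the fundamental property, item~(ii) by unfolding the term relation at the empty store, and item~(iii) by induction over the entries of $N$ using the closure lemmas. Two minor points: in~(ii) the paper explicitly argues that each tail component must be a heap location (not some $\inp$) because $\Sig\,A_i$ is not a value type and hence cannot occur in $\Delta$, which you use only implicitly; and in~(iii) your detour through Lemma~\ref{lem:vinterp_restr} for the computed-entry garbage-collection step is unnecessary, since applying Lemma~\ref{lem:vinterp_gc}(i) directly to $\sigma_\star$ already yields $l_i' \in \vinterp{\DelayE(\Sig\,A_i)}{n,\gc{\sigma_\star}} = \vinterp{\DelayE(\Sig\,A_i)}{n,\eta_L}$.
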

\begin{proof}
  ~
  \begin{enumerate}[(i)]
  \item We need to show that
    $t \in \tinterp{\Prod{\Gammaout}}{n,\emptyset}$.  Since
    $t : \Delta \Rightarrow \Gammaout$, we know that
    $\hastype[\Delta]{}{t}{\Prod{\Gammaout}}$. Hence, by
    Theorem~\ref{thr:lrl}, we have that
    $t \in \tinterp{\Prod{\Gammaout}}{n,\emptyset}$.
  \item Let $\stateI{t} \in \pIState{n}$. We thus have
    $t \in \tinterp{\Prod\Gammaout}{n,\emptyset}$. Therefore, by definition, we have
    $\heval{t}{\emptyset}{\tuple{v_1\cons w_1, \dots, v_k \cons
        w_k}}{\eta}$ with $v_i \in \vinterp{A_i}{n,\eta}$ and
    $w_i \in \vinterp{\DelayE (\Sig\,A_i)}{n,\eta}$. Since $\Sig\,A_i$
    is not a value type, we know that each $w_i$ must be a heap
    location, so we can write $l_i$ for $w_i$. Hence, by definition,
    $\stateI{t} \forward{x_1 \mapsto v_1,\dots,x_k\mapsto v_k}
    \stateS{x_1 \mapsto l_1,\dots,x_k\mapsto l_k}{\eta}$ and
    $\stateS{x_1 \mapsto l_1,\dots,x_k\mapsto l_k}{\eta} \in
    \pState{n}$. 
    Since each $A_i$ is a value type, we know that
    $v_i \in \vinterp{A_i}{n,\eta}$ implies $\hastype[]{}{v_i}{A_i}$
    and thus
    $\vdash x_1 \mapsto v_1,\dots,x_k\mapsto v_k : \Gammaout$.

\item By definition
  $\stateS{N}{\eta} \forward{\kappa \mapsto v}\stateS[\iota']{N}{\sigma}$, where
  $\sigma = \clin{\eta} \stick{\kappa}{v} \clnin\eta$. We claim the following three statements:
  \begin{align}
    &\text{If } y : B \in \Gammaout, N(y) = l, \kappa \in \cl{l}, \text{ then } \adv\,l \in
    \tinterp{\Sig\,B}{n,\sigma}\label{eq:productivityC1}\\
    &\text{If } y : B \in \Gammaout, N(y) = l, \kappa \nin \cl{l}, \text{ then } l \in
    \vinterp{\DelayE (\Sig\,B)}{n,\sigma}
    \label{eq:productivityC2}\\
    &\text{If } \sigma' \heapge \sigma, N_1 \subseteq N, \text{ then }
    \stateS[\iota']{N_1}{\sigma'} \forward{O} \stateS[\iota']{N_2}{\eta'},
    \stateS[\iota']{N_2}{\eta'} \in \pState{n}, \gc{\sigma'} \heaple \eta', \vdash O :
    \Gammaout
    \label{eq:productivityC3}
  \end{align}
  From \eqref{eq:productivityC3}, we then obtain that
  $\stateS[\iota']{N}{\sigma} \forward{O} \stateS[\iota']{N'}{\eta'}$, with
  $\stateS[\iota']{N'}{\eta'} \in \pState{n}$ and $\vdash O : \Gammaout$. We conclude
  this proof by proving the above three claims:
  \begin{itemize}
  \item Proof of \eqref{eq:productivityC1}. Since
    $\stateS{N}{\eta} \in \pState{n+1}$, we know that
    $l \in \vinterp{\DelayE(\Sig\,B)}{n+1,\eta}$ and thus
    $\adv\,l \in \tinterp{\Sig\,B}{n,\sigma}$.
  \item Proof of \eqref{eq:productivityC2}. Since
    $\stateS{N}{\eta} \in \pState{n+1}$, we know that
    $l \in \vinterp{\DelayE(\Sig\,B)}{n+1,\eta}$. 
    Since $\kappa \nin \cl{l}$, we can conclude that
    $\restr[\cl l]{\eta} = \restr[\cl l]{\clnin{\eta}}$. In turn, by
    Lemma~\ref{lem:vinterp_restr}, this means that
    $l \in \vinterp{\DelayE(\Sig\,B)}{n +1,\clnin{\eta}}$,
    which by Lemma~\ref{lem:vinterp_weakening} implies that
    $l \in \vinterp{\DelayE(\Sig\,B)}{n,\sigma}$.
  \item Proof of \eqref{eq:productivityC3}. We proceed by induction on
    $N_1$. The case $N_1 = \cdot$ is trivial.  For the case
    $N_1 = y \mapsto w, N'_1$, we distinguish between the case where
    $\kappa \in \cl w$ and the case where $\kappa \nin \cl w$. Then we
    can apply \eqref{eq:productivityC1} or \eqref{eq:productivityC2},
    respectively, and use the induction hypothesis for $N'_1$ and some
    $\sigma'' \heapge \sigma'$. If $\kappa \nin \cl w$, then
    $\sigma''$ is just $\sigma'$. If $\kappa \in \cl w$, then
    $\sigma''$ arises from the evaluation
    $\heval[\iota']{\adv\,w}{\sigma'}{v\cons w'}{\sigma''}$ we obtained from
    \eqref{eq:productivityC1}. In either case, we make use of
    Lemma~\ref{lem:vinterp_weakening} and Lemma~\ref{lem:vinterp_gc}
    to conclude $\stateS[\iota']{N_2}{\eta'} \in \pState{n}$. \popQED
  \end{itemize}
\end{enumerate}
\end{proof}

The productivity property is now a straightforward consequence of the
above lemma:
\begin{proof}[proof of Theorem~\ref{thr:productivity} (productivity)]
  For each $n \in \nats$ we can we can construct the following finite transition
  sequence $s_n$ using Lemma~\ref{lem:productivity}:
  \[
    \stateI[\iota_0]{t} \forward{O_0}
    \stateS[\iota_0]{N_0}{\eta_0}\forward{\kappa_0\mapsto v_0}\stateS[\iota_1]
    {N_0}{\sigma_0} \forward{O_1}\stateS[\iota_1]
    {N_1}{\eta_1}\forward{\kappa_1\mapsto v_1}\dots
    \forward{O_n}\stateS[\iota_n]{N_n}{\eta_n}
  \]
  with $\vdash O_i : \Gammaout$ for all $0\le i \le n$. By
  Lemma~\ref{lem:deterministic}, $s_n$ is a prefix of $s_m$ for all
  $m > n$. We thus obtain the desired infinite transition sequence as
  the limit of all $s_n$.
\end{proof}

\subsubsection{Signal Independence}
Proposition~\ref{prop:bufSigIndependence} is a straightforward
consequence of Lemma~\ref{lem:productivity}:
\begin{proof}[Proof of Proposition~\ref{prop:bufSigIndependence}
  (buffered signal independence)]
  By Lemma~\ref{lem:productivity}, for any pair of transitions
  $\stateS[\iota']{N_i}{\eta_i} \forward{\kappa_i \mapsto
    v_i}\stateS[\iota_{i+1}]{N_i}{\sigma_i}\forward{O_{i+1}}
  \stateS[\iota_{i+1}]{N_{i+1}}{\eta_{i+1}}$ in a sequence starting
  from $\stateI[\iota_i]{t}$, we have that
  $\stateS[\iota']{N_i}{\eta_i} \in \pState{1}$. In particular, this
  means that $l \in \vinterp{\DelayE (\Sig\,A)}{1,\eta_i}$ for any
  $x \mapsto l \in N_i$ with $x : A \in \Gammaout$, which in turn
  means that $l \in \locs[\Delta]$. Hence, $\kappa_i \notin \cl l$,
  and so $O_{i+1}$ is empty.
\end{proof}

For the proof of Theorem~\ref{thr:sigIndependence}, we first define
the following sets of machine states in the context of a partial map
$\Delta'$ that maps variables $x$ to input contexts $\Delta'_x$:
\begin{align*}
  \pTState n &=\setcom{\stateS{N}{\eta}}{\iota : \Delta \land
    \forall x \mapsto w \in N. \Delta'_x \subseteq \Delta \land \exists x :
    A \in \Gammaout.
    w \in \vinterp[\Delta'_x]{\DelayE (\Sig\,A)}{n,\eta}}
\end{align*}

Machine states in $\pTState n$ are maintained during the execution of
the machine:
\begin{lemma} \label{lem:stateT}
 If $\stateS{N}{\eta}\in \pTState{n+1}$ and  $\stateS{N}{\eta} \forward{\kappa \mapsto
      v}\stateS[\iota']{N}{\sigma}\forward{O}
    \stateS[\iota']{N'}{\eta'}$, then
    $\stateS[\iota']{N'}{\eta'} \in \pTState{n}$.
\end{lemma}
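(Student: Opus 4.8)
The plan is to replay the proof of Lemma~\ref{lem:productivity}(iii) almost verbatim, but carrying, for each output channel $x$, its smaller input context $\Delta'_x$ in place of the full context $\Delta$. The one genuinely new ingredient is the observation that whenever $x \mapsto l \in N$ with $l \in \vinterp[\Delta'_x]{\DelayE(\Sig\,A)}{n+1,\eta}$, the definition of $\locs[\Delta'_x]$ forces $\cl{l} \subseteq \domp{\Delta'_x}$. Hence in the \textsc{output-compute} case, where $\kappa \in \cl{l}$, we automatically obtain $\kappa \in \domp{\Delta'_x}$, so the received value $v$ --- which is well-typed at $\Delta(\kappa) = \Delta'_x(\kappa)$ --- is an admissible input for the $\Delta'_x$-indexed relation, and every step of reasoning about $\adv\,l$ can be carried out relative to $\Delta'_x$. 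This is precisely the fact that Theorem~\ref{thr:sigIndependence} exploits to conclude $\kappa_i \in \dom{\Delta'}$.

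First I would unfold the two transitions: the input transition produces the store $\sigma = \clin{\eta}\stick{\kappa}{v}\clnin{\eta}$, and the output transition decomposes into a run of \textsc{output-skip} and \textsc{output-compute} steps over the entries of $N$, terminated by \textsc{output-end}, which discards the now-heap $\clin{\eta}$. Along this run the now-heap stays fixed while the later heap only accumulates allocations (from the $\adv\,l$ evaluations), so the final store satisfies $\clnin{\eta} \heaple \eta'$.

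Then, mirroring claims \eqref{eq:productivityC1}--\eqref{eq:productivityC3}, I would establish the following for each $x : A \in \Gammaout$ with $N(x) = l$. If $\kappa \in \cl{l}$, instantiating $l \in \vinterp[\Delta'_x]{\DelayE(\Sig\,A)}{n+1,\eta}$ at $\kappa$ and at $v$ gives $\eta(l) \in \tinterp[\Delta'_x]{\Sig\,A}{n,\restr[\cl{l}]{\sigma}}$; since $\restr[\cl{l}]{\sigma} \heaple \sigma$, the machine's evaluation $\heval{\adv\,l}{\sigma}{v'\cons l'}{\tau}$ in the \textsc{output-compute} step (which evaluates $\clin{\eta}(l) = \eta(l)$) is, by determinism, one of the evaluations witnessed by this term-relation membership, so $v'\cons l' \in \vinterp[\Delta'_x]{\Sig\,A}{n,\tau}$; unfolding $\Sig\,A$ then yields $v' \in \vinterp[\Delta'_x]{A}{n,\tau}$ --- hence $\vdash v' : A$, as $A$ is a value type, i.e.\ the output $x\mapsto v'$ is well-typed --- and $l' \in \vinterp[\Delta'_x]{\DelayE(\Sig\,A)}{n,\tau}$, from which Lemma~\ref{lem:vinterp_gc}(i) followed by Lemma~\ref{lem:vinterp_weakening}(i) (using $\gc{\tau} \heaple \eta'$) gives $l' \in \vinterp[\Delta'_x]{\DelayE(\Sig\,A)}{n,\eta'}$. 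If instead $\kappa \nin \cl{l}$, then $\restr[\cl{l}]{\eta} = \restr[\cl{l}]{\clnin{\eta}}$, so Lemma~\ref{lem:vinterp_restr}(i) gives $l \in \vinterp[\Delta'_x]{\DelayE(\Sig\,A)}{n+1,\clnin{\eta}}$, and Lemma~\ref{lem:vinterp_weakening}(i) (using $\clnin{\eta} \heaple \eta'$) gives $l \in \vinterp[\Delta'_x]{\DelayE(\Sig\,A)}{n,\eta'}$. A straightforward induction along the output transition --- as for \eqref{eq:productivityC3} --- combines the two cases, maintaining that the store only grows, that every produced output is well-typed, and that every $x \mapsto w \in N'$ lies in $\vinterp[\Delta'_x]{\DelayE(\Sig\,A)}{n,\eta'}$; together with $\iota' : \Delta$ (from $\iota : \Delta$ and well-typedness of the input, as in Lemma~\ref{lem:productivity}(iii)), this yields $\stateS[\iota']{N'}{\eta'} \in \pTState{n}$.

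The main obstacle is the bookkeeping of store indices in the \textsc{output-compute} case: one must track that the now-heap is untouched while the later heap only accumulates allocations across the entire output transition, so that the garbage collection performed by \textsc{output-end} --- the deletion of the now-heap --- is harmless for the surviving delayed computations; this is exactly where Lemmas~\ref{lem:vinterp_gc} and~\ref{lem:vinterp_weakening} have to be applied in the right order, just as in the proof of Lemma~\ref{lem:productivity}(iii). The only $\Delta'_x$-specific wrinkles are: threading the per-channel contexts $\Delta'_x$; using $\cl{N(x)} \subseteq \domp{\Delta'_x}$ to legitimise working within $\Delta'_x$; and the routine observation, needed to match the machine's evaluation (which uses the ambient $\Delta$-buffer) against the $\Delta'_x$-indexed term relation, that a term in the $\Delta'_x$-indexed term relation reads only buffered channels in $\dom{\Delta'_x}$, so that restricting the buffer to $\dom{\Delta'_x}$ does not change the evaluation.
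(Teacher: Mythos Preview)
Your proposal is correct and follows essentially the same two-case argument as the paper's own proof, invoking Lemmas~\ref{lem:vinterp_restr}, \ref{lem:vinterp_gc}, \ref{lem:vinterp_weakening}, and determinism in the same places. You are in fact slightly more careful than the paper on one point: the paper silently uses the ambient $\Delta$-buffer $\iota$ when instantiating the $\Delta'_x$-indexed term relation, whereas you correctly flag that one must check (via monotonicity of evaluation in the buffer together with determinism) that restricting the buffer to $\Delta'_x$ does not change the evaluation.
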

\begin{proof}
  Note that $\sigma = \clin{\eta}\stick{\kappa}{v}\clnin{\eta}$ and
  that $\clnin{\eta} \subseteq \eta'$.  Suppose $(x \mapsto w) \in N$,
  and note that $w$ must be a location, since $\Sig \, A$ is not a
  value type. We will write $l$ for $w$ to emphasise this.  Then there
  is an $l'$ such that $(x \mapsto l')\in N'$ and we must show that
  $l'\in\vinterp[\Delta'_x]{\DelayE (\Sig\,A)}{n,\eta'}$ using the
  hypothesis $l \in \vinterp[\Delta'_x]{\DelayE (\Sig\,A)}{n+1,\eta}$.
  Suppose first that $\kappa \notin \cl l$. Then $l'=l$ and thus
  \begin{align*}
    l'\in \vinterp[\Delta'_x]{\DelayE (\Sig\,A)}{n,\restr[\cl l]{\eta}} 
    \subseteq \vinterp[\Delta'_x]{\DelayE (\Sig\,A)}{n,\clnin{\eta}} 
    \subseteq \vinterp[\Delta'_x]{\DelayE (\Sig\,A)}{n,\eta'} 
  \end{align*}

 Suppose now $\kappa \in \cl l$. In that case, $l'$ must have occurred by evaluating
 $\heval{\sigma'(l)}{\sigma'}{w \cons l'}{\sigma''}$ for some $\sigma', \sigma''$ such that $\sigma\subseteq \sigma'$, and 
 $\gc{\sigma''}\subseteq\eta'$. The hypothesis tells us that 
\begin{align*}
 \eta(l) \in \tinterp[\Delta'_x]{\DelayE (\Sig\,A)}{n,\clin\eta\stick{\kappa}{v} \clnin{\eta}}
\end{align*}
 Since $\clin\eta\stick{\kappa}{v} \clnin{\eta} \subseteq \sigma'$, this means that $\heval[\iota]{\eta(l)}{\sigma'}{w' \cons l''}{\sigma'''}$.  
 Since $\eta(l) = \clin\eta(l) = \sigma'(l)$, by the determinism of
 the operational semantics (Lemma~\ref{lem:deterministic}),
 $\sigma''' = \sigma''$ and $l'' = l'$. From this we conclude that
\[
  l'  \in \vinterp[\Delta'_x]{\DelayE (\Sig\,A)}{n,\sigma''} 
   = \vinterp[\Delta'_x]{\DelayE (\Sig\,A)}{n,\gc{\sigma''}} 
   \subseteq \vinterp[\Delta'_x]{\DelayE (\Sig\,A)}{n,\eta'}\qedhere
   \popQED
\]
\end{proof}

\begin{proof}[Proof of Theorem~\ref{thr:sigIndependence} (push signal independence)]
  Let $\Gammaout = x_1 : B_1, \dots, x_m : B_m, z : C$. The
  initialisation transition
 \[
 \stateI[\iota_0]{\pair ts}\forward{x_1\mapsto
        v_1,\dots,x_m\mapsto v_m, z \mapsto w} \stateS[\iota_0]{x_1\mapsto
        l_1,\dots,x_m\mapsto l_m, z \mapsto l}{\eta_0} = \stateS[\iota_0]{N_0}{\eta_0}
 \]
 is caused by evaluations of the form
 $\heval t\emptyset {\tuple{ v_1\cons l_1,\dots,v_m \cons
     l_m}}{\eta}$, and $\heval s {\eta}{w \cons l}{\eta_0}$.  By
 assumption $\hastype[\Delta']{}{s}{\Sig \,C}$ and thus
 $s \in \tinterp[\Delta']{\Sig\,C}{n,\emptyset}$ for all $n$ by
 Theorem~\ref{thr:lrl}, which in turn implies
 $l \in \vinterp[\Delta']{\DelayE\Sig\,C}{n,\eta_0}$ for all $n$.
 Likewise $l_j \in \vinterp[\Delta]{\DelayE\Sig\,B_j}{n,\eta_0}$ for
 all $n$, and $j = 1, \dots ,m$.  So
 $\stateS[\iota_0]{x_1\mapsto l_1,\dots,x_m\mapsto l_m, z \mapsto
   l}{\eta_0} \in \pTState[\Delta'']{n}$ for all $n$, where
 $\Delta''_z=\Delta'$ and $\Delta''_{x_j}= \Delta$, for
 $j = 1, \dots, m$. By $n$ applications of Lemma~\ref{lem:stateT}, we
 thus obtain that
 $\stateS[\iota_n]{N_n}{\eta_n}\in \pTState[\Delta'']{1}$. In
 particular, $N_n(z) = l'$ for some
 $l' \in \vinterp[\Delta']{\DelayE (\Sig\,C)}{1,\eta_n}$. Hence,
 $\cl{l'} \subseteq \dom{\Delta'}$ and thus $z \mapsto v \in O_{i+1}$
 implies $\kappa_i\in\cl{l'}$, which in turn implies
 $\kappa_i \in \dom{\Delta'}$.
\end{proof}

\section{Related work}
\label{sec:related-work}

Functional reactive programming originates with \citet{FRAN}.
The use of modal types for FRP was first suggested by 
\citet{krishnaswami2011ultrametric}, and the connection between linear temporal logic and 
FRP was discovered independently by~\citet{jeffrey2012} and~\citet{Jeltsch2012}. Although 
some of these calculi have been implemented, they do not offer operational guarantees like the
ones proved here for lack of space leaks. The first such operational guarantees were given by
\citet{krishnaswami2012higher} who describe a modal FRP language using linear types and 
allocation resources to statically bound the memory used by a reactive program. The simpler, but 
less precise, idea of using an aggressive garbage collection technique for avoiding space
leaks is due to \citet{krishnaswami13frp}. Krishnaswami's calculus used a dual context
approach to programming with modal types. \citet{bahr2019simply} recast these results 
in a Fitch-style modal calculus, the first in the RaTT family. This was later implemented in
Haskell with some minor modifications~\cite{rattusJFP}. 

All the above calculi are based on a global notion of time, which in
almost all cases is discrete.  In particular, the modal operator
$\Delay$ for time steps in these calculi refers to the next time step
on the global clock.  One can of course also understand the step
semantics of Async RaTT as operating on a global clock, but in our
model each step is associated with an input coming from an input
channel, and this allows us to define the delay modality $\DelayE$ as
a delay on a set of input channels.  From the model perspective,
$\DelayE A$ carries some similarities with the type
$\Delay(\Diamond A)$, where $\Diamond A \iso A + \Delay\Diamond A$ is
a guarded recursive type.  This encoding, however, suffers from
the efficiency and abstraction problems mentioned in the
introduction.

The only asynchronous modal FRP calculus that we are aware of is
$\lwid$ defined by \citet{graulund2021adjoint}, which takes $\Diamond$
as a type constructor primitive and endows it with synchronisation
primitive similar to $\select$ in Async RaTT. However, the programming
primitives in $\lwid$ are very different from the ones use here.  For
example, $\lwid$ allows an element of $\Diamond A$ to be decomposed
into a time and an element of $A$ at that time, and much programming
with $\Diamond$ uses this decomposition. There is also no delay type
constructor $\Delay$, so $\DelayE$ is not expressible: Unlike
$\DelayE A$, an element of $\Diamond A$ could give a value of type $A$
already now. Graulund et al. provide a denotational semantics for
$\lwid$, but no operational semantics, and no operational guarantees
as proved here.

Another approach to avoiding space leaks and non-causal reactive
programs is to devise a carefully designed interface to manipulate
signals such as Yampa~\citep{nilsson02yampa} or
FRPNow!~\citep{ploeg15FRPNow}. Rhine~\cite{barenz18Rhine} is a recent
refinement of Yampa that annotates signal functions with type-level
clocks, which allows the construction of complex dataflow graphs that
combine subsystems running at different clock speeds. The typing
discipline fixes the clock of each subsystem \emph{statically} at
compile time, since the aim of Rhine is provide efficient resampling
between subsystems. By contrast, the type-level clocks of Async RaTT
are existentially quantified, which allows Async RaTT programs to
\emph{dynamically} change the clock of a signal, e.g., by using the
\varid{switch} combinator from \autoref{sec:synchr-comb}.

\citet{push-pull} proposed a \emph{push-pull} implementation of FRP,
where signals (which in the tradition of classic FRP~\citep{FRAN} are
called behaviours) are updated at discrete time steps (push), but can
also be sampled at any time between such updates (pull). We can represent
such push-pull signals in Async RaTT using the type
$\Sig\,(\sym{Time} \to A)$, i.e., at each tick of the clock we get a
new function $\sym{Time} \to A$ that describes the time-varying value
of the signal until the next tick of the clock.

Futures, first implemented in MuliLisp~\citep{halstead85Multilisp} and
now commonly found in many programming languages under different names
(promise, async/await, delay, etc.), provide a powerful abstraction to
facilitate communication between concurrent computations. A value of
type $\sym{Future}\, A$ is the promise to deliver a value of type $A$
at some time in the future. For example, a function to read the
contents of a file could immediately return a value of type
$\sym{Future}\, \varid{Buffer}$ instead of blocking the caller until
the file was read into a buffer. Async RaTT can provide a similar
interface using the type modality $\DelayE$, either directly or by
defining $\sym{Future}$ as a guarded recursive type
$\sym{Future}\, A \iso  A + \DelayE(\sym{Future}\,A)$ to give
$\sym{Future}$ a monadic interface. Since Async RaTT does not require
the set of push-only channels to be finite, we could implement a
function that takes a filename $f$ and returns a result of type
$\sym{Future}\, \sym{Buffer}$ simply as a family of channels
$\sym{readFile}_f :_\cpush \varid{Buffer}$. The machine would monitor
delayed computations for clocks containing these channels, initiate
reading the corresponding files in parallel, and provide the value of
type \varid{Buffer} on the channel upon completion of the file reading
procedure.

As mentioned earlier, \citet{krishnaswami2012higher} used a linear
typing discipline to obtain static memory bounds. In addition to such
memory bounds, synchronous (dataflow) languages such as
Esterel~\citep{berry1984esterel}, Lustre~\citep{caspi1987lustre}, and
Lucid Synchrone~\citep{pouzet2006lucid} even provide bounds on
runtime. Despite these strong guarantees, Lucid Synchrone affords a
high-level, modular programming style with support for higher-order
functions. However, to achieve such static guarantees, synchronous
dataflow languages must necessarily enforce strict limits on the
dynamic behaviour, disallowing both time-varying values of arbitrary
types (e.g., we cannot have a stream of streams) and dynamic switching
(i.e., no functionality equivalent to the \varid{switch}
combinator). Both Lustre and Lucid Synchrone have a notion of a clock,
which is simply a stream of Booleans that indicates at each tick of
the global clock, whether the local clock ticks as well.


\section{Conclusion and future work}
\label{sec:concl-future-work}

This paper presented Async RaTT, the first modal language for 
asynchronous FRP with operational guarantees. We showed how the 
new modal type $\DelayE$ for asynchronous delay can be used to 
annotate the runtime system with dependencies from output channels
to input channels, ensuring that outputs are only recomputed when necessary.
The examples of the integral and the derivative even show how the 
programmer can actively influence the update rate of output channels.

The choice of Fitch-style modalities is a question of taste, and
we believe that the results could be reproduced in a dual context 
language. Even though Fitch-style uses non-standard operations
on contexts, other languages in the RaTT family have been implemented
as libraries in Haskell~\cite{rattusJFP}. We therefore believe that 
also Async RaTT can be implemented in Haskell or other 
functional programming languages, giving programmers
access to a combination of features from RaTT and 
the hosting programming language. 

One aspect missing from Async RaTT is filtering of output channels.
For example, it is not possible to write a filter function that only produces
output when some condition on the input is met. The best way to
do model this is using an output channel of type $\sym{Maybe}(A)$,
leaving it to the runtime system to only push values of type $A$ to the
consumers of the output channel. This way the filtering is external
to the programming language. We see no way to meaningfully extend 
the runtime model of Async RaTT to internalise it. 

\begin{acks}
  M{\o}gelberg was supported by the Independent Research Fund Denmark
  grant number 2032-00134B. 
\end{acks}

\bibliographystyle{ACM-Reference-Format}
\bibliography{paper}

\newpage
\appendix
\newenvironment{thmcpy}[1]{\noindent\textbf{#1.}\itshape}{}
\newenvironment{thmcpy*}[2]{\noindent\textbf{#1}~(#2)\textbf{.}\itshape}{}
\section{Proof of Fundamental Property}
\label{sec:proof-fund-prop}

Given a heap $\eta$ we use the following notation to construct a
well-formed store with $\stick{\kappa}{v}$ as follows:
\[
  \mktick{\eta}= \clin{\eta} \stick{\kappa}{v} \clnin{\eta}
\]

\begin{lemma}[Machine monotonicity]
  \label{lem:machine_monotone}
  If $\heval{t}{\sigma}{v}{\sigma'}$, then $\sigma \heaple \sigma'$.
\end{lemma}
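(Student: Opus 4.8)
The plan is to argue by induction on the derivation of $\heval{t}{\sigma}{v}{\sigma'}$ given by the evaluation semantics of Figure~\ref{fig:machine}, exploiting two facts: that $\heaple$ is a preorder (reflexive and transitive), and that the evaluation rules change the store only by extending a heap with a mapping at a fresh location — they never delete a location, nor do they alter the $\stick{\kappa}{v}$ component of a two-heap store.

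First I would dispatch the cases in which the output store equals the input store: the value rule $\heval{v}{\sigma}{v}{\sigma}$, the buffer lookup $\heval{\buf[\kappa]}{\sigma}{\iota(\kappa)}{\sigma}$, the input lookup $\heval{\adv\,\inp[\kappa]}{\sigma}{v}{\sigma}$, and $\heval{\never}{\sigma}{l}{\sigma}$ (where the freshly allocated $l$ is discarded). All of these follow immediately from reflexivity of $\heaple$. The one genuinely new case is $\delay$: here $\heval{\delay_{\theta}\,t}{\sigma}{l}{(\sigma,l\mapsto t)}$ with $l=\allocate[\cleval{\theta}]{\sigma}$ fresh for the later heap of $\sigma$, and $(\sigma,l\mapsto t)$ denotes $\sigma$ with $l\mapsto t$ adjoined to its later heap. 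Reading off the two defining clauses of $\heaple$ — the first clause for the later heap, which only grows, and, in the two-heap case, the second clause applied with the unchanged now heap and unchanged $\stick{\kappa}{v}$ — gives $\sigma\heaple(\sigma,l\mapsto t)$ directly.

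For all remaining rules the store is simply threaded left-to-right through the premises, producing a chain of intermediate stores $\sigma,\sigma_1,\dots,\sigma'$ (for instance $\sigma,\sigma',\sigma'',\sigma'''$ in the application rule; $\sigma,\sigma,\sigma'$ in the $\select$ rule for the case where the tick lies in both clocks; the single premise $\heval{\eta_N(l)}{\eta_N\stick{\kappa}{v}\eta_L}{w}{\sigma}$ for $\adv\,l$; and likewise for let-binding, pairing, projection, injection, case analysis, $\suc$, natural-number recursion, $\into$, $\out$, $\unbox$, and the $\fix$/$\dfix$ rules). Applying the induction hypothesis to each premise gives $\heaple$ between successive stores in the chain, and transitivity of $\heaple$ yields $\sigma\heaple\sigma'$.

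There is no real obstacle here: the proof is routine. The only step that warrants care is verifying the $\delay$ case against the precise two-clause definition of $\heaple$ on two-heap stores, and double-checking that — in contrast to the reactive semantics of Figure~\ref{fig:reactiveSemantics}, where the $\sym{gc}$ function is applied at the end of an output transition — no rule of the evaluation semantics ever removes an entry from a heap, so that store growth is the only phenomenon we ever need to account for.
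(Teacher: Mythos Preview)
Your proposal is correct and takes essentially the same approach as the paper, which simply states ``Straightforward induction on $\heval{t}{\sigma}{v}{\sigma'}$.'' You have supplied the details the paper omits, and your case analysis---reflexivity for the rules that leave the store unchanged, the explicit check for $\delay$, and transitivity along the threaded stores for the compound rules---is accurate.
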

\begin{proof}
  Straightforward induction on $\heval{t}{\sigma}{v}{\sigma'}$.
\end{proof}

\begin{lemma}
  \label{lem:heaplePreserve}
  ~
  \begin{enumerate}[(i)]
  \item If $\sigma \tickle \sigma'$, then
    $\gc{\sigma} \heaple \gc{\sigma'}$.
  \item $\gc{\sigma} \tickle \sigma$.
  \item If $\eta \heaple \eta'$ 
  then $\mktick{\eta} \heaple \mktick{\eta'}$.
  \end{enumerate}
\end{lemma}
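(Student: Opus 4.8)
The plan is to prove all three items by a direct case analysis on the shapes of the stores involved, unfolding the definitions of $\gc{\cdot}$, $\heaple$, $\tickle$, and $\mktick{\cdot}$; no induction is needed here. The one thing to keep track of throughout is that $\heaple$ and $\tickle$ are only ever defined between stores of compatible shape, so every case split is on whether a store is a single-heap store $\eta_L$ or a two-heap store $\eta_N\stick{\kappa}{v}\eta_L$.

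For (i), I would assume $\sigma \tickle \sigma'$ and invert on which of the two $\tickle$-rules derived it. If it came from $\sigma \heaple \sigma'$, then $\sigma$ and $\sigma'$ have the same shape: if both are single heaps, $\gc{\sigma} = \sigma \heaple \sigma' = \gc{\sigma'}$ immediately; if both are two-heap stores $\eta_N\stick{\kappa}{v}\eta_L$ and $\eta'_N\stick{\kappa}{v}\eta'_L$, then inverting the second $\heaple$-rule gives $\eta_L \heaple \eta'_L$, so $\gc{\sigma} = \eta_L \heaple \eta'_L = \gc{\sigma'}$. The remaining case is $\sigma = \eta_L$ and $\sigma' = \eta_N\stick{\kappa}{v}\eta'_L$ with $\eta_L \heaple \eta'_L$, where again $\gc{\sigma} = \eta_L \heaple \eta'_L = \gc{\sigma'}$.

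For (ii), I would first record that $\heaple$ is reflexive on both kinds of store, which is immediate from its two defining rules. If $\sigma = \eta_L$, then $\gc{\sigma} = \eta_L \heaple \eta_L = \sigma$, so the first $\tickle$-rule yields $\gc{\sigma} \tickle \sigma$. If $\sigma = \eta_N\stick{\kappa}{v}\eta_L$, then $\gc{\sigma} = \eta_L$, and since such a store is well-formed we have $\kappa :_{\capa} A \in \Delta$ and $\hastype[]{}{v}{A}$ for the appropriate $A$; together with $\eta_L \heaple \eta_L$, the second $\tickle$-rule gives $\eta_L \tickle \eta_N\stick{\kappa}{v}\eta_L$, i.e.\ $\gc{\sigma} \tickle \sigma$.

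For (iii), I would unfold $\mktick{\eta} = \clin{\eta}\stick{\kappa}{v}\clnin{\eta}$ and likewise for $\eta'$; by the second $\heaple$-rule it then suffices to show $\clin{\eta} \heaple \clin{\eta'}$ and $\clnin{\eta} \heaple \clnin{\eta'}$. For the first, given $l \in \dom{\clin{\eta}}$ we have $l \in \dom{\eta}$ and $\kappa \in \cl{l}$; from $\eta \heaple \eta'$ we get $l \in \dom{\eta'}$ and $\eta(l) = \eta'(l)$, hence $l \in \dom{\clin{\eta'}}$ with $\clin{\eta}(l) = \eta(l) = \eta'(l) = \clin{\eta'}(l)$. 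The argument for $\clnin{\cdot}$ is identical, with $\kappa \in \cl{l}$ replaced by $\kappa \nin \cl{l}$. There is no genuine obstacle in any of this; the only point requiring care is item (ii), where one must have recorded that every store carrying a $\stick{\kappa}{v}$ component is well-formed, so that the typing side-conditions of the second $\tickle$-rule are available.
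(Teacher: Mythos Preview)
Your proposal is correct and follows exactly the approach of the paper, which simply says ``By a straightforward case analysis.'' Your write-up is in fact more explicit than the paper's own proof, and your caveat about item~(ii)---that one must know the $\stick{\kappa}{v}$ component of a two-heap store satisfies $\kappa :_\capa A \in \Delta$ and $\hastype[]{}{v}{A}$---is a genuine point the paper leaves implicit.
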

\begin{proof}
  By a straightforward case analysis.
\end{proof}

\begin{lemma}~
  \label{lem:heapOperations}
  \begin{enumerate}[(i)]
  \item $\restr{\mktick{\eta}} = \mktick{\restr{\eta}}$.
  \item $\restr{\gc{\sigma}} = \gc{\restr{\sigma}}$.
  \end{enumerate}
\end{lemma}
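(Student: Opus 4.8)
The plan is to prove both equalities directly from the defining clauses of $\restr[\Theta]{\cdot}$, $\clin{\cdot}$, $\clnin{\cdot}$ and $\gc{\cdot}$, with no induction: each of these operations acts by a single pass over a heap or store, so each item reduces to a finite case analysis on the shape of the store and on whether $\kappa \in \Theta$.

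For (ii) I would case on the shape of $\sigma$. If $\sigma$ is a single heap $\eta_L$, then $\gc{\sigma}=\eta_L$ and $\restr[\Theta]{\sigma}=\restr[\Theta]{\eta_L}$ is again a single heap, so both sides equal $\restr[\Theta]{\eta_L}$. If $\sigma=\eta_N\stick{\kappa}{v}\eta_L$, then $\gc{\sigma}=\eta_L$, so the left-hand side is $\restr[\Theta]{\eta_L}$; for the right-hand side I split on whether $\kappa\in\Theta$. When $\kappa\in\Theta$, $\restr[\Theta]{\sigma}=\restr[\Theta]{\eta_N}\stick{\kappa}{v}\restr[\Theta]{\eta_L}$, whose $\gc{}$ is $\restr[\Theta]{\eta_L}$; when $\kappa\nin\Theta$, $\restr[\Theta]{\sigma}=\restr[\Theta]{\eta_L}$, already a single heap, so its $\gc{}$ is again $\restr[\Theta]{\eta_L}$. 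Hence the two sides coincide in every case.

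For (i) the key observation is that $\clin{\cdot}$, $\clnin{\cdot}$ and $\restr[\Theta]{\cdot}$ acting on a plain heap are all of the form ``retain exactly those locations $l$ with $P(\cl l)$'', for $P(\cl l)$ being respectively $\kappa\in\cl l$, $\kappa\nin\cl l$, and $\cl l\subseteq\Theta$; two such predicate-restrictions commute, since applying them in either order keeps precisely the locations satisfying both predicates and leaves their values untouched. So in the case $\kappa\in\Theta$ I would compute $\restr[\Theta]{\mktick{\eta}}=\restr[\Theta]{\bigl(\clin{\eta}\stick{\kappa}{v}\clnin{\eta}\bigr)}=\restr[\Theta]{\clin{\eta}}\stick{\kappa}{v}\restr[\Theta]{\clnin{\eta}}$ by the $\kappa\in\Theta$ clause of $\restr[\Theta]{\cdot}$, then rewrite $\restr[\Theta]{\clin{\eta}}=\clin{\restr[\Theta]{\eta}}$ and $\restr[\Theta]{\clnin{\eta}}=\clnin{\restr[\Theta]{\eta}}$ by the commutation just described, arriving at $\mktick{\restr[\Theta]{\eta}}$. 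The remaining case $\kappa\nin\Theta$ is degenerate: it does not arise where the lemma is used (in the $\DelayE$ clause of the logical relation and in Lemma~\ref{lem:vinterp_restr} the restricting clock always contains $\kappa$), and there the left-hand side collapses to $\restr[\Theta]{\eta}$ since every location surviving $\restr[\Theta]{\cdot}$ already has $\kappa\nin\cl l$.

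I expect the only place needing care — the main, if minor, obstacle — to be the domain bookkeeping in (i): one must check that $\restr[\Theta]{\clin{\eta}}$ and $\clin{\restr[\Theta]{\eta}}$ have the same domain, namely $\setcom{l\in\dom{\eta}}{\kappa\in\cl l \text{ and } \cl l\subseteq\Theta}$, and agree with $\eta$ on it; this is immediate from the definitions but is exactly the spot where a mismatch in how the underlying partial maps are defined would surface.
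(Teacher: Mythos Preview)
Your argument is correct and follows the same approach as the paper, which merely records ``by a straightforward case analysis.'' Your detailed treatment in fact goes further: you correctly observe that item (i) as literally stated fails when $\kappa\nin\Theta$ (the left-hand side is the single heap $\restr[\Theta]{\eta}$ while the right-hand side is the two-heap store $\emptyset\stick{\kappa}{v}\restr[\Theta]{\eta}$), and that this case never arises at the lemma's call sites, where $\kappa\in\Theta$ always holds.
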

\begin{proof}
  By a straightforward case analysis.
\end{proof}

\begin{thmcpy}{Lemma~\ref{lem:vinterp_weakening}}
  Let $n \ge n'$, and $\sigma \tickle[\Delta] \sigma'$.
  \begin{enumerate}[(i)]
  \item \label{item:vinterp_weakeningI}
    $\vinterp{A}{n,\sigma} \subseteq
    \vinterp[\Delta]{A}{n',\sigma'}$.
  \item \label{item:vinterp_weakeningII}
    $\tinterp{A}{n,\sigma} \subseteq
    \tinterp[\Delta]{A}{n',\sigma'}$.
  \item \label{item:vinterp_weakeningIII}
    $\cinterp{\Gamma}{n,\sigma} \subseteq
    \cinterp[\Delta]{\Gamma}{n',\sigma'}$.
  \end{enumerate}
\end{thmcpy}
\begin{proof}[Proof of Lemma~\ref{lem:vinterp_weakening}]
  \ref{item:vinterp_weakeningI} and \ref{item:vinterp_weakeningII} are
  proved by a well-founded induction using the same well-founded order
  that we used to argue that both logical relations are
  well-defined. \ref{item:vinterp_weakeningIII} is proved by induction
  on the length of $\Gamma$, and using \ref{item:vinterp_weakeningI}.
\end{proof}

\begin{thmcpy}{Lemma~\ref{lem:vinterp_gc}}
  ~
  \begin{enumerate}[(i)]
  \item
    $\vinterp{A}{n,\sigma} \subseteq
    \vinterp{A}{n,\gc{\sigma}}$.
  \item
    $\cinterp{\Gamma}{n,\sigma} \subseteq
    \cinterp{\Gamma}{n,\gc{\sigma}}$ if $\Gamma$ is tick-free.
  \end{enumerate}
\end{thmcpy}
\begin{proof}[Proof of lemma~\ref{lem:vinterp_gc}]
  Both items are proved by induction on the size $\tsize{A}$ and on
  the length of $\Gamma$, respectively.
\end{proof}

\begin{thmcpy}{Lemma~\ref{lem:vinterp_restr}}
  ~
  \begin{enumerate}[(i)]
  \item If $v \in \vinterp{\DelayE A}{n,\sigma}$, then
    $v \in \vinterp{\DelayE A}{n,\sigma'}$, for any $\sigma'$ 
    with $\restr[\cl v]{\sigma} = \restr[\cl v]{\sigma'}$.
  \item If $\gamma \in \cinterp{\Gamma,\tick[\theta]}{n,\sigma}$
    and $\Theta = \cleval{\theta\gamma}$
    then $\gamma \in \cinterp{\Gamma,\tick[\theta]}{n,\restr\sigma}$
  \end{enumerate}
\end{thmcpy}
\begin{proof}[Proof of Lemma~\ref{lem:vinterp_restr}]
  Both items are proved by inspection of the definitions of
  $\vinterp{\DelayE A}{n,\sigma}$ and
  $\cinterp{\Gamma,\tick[\theta]}{n,\sigma}$, respectively.
\end{proof}
The fact that (i) holds for $\vinterp{\DelayE A}{n,\sigma}$ but
not for $\tinterp{\DelayE A}{n,\sigma}$ is the reason we needed
to restrict the calculus so that $\adv$ and $\select$ may only be
applied to values.

\begin{lemma}
  \label{lem:vinterp_stable}
  $\vinterp{A}{n,\sigma} = \vinterp{A}{n,\emptyset}$
  for any stable type $A$.
\end{lemma}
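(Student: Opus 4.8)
The plan is to prove the statement by a straightforward well-founded induction on the size $\tsize{A}$ of the stable type $A$, inspecting the clauses of the value relation in \autoref{fig:log-rel}. The key observation driving the argument is that the calculus was designed so that stable types never inspect the store: for every stable type, the defining clause of $\vinterp{A}{n,\sigma}$ either does not mention the store component $\sigma$ at all, or refers only to $\vinterp{S}{n,\sigma}$ for strictly smaller stable types $S$. So it suffices to walk through the grammar of stable types from \autoref{fig:syntax}.

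Concretely, I would argue as follows. For the base cases $A = \Unit$ and $A = \Nat$, the relations $\vinterp{\Unit}{n,\sigma} = \set{\unit}$ and $\vinterp{\Nat}{n,\sigma} = \setcom{\suc^k\,\zero}{k \in \nats}$ are literally constant in $\sigma$, so there is nothing to prove. For $A = S \times S'$ and $A = S + S'$, the grammar of stable types guarantees that both $S$ and $S'$ are themselves stable, and both are strictly smaller than $A$ since $\tsize{S \times S'} = \tsize{S+S'} = 1 + \tsize{S} + \tsize{S'}$; hence the induction hypothesis gives $\vinterp{S}{n,\sigma} = \vinterp{S}{n,\emptyset}$ and likewise for $S'$, and substituting into the defining clause for products and sums yields $\vinterp{A}{n,\sigma} = \vinterp{A}{n,\emptyset}$. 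For $A = \Box B$, the clause $\vinterp{\Box B}{n,\sigma} = \setcom{\rbox\,t}{t \in \tinterp{B}{n,\emptyset}}$ is already phrased with the empty store, so it is independent of $\sigma$ without any appeal to the induction hypothesis. Similarly, for $A = \DelayA B$ both clauses are independent of $\sigma$: $\vinterp{\DelayA B}{0,\sigma}$ is the set of closed $\dfix$-terms, and $\vinterp{\DelayA B}{n+1,\sigma} = \setcom{\dfix\,x.t}{t[\dfix\,x.t/x] \in \tinterp{B}{n,\emptyset}}$ again mentions only the empty store.

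I expect no real obstacle here; the lemma is essentially a bookkeeping consequence of how the Kripke logical relation was set up. The one point requiring a moment's care is the product and sum cases, where one must notice that the immediate subterms of a stable type are again stable, so that the induction hypothesis is applicable — this is immediate from the grammar of stable types. It is also worth stressing that the statement concerns only the value relation $\vinterp{-}{-}$ and not the term relation $\tinterp{-}{-}$: since $\tinterp{A}{n,\sigma}$ quantifies over all $\sigma'$ with $\sigma' \tickge \sigma$, it genuinely depends on $\sigma$ even for stable $A$, and indeed it is exactly this asymmetry (noted after \autoref{lem:vinterp_restr}) that forces $\adv$ and $\select$ to be restricted to values.
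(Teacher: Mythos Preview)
Your proposal is correct and follows exactly the same approach as the paper, which simply states ``By straightforward induction on the size of $A$.'' Your case analysis over the grammar of stable types spells out the details the paper omits, and each case is handled correctly; the only minor quibble is that your closing parenthetical about \autoref{lem:vinterp_restr} refers to an asymmetry for $\DelayE$ rather than for stable types, but that remark is tangential to the proof itself.
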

\begin{proof}
  By straightforward induction on the size of $A$.
\end{proof}

\begin{lemma}
  \label{lem:cinterp_tokenfree_restr}
  Let $\gamma \in \cinterp{\Gamma,\Gamma'}{n,\sigma}$ such
  that $\Gamma'$ is tick-free. Then
  $\gamma|_{\Gamma}\in \cinterp{\Gamma}{n,\sigma}$.
\end{lemma}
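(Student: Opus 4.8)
The plan is to proceed by a straightforward induction on the length of $\Gamma'$, peeling variable bindings off its right end, and to observe that the only nontrivial clause of the context relation — the one for $\tick[\theta]$ — is never touched because $\Gamma'$ is assumed tick-free.

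First I would dispatch the base case: if $\Gamma' = \cdot$, then $\Gamma,\Gamma' = \Gamma$ and $\gamma|_{\Gamma} = \gamma$, so there is nothing to prove. For the inductive step, since $\Gamma'$ is tick-free it does not end in a tick, hence it has the form $\Gamma' = \Gamma'', x : A$ with $\Gamma''$ tick-free and, by well-formedness of the context $\Gamma,\Gamma'$, with $x \nin \dom{\Gamma,\Gamma''}$. By the variable-extension clause in the definition of the context relation (Figure~\ref{fig:log-rel}), the set $\cinterp{\Gamma,\Gamma'',x:A}{n,\sigma}$ consists exactly of the environments $\gamma_0[x \mapsto v]$ with $\gamma_0 \in \cinterp{\Gamma,\Gamma''}{n,\sigma}$ and $v \in \vinterp{A}{n,\sigma}$; crucially this clause leaves the world $(n,\sigma)$ unchanged. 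So the given $\gamma$ decomposes as $\gamma = \gamma_0[x \mapsto v]$ with $\gamma_0 \in \cinterp{\Gamma,\Gamma''}{n,\sigma}$, and since $\Gamma''$ is tick-free and shorter than $\Gamma'$, the induction hypothesis gives $\gamma_0|_{\Gamma} \in \cinterp{\Gamma}{n,\sigma}$. Because $x \nin \dom{\Gamma}$ we have $\gamma|_{\Gamma} = \gamma_0|_{\Gamma}$, which closes the induction.

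I do not expect a genuine obstacle here. The only points requiring a word of care are the elementary bookkeeping that restricting $\gamma_0[x\mapsto v]$ to $\dom{\Gamma}$ yields the same environment as restricting $\gamma_0$ (immediate from $x \nin \dom{\Gamma}$), and the observation that the variable-extension clause does not modify the world, so the same pair $(n,\sigma)$ is used throughout. The intricate tick clause $\cinterp{\Gamma,\tick[\theta]}{n,\eta_N\stick{\kappa}{v}\eta_L}$ — which changes both the step index and the store — is never encountered, since $\Gamma'$ is tick-free; and a tick possibly occurring inside $\Gamma$ itself is never peeled off by the induction, so it plays no role in the argument.
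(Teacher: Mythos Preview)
Your proposal is correct and follows exactly the approach the paper takes: a straightforward induction on the length of $\Gamma'$. The paper states only that one line, and your elaboration of the inductive step---peeling off a variable binding, observing that the variable-extension clause leaves the world $(n,\sigma)$ unchanged, and noting that the tick clause is never encountered---is precisely the routine argument the paper has in mind.
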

\begin{proof}
  By a straightforward induction on the length of $\Gamma'$.
\end{proof}

\begin{lemma}
  \label{lem:stabilize}
  If $\gamma \in \cinterp{\Gamma}{n,\sigma}$, then
  $\gamma|_{\stabilize\Gamma} \in
  \cinterp{\stabilize\Gamma}{n,\emptyset}$.
\end{lemma}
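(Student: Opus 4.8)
The plan is to prove the statement by induction on the length of $\Gamma$, following the three recursive clauses defining $\stabilize\Gamma$. In the base case $\Gamma = \cdot$ we have $\stabilize\Gamma = \cdot$, and the only element of $\cinterp{\cdot}{n,\sigma}$ is $\ast$, which lies in $\cinterp{\cdot}{n,\emptyset} = \set{\ast}$; so there is nothing to prove.

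For the inductive step I would distinguish the two ways $\Gamma$ can be extended. If $\Gamma = \Gamma_0, x : A$, then any $\gamma \in \cinterp{\Gamma}{n,\sigma}$ decomposes as $\gamma = \gamma_0[x \mapsto v]$ with $\gamma_0 \in \cinterp{\Gamma_0}{n,\sigma}$ and $v \in \vinterp{A}{n,\sigma}$, and the induction hypothesis gives $\gamma_0|_{\stabilize{\Gamma_0}} \in \cinterp{\stabilize{\Gamma_0}}{n,\emptyset}$. If $A$ is not stable, then $\stabilize\Gamma = \stabilize{\Gamma_0}$ and $\gamma|_{\stabilize\Gamma} = \gamma_0|_{\stabilize{\Gamma_0}}$, so we are done; if $A$ is stable, then $\stabilize\Gamma = \stabilize{\Gamma_0}, x : A$, and I would invoke Lemma~\ref{lem:vinterp_stable} to upgrade $v \in \vinterp{A}{n,\sigma}$ to $v \in \vinterp{A}{n,\emptyset}$, so that $\gamma|_{\stabilize\Gamma} = (\gamma_0|_{\stabilize{\Gamma_0}})[x \mapsto v] \in \cinterp{\stabilize\Gamma}{n,\emptyset}$. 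If instead $\Gamma = \Gamma_0, \tick[\theta]$, then $\stabilize\Gamma = \stabilize{\Gamma_0}$ and, since a tick binds no variable, $\gamma|_{\stabilize\Gamma} = \gamma|_{\stabilize{\Gamma_0}}$; inspecting the definition of $\cinterp{\Gamma_0,\tick[\theta]}{n,\sigma}$ forces $\sigma$ to have the two-heap form $\eta_N\stick{\kappa}{v}\eta_L$ and shows $\gamma \in \cinterp{\Gamma_0}{n+1,(\eta'_N,\clnin{\eta'_L})}$ for suitable heaps $\eta'_N, \eta'_L$. Applying the induction hypothesis at this world yields $\gamma|_{\stabilize{\Gamma_0}} \in \cinterp{\stabilize{\Gamma_0}}{n+1,\emptyset}$, and since $n + 1 \ge n$ and $\emptyset \heaple \emptyset$ (hence $\emptyset \tickle \emptyset$), Lemma~\ref{lem:vinterp_weakening}(iii) lets me descend to $\cinterp{\stabilize{\Gamma_0}}{n,\emptyset} = \cinterp{\stabilize\Gamma}{n,\emptyset}$.

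There is no real obstacle here; the only mildly delicate point — and the closest thing to an obstacle — is the bump of the step index from $n$ to $n+1$ that occurs in the tick clause of the context relation, which is precisely why the argument routes through the weakening lemma rather than a direct equality of sets. I would also make sure to note that in the tick case the two restrictions of $\gamma$ genuinely coincide, which holds because $\stabilize{(\Gamma_0,\tick[\theta])}$ and $\stabilize{\Gamma_0}$ have the same variable domain.
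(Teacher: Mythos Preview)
Your proposal is correct and follows essentially the same approach as the paper: induction on the length of $\Gamma$, using Lemma~\ref{lem:vinterp_stable} in the stable-variable case. The paper's proof is a one-liner, so your version simply spells out the details (including the appeal to Lemma~\ref{lem:vinterp_weakening} for the step-index drop in the tick case), which is entirely in line with what the paper intends.
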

\begin{proof}
  By a straightforward induction on the length of $\Gamma$ and using
  Lemma~\ref{lem:vinterp_stable}.
\end{proof}

\begin{lemma}
  \label{lem:vinterp_value}
  If $t \in \vinterp{A}{w}$, then $t$ is a value.
\end{lemma}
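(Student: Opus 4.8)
The statement is proved by a straightforward induction following the same well-founded order $(n,\tsize{A},e)$ (with $e=0$) that was used to show the value relation is well-defined; equivalently, one may induct on the size $\tsize A$ together with the step index $n$. The argument proceeds by inspecting each defining clause of $\vinterp{A}{w}$ in Figure~\ref{fig:log-rel} and checking against the grammar of values in Figure~\ref{fig:syntax}.

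First I would handle the base and modal cases directly: $\vinterp{1}{w}$ contains only $\unit$; $\vinterp{\Nat}{w}$ contains only terms of the form $\suc^n\,\zero$; $\vinterp{A\to B}{n,\sigma}$ contains only $\lambda$-abstractions $\lambda x.t$; $\vinterp{\Box A}{n,\sigma}$ contains only terms $\rbox\,t$; $\vinterp{\DelayA A}{n,\sigma}$ contains only terms $\dfix\,x.t$; and $\vinterp{\DelayE A}{n,\sigma}$ contains only heap locations $l\in\locs[\Delta]$ and terms $\inp[\kappa]$. In each of these cases the relevant term is a value by the definition of the value grammar. It is worth noting explicitly that heap locations $l$ and the delayed fixed points $\dfix\,x.t$, although not typable in the calculus, are nevertheless included among the values in Figure~\ref{fig:syntax}, which is exactly why this lemma holds.

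Next I would handle the three compound cases using the induction hypothesis. For $\vinterp{A\times B}{w}$, any element is of the form $\pair{v_1}{v_2}$ with $v_1\in\vinterp{A}{w}$ and $v_2\in\vinterp{B}{w}$; by the induction hypothesis $v_1$ and $v_2$ are values, hence so is $\pair{v_1}{v_2}$. The case $\vinterp{A+B}{w}$ is analogous via the clauses for $\interm_i\,v$. For $\vinterp{\Fix\,\alpha.A}{w}$, any element is $\into\,v$ with $v\in\vinterp{A[\DelayE(\Fix\,\alpha.A)/\alpha]}{w}$; here the induction is on the strictly smaller type $A$, so the induction hypothesis applies and $v$ is a value, whence $\into\,v$ is a value.

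\textbf{Main obstacle.} There is essentially no obstacle: the proof is entirely routine, amounting to a case analysis that every clause of the value relation produces syntactic values. The only point requiring care is bookkeeping of the well-founded order in the $\Fix$ case (the recursive reference to $\vinterp{\Fix\,\alpha.A}{n,\sigma'}$ at a smaller step index is not needed here, only the reference to the smaller type $A$), and remembering that machine-introduced forms ($l$, $\dfix\,x.t$) count as values.
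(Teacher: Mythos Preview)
Your proposal is correct and follows the same approach as the paper, which simply says ``By inspection of the definition.'' Your version is a more explicit unfolding of that inspection, including the inductive justification needed for the compound cases $A\times B$, $A+B$, and $\Fix\,\alpha.A$. One tiny imprecision: in the $\Fix$ case the induction hypothesis is applied not to $A$ itself but to $A[\DelayE(\Fix\,\alpha.A)/\alpha]$, which has size $\tsize{A}$ because $\tsize{\alpha}=\tsize{\DelayE(\cdot)}=1$; this is presumably what you meant.
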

\begin{proof}
  By inspection of the definition.
\end{proof}

\begin{lemma}
  \label{lem:vinterp_tinterp}
  $\vinterp{A}{w} \subseteq \tinterp{A}{w}$.
\end{lemma}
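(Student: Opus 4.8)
The plan is to use that every element of the value relation is a closed value, hence evaluates to itself in a single step of the big-step semantics, so that the only content of term-relation membership is a question about the value relation at the evaluated store — which is exactly the weakening lemma.

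First I would fix a world $w = (n,\sigma)$ and an arbitrary $v \in \vinterp{A}{n,\sigma}$, and record via Lemma~\ref{lem:vinterp_value} that $v$ is a value; it is moreover closed, since the value relation contains only closed terms. Then I would unfold the definition of $\tinterp{A}{n,\sigma}$: one must exhibit, for every input buffer $\iota : \Delta$ and every store $\sigma'$ with $\sigma' \tickge \sigma$, a store $\sigma''$ and a value $w'$ such that $\heval{v}{\sigma'}{w'}{\sigma''}$ and $w' \in \vinterp{A}{n,\sigma''}$.

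For this, given such $\iota$ and $\sigma' \tickge \sigma$, I would take $\sigma'' := \sigma'$ and $w' := v$: since $v$ is a value, the first rule of the evaluation semantics (Figure~\ref{fig:machine}) gives $\heval{v}{\sigma'}{v}{\sigma'}$. What remains is $v \in \vinterp{A}{n,\sigma'}$, and here I would invoke Lemma~\ref{lem:vinterp_weakening}(i), instantiated with $n' = n$ and using $\sigma \tickle \sigma'$, which yields $\vinterp{A}{n,\sigma} \subseteq \vinterp{A}{n,\sigma'}$ and hence the claim.

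I do not expect a real obstacle: the argument is essentially immediate once one notices that values self-evaluate and that the future-store quantification in the term relation is along $\tickge$, which is precisely the direction in which Lemma~\ref{lem:vinterp_weakening} provides monotonicity of the value relation. The only points worth double-checking are that the quantifiers over $\iota$ and over future stores in $\tinterp{A}{n,\sigma}$ line up correctly, and that closedness of $v$ is inherited from the value relation.
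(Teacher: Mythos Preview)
Your proposal is correct and follows essentially the same approach as the paper: use Lemma~\ref{lem:vinterp_value} to see that the element is a value, invoke the value rule of the evaluation semantics to get self-evaluation, and then apply Lemma~\ref{lem:vinterp_weakening} along $\sigma \tickle \sigma'$ to land in the value relation at the larger store. Your explicit mention of closedness is a detail the paper leaves implicit, but otherwise the arguments coincide.
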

\begin{proof}
  Let $t \in \vinterp{A}{n,\sigma}$, 
  and $\sigma'\tickge[\Delta] \sigma$ and $\iota : \Delta$. By
  Lemma~\ref{lem:vinterp_value}, $t$ is a value and thus
  $\heval{t}{\sigma'}{t}{\sigma'}$. By
  Lemma~\ref{lem:vinterp_weakening}, we have that
  $t \in \vinterp{A}{n,\sigma'}$, which in turn implies
  $t \in \vinterp[\Delta]{A}{n,\sigma'}$ by
  Lemma~\ref{lem:vinterp_weakening}.
\end{proof}

\begin{lemma}
  \label{lem:tinterp_vinterp}
  If $v$ is a value with $v \in \tinterp{A}{w}$, then $v \in \vinterp{A}{w}$.
\end{lemma}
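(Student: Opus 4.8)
The plan is to unfold the definition of $\tinterp{A}{w}$ and exploit the fact that a value evaluates to itself without touching the store. Write $w = (n,\sigma)$. Two preliminary observations make the universally quantified premises in the definition of $\tinterp{A}{n,\sigma}$ instantiable. First, $\heaple$ is reflexive on stores (the defining rule only demands agreement on $\dom\eta$, so $\eta \heaple \eta$, and the two-heap case follows by the obvious use of the second rule), hence $\sigma \heaple \sigma$ and therefore $\sigma \tickge[\Delta] \sigma$ via the first rule defining $\tickle$. Second, an input buffer $\iota : \Delta$ exists, since every value type is inhabited by a closed value ($\unit : \Unit$, $\zero : \Nat$, and products and sums of inhabited types are inhabited); in fact the evaluation of a value never inspects $\iota$, so the particular choice is irrelevant.

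First I would instantiate the membership $v \in \tinterp{A}{n,\sigma}$ with this $\iota$ and with $\sigma' = \sigma$. This produces a store $\sigma''$ and a value $v'$ with $\heval{v}{\sigma}{v'}{\sigma''}$ and $v' \in \vinterp[\Delta]{A}{n,\sigma''}$. Next, since $v$ is a value, the first rule of the evaluation semantics in Figure~\ref{fig:machine} gives $\heval{v}{\sigma}{v}{\sigma}$. By determinism of the evaluation semantics (Lemma~\ref{lem:deterministic}(i)) applied to these two derivations, $v' = v$ and $\sigma'' = \sigma$. Substituting back yields $v \in \vinterp[\Delta]{A}{n,\sigma}$, which is exactly $v \in \vinterp{A}{w}$.

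I do not expect any real obstacle here: the argument is a short unfolding, and the only points needing a moment's care are the reflexivity of $\tickge[\Delta]$ (immediate from reflexivity of $\heaple$) and the remark that an input buffer over $\Delta$ exists (immediate from inhabitation of value types). The essential ingredient is that values are evaluation-inert together with Lemma~\ref{lem:deterministic}(i).
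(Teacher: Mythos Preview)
Your proof is correct and follows essentially the same route as the paper's: instantiate the definition of $\tinterp{A}{n,\sigma}$ at $\sigma' = \sigma$, use that a value evaluates to itself, and conclude. You are merely more explicit than the paper about two points it leaves implicit---the existence of an input buffer $\iota : \Delta$ and the appeal to determinism (Lemma~\ref{lem:deterministic})---both of which are harmless elaborations.
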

\begin{proof}
  Let $v \in \tinterp{A}{n,\sigma}$, and pick an arbitrary
  $\iota : \Delta$. Since $\heval{v}{\sigma}{v}{\sigma}$, we have by
  definition that $v \in \vinterp{A}{n,\sigma}$.
\end{proof}

\begin{lemma}
  \label{lem:clockExprSubs}
  If $\isclock{\Gamma}{\theta}$ and $\gamma \in \cinterp{\Gamma}{w}$,
  then $\theta\gamma$ is a closed clock expression and
  $\cleval{\theta\gamma} \subseteq \domp{\Delta}$.
\end{lemma}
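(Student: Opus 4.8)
The plan is to prove the statement by induction on the derivation of $\isclock{\Gamma}{\theta}$, which, since the clock-formation rules mirror the clock grammar $\theta ::= \cl v \mid \theta \sqcup \theta'$, is the same as structural induction on $\theta$. The union case is immediate: if $\theta = \theta_1 \sqcup \theta_2$ with $\isclock{\Gamma}{\theta_1}$ and $\isclock{\Gamma}{\theta_2}$, then by the induction hypothesis each $\theta_i\gamma$ is a closed clock expression with $\cleval{\theta_i\gamma} \subseteq \domp{\Delta}$; since substitution commutes with $\sqcup$, the term $\theta\gamma = \theta_1\gamma \sqcup \theta_2\gamma$ is a closed clock expression, and $\cleval{\theta\gamma} = \cleval{\theta_1\gamma} \cup \cleval{\theta_2\gamma} \subseteq \domp{\Delta}$ by the defining equation of $\cleval\cdot$ on unions.

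The base case $\theta = \cl v$ with $\hastype{\Gamma}{v}{\DelayE A}$ is the heart of the argument. Here I would first observe, by inversion on the typing derivation, that a typable value of type $\DelayE A$ must be either a variable or of the form $\inp[\kappa]$ with $\kappa :_\capa A \in \Delta$ and $\capa \in \{\cpush, \cbp\}$: no other value constructor can have a $\DelayE$ type, and heap locations and $\dfix$-terms are not typable at all. If $v = \inp[\kappa]$ then $v$ is already closed, $\theta\gamma = \cl{\inp[\kappa]}$, and $\cleval{\theta\gamma} = \{\kappa\} \subseteq \domp{\Delta}$. If $v = x$ then $x : \DelayE A \in \Gamma$, so from $\gamma \in \cinterp{\Gamma}{w}$ and the clauses defining the context relation we obtain $\gamma(x) \in \vinterp{\DelayE A}{w}$. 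Inspecting the definition of $\vinterp{\DelayE A}{n,\sigma}$ in both the $n = 0$ and the $n+1$ cases, $\gamma(x)$ is then either a heap location $l \in \locs[\Delta]$ or again some $\inp[\kappa]$ with $\kappa$ a push channel of $\Delta$. In the first subcase $\theta\gamma = \cl l$, which is closed, and $\cleval{\cl l} = \cl l \subseteq \domp{\Delta}$ by the definition of $\locs[\Delta]$; the second subcase is as above. In every case $\theta\gamma$ is a closed clock expression — using that substituting closed values into a value again yields a value, so that $\cl{v\gamma}$ remains syntactically well-formed — with $\cleval{\theta\gamma} \subseteq \domp{\Delta}$.

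I expect the main obstacle to be the inversion step in the base case: one must be confident that the only values inhabiting $\DelayE A$ in a typing judgement are variables and $\inp[\kappa]$, and that, after substitution by $\gamma$, only heap locations in $\locs[\Delta]$ and terms $\inp[\kappa]$ can arise. This is exactly where the restriction of $\vinterp{\DelayE A}{w}$ to $\locs[\Delta]$, rather than all of $\locs$, pays off. Everything else is routine unfolding of definitions; in particular, the proof does not invoke the fundamental property (Theorem~\ref{thr:lrl}), which matters since this lemma is used in the proof of that property.
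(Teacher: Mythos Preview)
Your proposal is correct and follows essentially the same approach as the paper: induction on the derivation of $\isclock{\Gamma}{\theta}$, with the union case immediate and the base case $\cl v$ handled by observing that a typable value of type $\DelayE A$ is either a variable or $\inp[\kappa]$, then using the context relation (resp.\ the definition of $\vinterp{\DelayE A}{w}$) to place $v\gamma$ in $\locs[\Delta] \cup \setcom{\inp[\kappa]}{\kappa \in \domp{\Delta}}$. The paper compresses your two subcases into a single line by noting that in either case $v\gamma \in \vinterp{\DelayE A}{w}$, but the content is the same, and your explicit remark that the fundamental property is not invoked is exactly the point of the parenthetical justification in the paper's proof.
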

\begin{proof}
  We proceed by induction on $\isclock{\Gamma}{\theta}$.
  \begin{itemize}
  \item $ \inferrule*%
    {\isclock{\Gamma}{\theta} \\ \isclock{\Gamma}{\theta'}}%
    {\isclock{\Gamma}{\theta \sqcup \theta'}}$

    By induction, $\theta\gamma$ and $\theta'\gamma$ are closed and
    $\cleval{\theta\gamma} \subseteq \dom{\Delta}$. Hence,
    $(\theta \sqcup \theta')\gamma = \theta\gamma \sqcup
    \theta'\gamma$ is closed and $\cleval{(\theta \sqcup
      \theta')\gamma} = \cleval{\theta\gamma} \cup
    \cleval{\theta'\gamma} \subseteq \dom{\Delta}$.
  \item $\inferrule*%
    {\hastype{\Gamma}{v}{\DelayE A}}%
    {\isclock{\Gamma}{\cl{v}}}$

    $\hastype{\Gamma}{v}{\DelayE A}$ implies that
    $v\gamma \in \vinterp{\DelayE A}{w}$ (because either $v$ is a
    variable or $v = \inp[\kappa]$ for some clock $\kappa$). Hence,
    $v\gamma \in \locs[\Delta]\cup\setcom{\inp[\kappa]}{\kappa \in
      \domp{\Delta}}$ and thus $\cl{v}\gamma$ is a closed clock
    expression and $\cleval{\cl{v}\gamma}\subseteq \domp\Delta$.
  \end{itemize}
\end{proof}

\begin{lemma}
  \label{lem:vinterpValueType}
  Let $A$ be a value type. Then $v \in \vinterp{A}{n,\sigma}$ iff $\hastype{}{v}{A}$.
\end{lemma}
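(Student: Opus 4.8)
The plan is to prove the two implications separately, each by a short induction. The structural facts I will lean on are: (a) every subformula of a value type is again a value type, and (b) for a value type $A$, the relation $\vinterp{A}{w}$ manifestly does not depend on the world $w$.

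For the forward implication, assume $v \in \vinterp{A}{n,\sigma}$ and induct on $\tsize{A}$, with a case analysis on the shape of the value type $A$. If $A = \Unit$, then $v = \unit$ and $\hastype{}{\unit}{\Unit}$. If $A = \Nat$, then $v = \suc^{k}\,\zero$ for some $k \in \nats$, and $\hastype{}{\suc^{k}\,\zero}{\Nat}$ follows by a subsidiary induction on $k$ using the rules for $\zero$ and $\suc$. If $A = T \times T'$, then $v = \pair{v_1}{v_2}$ with $v_1 \in \vinterp{T}{n,\sigma}$ and $v_2 \in \vinterp{T'}{n,\sigma}$; since $T$ and $T'$ are value types of strictly smaller size, the induction hypothesis gives $\hastype{}{v_1}{T}$ and $\hastype{}{v_2}{T'}$, whence $\hastype{}{\pair{v_1}{v_2}}{T \times T'}$ by the pairing rule. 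The case $A = T + T'$ is entirely analogous, using the injection rules $\interm_i$.

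For the backward implication, assume $\hastype{}{v}{A}$ with $A$ a value type, and induct on the typing derivation, inspecting its final rule. Since $v$ is a value (in the sense of the grammar of Figure~\ref{fig:syntax}) and the variable context is empty, that rule is not the variable rule, and it is not an elimination rule (whose conclusion is never a value, e.g.\ $\pi_i\,t$, $t\,t'$, $\caseterm{t}{x}{t_1}{x}{t_2}$, $\adv$, $\select$, $\unbox\,t$, $\out\,t$, $\recN{s}{x}{y}{t}{n}$, $\letterm{x}{s}{t}$); moreover, since $A$ is a value type, it is none of the rules concluding $\lambda x.t : A_1 \to A_2$, $\rbox\,t : \Box A'$, $\into\,v : \Fix\,\alpha.A'$, or $\inp[\kappa] : \DelayE A'$, and there are no typing rules for the values $l$ and $\dfix\,x.t$. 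Hence the derivation ends with one of the rules for $\unit$, $\zero$, $\suc$, pairing, or injection. The cases $\unit$ and $\zero$ are immediate from the definitions of $\vinterp{\Unit}{w}$ and $\vinterp{\Nat}{w}$; in the $\suc$ case the subvalue has type $\Nat$, so the induction hypothesis places it in $\vinterp{\Nat}{w}$, which is closed under $\suc$; and in the pairing and injection cases the component types are again value types, so the induction hypothesis applies to the subvalues and the definitions of $\vinterp{T \times T'}{w}$ and $\vinterp{T + T'}{w}$ give the claim. Since none of these clauses mention $n$ or $\sigma$, the membership holds for every world.

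The only step demanding care is the inversion carried out in the backward direction: one must verify that no value can be typed at a value type either by an elimination rule or by an introduction rule that yields a function, $\Box$-, $\DelayE$-, or $\Fix$-type, so that the generation lemma really does restrict the derivation to the five constructors $\unit$, $\zero$, $\suc$, $\pair{\cdot}{\cdot}$, $\interm_i$. Everything else is a routine structural induction with no genuinely hard step.
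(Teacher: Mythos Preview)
Your proposal is correct and matches the paper's approach; the paper simply records ``Straightforward induction on $A$.'' Your forward direction is exactly that, and your backward direction, while organized as an induction on the typing derivation rather than on $A$, amounts to the same thing once one realises that the $\Nat$ case in an induction-on-$A$ proof also requires an inner induction (on the numeral) to show that every closed value of type $\Nat$ has the form $\suc^k\,\zero$. The inversion analysis you spell out is precisely the content that the paper leaves implicit.
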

\begin{proof}
  Straightforward induction on $A$.
\end{proof}

\begin{lemma}
  \label{lem:adv}
  Let $\eta_N \in \heaps[\kappa]$,  
  $v \in \vinterp{\DelayE
    A}{n+1,(\eta_N,\clnin{\eta_L})}$,
  $\iota : \Delta$, $\hastype[]{}{w}{\Delta(\kappa)}$ and
  $\kappa \in \cleval{\cl{v}}$. Then, for any
  $\sigma \heapge \eta_N\stick{\kappa}{w}\eta_L$, there are some
  $\sigma'$ and $v' \in \vinterp{A}{n,\sigma'}$ with
  $\heval{\adv\,v}{\sigma}{v'}{\sigma'}$.
\end{lemma}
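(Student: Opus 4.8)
The plan is to split on the two possible shapes of the value $v$ read off the definition of $\vinterp{\DelayE A}{n+1,(\eta_N,\clnin{\eta_L})}$ in Figure~\ref{fig:log-rel}: either $v = \inp[\kappa']$ for some push channel $\kappa'$, or $v = l$ for a heap location $l \in \locs[\Delta]$. Write $\rho$ for the single-heap store $(\eta_N,\clnin{\eta_L})$; this is a well-defined heap since $\eta_N \in \heaps[\kappa]$ and $\clnin{\eta_L}$ have disjoint domains, and moreover $\gc\rho = \rho$. I would also note at the outset that, since $\heaple$ on stores preserves the $\stick{\kappa}{\cdot}$-component, the hypothesis $\sigma \heapge \eta_N\stick{\kappa}{w}\eta_L$ forces $\sigma$ to have the form $\sigma_N\stick{\kappa}{w}\sigma_L$ with $\eta_N \heaple \sigma_N$ and $\eta_L \heaple \sigma_L$.

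In the case $v = \inp[\kappa']$: from $\kappa \in \cleval{\cl v} = \set{\kappa'}$ I get $\kappa' = \kappa$, so $v = \inp[\kappa]$, and the evaluation rule for $\adv\,\inp[\kappa]$ applied in $\sigma = \sigma_N\stick{\kappa}{w}\sigma_L$ gives $\heval{\adv\,v}{\sigma}{w}{\sigma}$. Membership of $\inp[\kappa]$ in $\vinterp{\DelayE A}{n+1,\rho}$ forces $\kappa :_\capa A \in \Delta$, so $\Delta(\kappa) = A$ is a value type, and then $\hastype[]{}{w}{A}$ together with Lemma~\ref{lem:vinterpValueType} yields $w \in \vinterp{A}{n,\sigma}$; so we are done with $\sigma' = \sigma$, $v' = w$.

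The substantive case is $v = l$. Since $\kappa \in \cleval{\cl v} = \cl l$ and every location of $\clnin{\eta_L}$ has a clock avoiding $\kappa$, the location $l$ must lie in $\dom{\eta_N}$, with $\rho(l) = \eta_N(l) = \sigma_N(l)$. Instantiating the defining clause of $l \in \vinterp{\DelayE A}{n+1,\rho}$ at the channel $\kappa \in \cl l$ and the value $w$ (permitted because $\hastype[]{}{w}{\Delta(\kappa)}$) gives $\eta_N(l) \in \tinterp{A}{n,\tau_0}$ with $\tau_0 = \restr[\cl l]{\clin{\gc\rho}\stick{\kappa}{w}\clnin{\gc\rho}}$. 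Using $\gc\rho = \rho$, $\clin\rho = \eta_N$, $\clnin\rho = \clnin{\eta_L}$, and Lemma~\ref{lem:heapOperations}, this simplifies to $\tau_0 = \restr[\cl l]{\eta_N}\stick{\kappa}{w}\restr[\cl l]{\clnin{\eta_L}}$. I would then check $\tau_0 \heaple \sigma$, hence $\tau_0 \tickle \sigma$: indeed $\restr[\cl l]{\eta_N} \heaple \eta_N \heaple \sigma_N$ and $\restr[\cl l]{\clnin{\eta_L}} \heaple \clnin{\eta_L} \heaple \eta_L \heaple \sigma_L$, each step being a subheap inclusion. Feeding $\sigma \tickge \tau_0$ and the buffer $\iota$ into the definition of $\tinterp{A}{n,\tau_0}$ produces $\sigma'$ and $v' \in \vinterp{A}{n,\sigma'}$ with $\heval[\iota]{\eta_N(l)}{\sigma}{v'}{\sigma'}$; since $\sigma_N(l) = \eta_N(l)$, the evaluation rule for $\adv$ on a heap location then gives $\heval{\adv\,l}{\sigma}{v'}{\sigma'}$, as required.

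The step I expect to be the main obstacle is precisely the store bookkeeping in the last case: establishing $\clin{\gc\rho} = \eta_N$ and $\clnin{\gc\rho} = \clnin{\eta_L}$ (which is exactly where $\eta_N \in \heaps[\kappa]$ and the shape of the split $\eta_L = \clin{\eta_L},\clnin{\eta_L}$ enter), and commuting the restriction $\restr[\cl l]{\cdot}$ past $\stick{\kappa}{w}$ so that $\tau_0$ becomes visibly a subheap of $\sigma$. Once those identities are in place, the rest is routine unfolding of the logical relation plus the already-available monotonicity facts (Lemmas~\ref{lem:machine_monotone} and~\ref{lem:heapOperations}) and Lemma~\ref{lem:vinterpValueType}.
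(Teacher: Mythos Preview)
Your proposal is correct and follows essentially the same approach as the paper: case-split on whether $v$ is $\inp[\kappa']$ or a heap location, and in the location case unfold the $\DelayE$ clause of the value relation, identify the resulting restricted store as a $\tickle$-predecessor of $\sigma$, and invoke the term relation. The only cosmetic difference is that the paper passes through the intermediate store $\eta_N\stick{\kappa}{w}\eta_L$ via Lemma~\ref{lem:vinterp_weakening} before instantiating at $\sigma$, whereas you chain $\tau_0 \heaple \sigma$ directly; your more explicit bookkeeping for $\clin{\gc\rho} = \eta_N$ and $\clnin{\gc\rho} = \clnin{\eta_L}$ is exactly what the paper leaves implicit.
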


\begin{proof}
    By definition, $v$ is either some $l \in \locs$ or of the
    form $\inp[\kappa']$.
    \begin{itemize}
    \item In the former case, we have by the definition of the value
      relation and Lemma~\ref{lem:heapOperations},
      \[(\eta_N,\clnin{\eta_L})(l) \in
      \tinterp{A}{n,\restr[\cl{l}]{\eta_N\stick{\kappa}{w}\clnin{\eta_L}}}.\]
      In turn, this implies by Lemma~\ref{lem:vinterp_weakening} that
      \[(\eta_N,\clnin{\eta_L})(l) \in
      \tinterp{A}{n,\eta_N\stick{\kappa}{w}\eta_L}.\] Moreover,
      since $\kappa \in \cl l$ we know that
      $(\eta_N,\clnin{\eta_L})(l) = \eta_N(l)$. Hence, there is a
      reduction $\heval{\eta_N(l)}{\sigma}{v'}{\sigma'}$ with
      $v' \in \vinterp{A}{n,\sigma'}$, which by definition
      means that $\heval{\adv\,v}{\sigma}{v'}{\sigma'}$.
    \item In the latter case, we know that $\kappa' = \kappa$ because
      $\kappa \in \cleval{\cl{\inp[\kappa']}} =
      \set{\kappa'}$. Moreover, we have that $\kappa :_\capa A \in \Delta$
      for $\capa\in\{\cpush,\cbp\}$
      and thus $\hastype[]{}{w}{A}$. By definition,
      $\eta_N\stick{\kappa}{w}\eta_L \heaple \sigma$ implies that
      $\sigma$ is of the form
      $\eta'_N\stick{\kappa}{w}\eta'_L$. Hence, by definition
      $\heval{\adv\,\inp}{\sigma}{w}{\sigma}$. Moreover, by
      Lemma~\ref{lem:vinterpValueType} 
      $w \in \vinterp{A}{n,\sigma}$.
    \end{itemize}

\end{proof}
\begin{thmcpy*}{Theorem~\ref{thr:lrl}}{Fundamental property}
  Given $\wfcxt{\Gamma}$, $\hastype{\Gamma}{t}{A}$, and $\gamma \in \cinterp{\Gamma}{n,\sigma}$,
  then $t\gamma \in \tinterp{A}{n,\sigma}$.
\end{thmcpy*}
\begin{proof}[Proof of Theorem~\ref{thr:lrl}]
  We proceed by structural induction over the typing derivation
  $\hastype{\Gamma}{t}{A}$. If $t\gamma$ is a value, it suffices to
  show that $t\gamma \in \vinterp{A}{n,\sigma}$, according to
  Lemma~\ref{lem:vinterp_tinterp}. In all other cases, to prove
  $t\gamma \in \tinterp{A}{n,\sigma}$, we assume some
  input buffer $\iota : \Delta$ and store
  $\sigma' \tickge[\Delta] \sigma$, and show that there exists
  $\sigma''$ and $v$ s.t. $\heval{t\gamma}{\sigma}{v}{\sigma''}$ and
  $v \in \vinterp[\Delta]{A}{n,\sigma''}$. By
  Lemma~\ref{lem:vinterp_weakening} we may assume that
  $\gamma \in \cinterp[\Delta]{\Gamma}{n,\sigma'}$.
  
  \begin{itemize}
  \item $  \inferrule*%
  {\Gamma' \text{ tick-free or } A \; \stable \\ \wfcxt{\Gamma,x:A,\Gamma'}}%
  {\hastype{\Gamma,x:A,\Gamma'}{x}{A}}$\\

    We show that $x\gamma \in \vinterp{A}{n,\sigma}$.  If
    $\Gamma'$ is tick-free, then
    $x\gamma \in \vinterp{A}{n,\sigma}$ by
    Lemma~\ref{lem:cinterp_tokenfree_restr}

    If $\Gamma'$ is not tick-free, it is of the form
    $\Gamma_1,\tick[\theta],\Gamma_2$ and $A$ is stable. By
    Lemma~\ref{lem:cinterp_tokenfree_restr},
    \[\gamma|_{\Gamma,x:A} \in \cinterp{\Gamma, x : A}{n+1,\sigma'}\]
    for some $\sigma'$. Hence,
    $x\gamma \in \vinterp{A}{n+1,\sigma'}$ and by
    Lemma~\ref{lem:vinterp_weakening} and
    Lemma~\ref{lem:vinterp_stable}
    $x\gamma \in \vinterp{A}{n,\sigma}$.

  \item $\inferrule* {~}%
    {\hastype{\Gamma}{\unit}{\Unit}}$\\
    
    Follows immediately by definition.

  \item $  \inferrule*%
    {\hastype{\Gamma}{s}{A} \\ \hastype{\Gamma,x:A}{t}{B}}%
    {\hastype{\Gamma}{\letterm x s t}{B}}$\\

    By induction, we have $s\gamma \in \tinterp{A}{n,\sigma}$, which
    means that $\heval{s\gamma}{\sigma'}{v}{\sigma''}$ for some
    $v \in \vinterp[\Delta]{A}{n, \sigma''}$. By
    Lemma~\ref{lem:vinterp_weakening} and
    Lemma~\ref{lem:machine_monotone},
    $\gamma \in \cinterp{\Gamma}{n,\sigma''}$ and thus
    \[\gamma[x \mapsto v] \in \cinterp[\Delta]{\Gamma,x :
      A}{n,\sigma''}.\] Hence, we may apply the induction hypothesis to obtain 
    $t\gamma[x \mapsto v] \in \tinterp{B}{n,\sigma''}$.  Since all
    elements in the range of $\gamma$ are closed terms,
    $t\gamma[x \mapsto v] = (t\gamma)[v/x]$ and thus
    $(t\gamma)[v/x] \in
    \tinterp[\Delta]{B}{n,\sigma''}$. Consequently,
    $\heval{(t\gamma)[v/x]}{\sigma''}{w}{\sigma'''}$ with
    $w \in \vinterp{B}{n,\sigma'''}$. By definition, we thus have
    $\heval{(\letterm x s t)\gamma}{\sigma'}{w}{\sigma'''}$ with
    $w \in \vinterp{B}{n,\sigma'''}$.
  \item $\inferrule*%
    {\hastype{\Gamma,x:A}{t}{B} \\ \Gamma\text{ tick-free}}%
    {\hastype{\Gamma}{\lambda x.t}{A \to B}}$

    We show that
    $\lambda x. t\gamma \in \vinterp{A \to B}{n,\sigma}$. To
    this end, we assume 
    $\sigma' \tickge[\Delta] \gc{\sigma}$,
    $n' \le n$, and $v \in \vinterp[\Delta]{A}{n',\sigma'}$,
    with the goal of showing
    $(t\gamma)[v/x] \in \tinterp[\Delta]{B}{n',\sigma'}$. By
    Lemma~\ref{lem:vinterp_gc} and Lemma~\ref{lem:vinterp_weakening},
    $\gamma \in \cinterp[\Delta]{\Gamma}{n',\sigma'}$, and
    thus, by definition,
    $\gamma[x \mapsto v] \in \cinterp[\Delta]{\Gamma,x :
      A}{n',\sigma'}$. By 
    induction, we then have that
    $t\gamma[x \mapsto v] \in \tinterp[\Delta]{B}{n',\sigma'}$. Since
    all elements in the range of $\gamma$ are closed terms,
    $t\gamma[x \mapsto v] = (t\gamma)[v/x]$ and thus
    $(t\gamma)[v/x] \in \tinterp[\Delta]{B}{n',\sigma'}$.
  \item $\inferrule*%
    {\hastype{\Gamma}{s}{A \to B} \\
      \hastype{\Gamma}{t}{A}}%
    {\hastype{\Gamma}{s\,t}{B}}$

    By induction, we have $s\gamma \in \tinterp{A \to B}{n,\sigma}$,
    which means that
    $\heval{s\gamma}{\sigma'}{\lambda x. s'}{\sigma''}$ for some
    $\lambda x. s' \in \vinterp[\Delta]{A \to B}{n, \sigma''}$.  By
    induction, we also have $t\gamma \in \tinterp{A}{n,\sigma}$. Since
    by Lemma~\ref{lem:machine_monotone},
    $\sigma'' \tickge[\Delta] \sigma$, this means that
    $\heval{t\gamma}{\sigma''}{v}{\sigma'''}$ for some
    $v \in \vinterp[\Delta]{A}{n, \sigma'''}$. Hence, by definition,
    $s'[v/x] \in \tinterp[\Delta]{B}{n,\sigma'''}$, since
    $\sigma''' \tickge[\Delta'] \gc{\sigma''}$ by
    Lemma~\ref{lem:heaplePreserve} and
    Lemma~\ref{lem:machine_monotone}. That means that we have
    $\heval{s[v/x]}{\sigma'''}{w}{\sigma''''}$ for some
    $w \in \vinterp[\Delta]{B}{n,\sigma''''}$. By definition of the
    machine, we thus have
    $\heval{(s\gamma)(t\gamma)}{\sigma'}{w}{\sigma''''}$.
  \item $\inferrule*%
    {\hastype{\Gamma}{t}{A} \\
      \hastype{\Gamma}{t'}{B}}%
    {\hastype{\Gamma}{\pair{t}{t'}}{A \times B}}$\\

    By induction, we have $s\gamma \in \tinterp{A}{n,\sigma}$, which
    means that $\heval{s\gamma}{\sigma'}{v}{\sigma''}$ for some
    $v \in \vinterp[\Delta]{A}{n, \sigma''}$. We also have
    $t\gamma \in \tinterp{A}{n,\sigma}$ by induction, which by
    Lemma~\ref{lem:machine_monotone} means that
    $\heval{t\gamma}{\sigma''}{v'}{\sigma'''}$ for some
    $v' \in \vinterp[\Delta]{B}{n, \sigma'''}$. Hence,
    $\heval{\pair{t}{t'}}{\sigma'}{\pair{v}{v'}}{\sigma'''}$, and by
    Lemma~\ref{lem:vinterp_weakening}, $\pair{v}{v'} \in
    \vinterp[\Delta]{A \times B}{n, \sigma'''}$.

  \item $\inferrule*%
    {\hastype{\Gamma}{t}{A_1 \times A_2} \\ i \in \{1, 2\}}%
    {\hastype{\Gamma}{\pi_i\,t}{A_i}}$\\

    By induction, we have
    $t\gamma \in \tinterp{A_1\times A_2}{n,\sigma}$, which means that
    $\heval{t\gamma}{\sigma'}{\pair{v_1}{v_2}}{\sigma''}$ with
    $v_i \in \vinterp[\Delta]{A_i}{n, \sigma''}$. Moreover, by
    definition, $\heval{\pi_i\,t\gamma}{\sigma'}{v_i}{\sigma''}$.
    
  \item $\inferrule*%
    {\hastype{\Gamma}{t}{A_i} \\ i \in \{1, 2\}}%
    {\hastype{\Gamma}{\interm_i\, t}{A_1 + A_2}}$

    By induction, we have $t\gamma \in \tinterp{A_i}{n,\sigma}$, which
    means that $\heval{t\gamma}{\sigma'}{v}{\sigma''}$ with
    $v \in \vinterp[\Delta]{A_i}{n, \sigma''}$. Hence, by definition,
    $\heval{\interm_i\,t\gamma}{\sigma'}{\interm_i\,v}{\sigma''}$ and
    $\interm_i\,v \in \vinterp[\Delta]{A_1 + A_2}{n, \sigma''}$.

  \item $\inferrule*%
    {\hastype{\Gamma,x: A_i}{t_i}{B} \\
      \hastype{\Gamma}{t}{A_1 + A_2} \\ i \in \{1,2\}}%
    {\hastype{\Gamma}{\caseterm {t}{x}{t_1}{x}{t_2}}{B}}$\\

    By induction, we have $t\gamma \in \tinterp{A_1 + A_2}{n,\sigma}$,
    which means that $\heval{t\gamma}{\sigma'}{\interm_i\, v}{\sigma''}$
    for some $i \in \set{1,2}$ such that
    $v \in \vinterp{A_i}{n, \sigma''}$. By
    Lemma~\ref{lem:vinterp_weakening} and
    Lemma~\ref{lem:machine_monotone},
    $\gamma \in \cinterp{\Gamma}{n,\sigma''}$ and thus
    $\gamma[x \mapsto v] \in \cinterp[\Delta]{\Gamma,x :
      A_i}{n,\sigma''}$. Hence, we may apply the induction hypothesis
    to obtain $t_i\gamma[x \mapsto v] \in \tinterp{B}{n,\sigma''}$.
    Since all elements in the range of $\gamma$ are closed terms,
    $t_i\gamma[x \mapsto v] = (t_i\gamma)[v/x]$ and thus
    $(t_i\gamma)[v/x] \in \tinterp[\Delta]{B}{n,\sigma''}$.
    Consequently, $\heval{(t_i\gamma)[v/x]}{\sigma''}{w}{\sigma'''}$
    with $w \in \vinterp{B}{n,\sigma'''}$. By definition, we thus have
    $\heval{(\caseterm
      {t}{x}{t_1}{x}{t_2})\gamma}{\sigma'}{w}{\sigma'''}$, as well.

  \item $\inferrule*%
    {\hastype{\Gamma,\tick[\theta]}{t}{A} \\ \isclock{\Gamma}{\theta}}%
    {\hastype{\Gamma}{\delay_\theta\,t}{\DelayE A}}$

    By definition of the machine we have that
    $\heval{\delay_{\theta\gamma}\,t\gamma}{\sigma'}{l}{\sigma''}$,
    where $\sigma''=\sigma',l\mapsto t\gamma$ and
    $\cl l = \cleval{\theta\gamma}$. By Lemma~\ref{lem:clockExprSubs},
    $\cleval{\theta\gamma} \subseteq \domp{\Delta}$. It remains to be
    shown that
    $l \in \vinterp[\Delta]{\DelayE A}{n, \sigma''}$. For the
    case where $n=0$, this follows immediately from the fact that
    $\cleval{\theta\gamma} \subseteq \domp{\Delta}$.

    Assume that $n = n' + 1$, $\kappa \in \Theta$, and
    $\hastype[]{}{v}{\Delta(\kappa)}$, where
    $\Theta = \cleval{\theta\gamma}$. By
    Lemma~\ref{lem:vinterp_weakening} and Lemma~\ref{lem:vinterp_gc},
    we have that
    $\gamma \in \cinterp[\Delta]{\Gamma}{n'+1, \gc{\sigma''}}$. By
    definition we thus have that
    \[\gamma \in \cinterp[\Delta]{\Gamma,\tick[\theta]}{n',
      \clin{\gc{\sigma''}}\stick\kappa v \clnin{\gc{\sigma''}}} = 
      \cinterp[\Delta]{\Gamma,\tick[\theta]}{n', \mktick{\gc{\sigma''}}},\] and thus
    $\gamma \in \cinterp[\Delta]{\Gamma,\tick[\theta]}{n',
      \restr{\mktick{\gc{\sigma''}}}}$
    according to Lemma~\ref{lem:vinterp_restr}. Hence, we can apply
    the induction hypothesis to conclude that
    \[t\gamma \in \tinterp[\Delta]{A}{n',
        \restr{\mktick{\gc{\sigma''}}}}.\]
    Since $\sigma''(l) = t\gamma$, we thus have that
    $l \in \vinterp[\Delta]{\DelayE A}{n, \sigma''}$.

  \item $\inferrule*%
    {~}%
    {\hastype{\Gamma}{\never}{\DelayE A}}$

    According to the definition of the machine, we have
    $\heval{\never}{\sigma'}{l}{\sigma'}$ with
    $l = \allocate[\emptyset]{\sigma}$. Since $\cl l = \emptyset$, we
    know that $l \in \vinterp[\Delta]{\DelayE A}{n,\sigma'}$.

  \item $ \inferrule*%
    {\kappa :_\capa A \in \Delta, \capa \in \{\cpush, \cbp\}}%
    {\hastype{\Gamma}{\inp}{\DelayE A}}%
    $\\
    $\inp\gamma = \inp \in \vinterp{A}{n,\sigma}$ follows
    immediately by definition and the premise.
  \item $ \inferrule*%
    {\kappa :_\capa A \in \Delta, \capa \in \{\cbuf, \cbp\}}%
    {\hastype{\Gamma}{\buf}{A}}%
    $\\
    Since $\iota : \Delta$, we
    know that $\hastype[]{}{\iota(\kappa)}{A}$. Hence, By definition
    of the machine
    $\heval{\buf}{\sigma'}{\iota(\kappa)}{\sigma'}$. Moreover, by
    Lemma~\ref{lem:vinterpValueType} 
    $\iota(\kappa) \in \vinterp[\Delta]{A}{n,\sigma'}$.
  \item $  \inferrule*%
    {\hastype{\Gamma}{v}{\DelayE A}\\ \wfcxt{\Gamma,\tick[\cl{v}],\Gamma'}}%
    {\hastype{\Gamma,\tick[\cl{v}],\Gamma'}{\adv\,v}{A}}$
  
    By Lemma~\ref{lem:vinterp_weakening} we have that
    $\gamma \in \cinterp[\Delta]{\Gamma,\tick[\cl
      v],\Gamma'}{n,\sigma'}$ and by
    Lemma~\ref{lem:cinterp_tokenfree_restr},
    $\gamma|_\Gamma \in \cinterp[\Delta]{\Gamma,\tick[\cl
      v]}{n,\sigma'}$. Let
    $\Theta = \cleval{\cl{v}\gamma|_\Gamma}$. By definition of the
    context relation, $\Theta$ is well-defined and a subset of
    $\domp{\Delta}$. By definition of the context relation we also
    find $\kappa \in \Theta$, $\eta_N, \eta_L$,
    $\eta'_N, \eta'_L$ such that
    $\sigma' = \eta_N\stick{\kappa}{w}\eta_L$,
    $\hastype[]{}{w}{\Delta(\kappa)}$,
    $\restr{\eta_N} = \restr{\eta'_N}$,
    $\restr{\eta_L} = \restr{\eta'_L}$,
    and
    $\gamma|_\Gamma \in
    \cinterp[\Delta]{\Gamma}{n+1,(\eta'_N,\clnin{\eta'_L})}$.
    By induction, we thus have that
    $v\gamma \in \tinterp[\Delta]{\DelayE
      A}{n+1,(\eta'_N,\clnin{\eta'_L})}$.
    Since $v\gamma$ is a value we have
    $v\gamma \in \vinterp[\Delta]{\DelayE
      A}{n+1,(\eta'_N,\clnin{\eta'_L})}$ by
    Lemma~\ref{lem:tinterp_vinterp}.  By Lemma~\ref{lem:vinterp_restr}
    we then have that
    \[
      v\gamma \in \vinterp[\Delta]{\DelayE
        A}{n+1,(\eta_N,\clnin{\eta_L})}.
    \]
    By Lemma~\ref{lem:adv}, we then find a reduction
    $\heval{\adv\,v\gamma}{\sigma'}{v'}{\sigma''}$ with
    $v' \in \vinterp[\Delta]{A}{n,\sigma''}$.
    
  \item $\inferrule*%
    {\hastype{\Gamma}{v_1}{\DelayE A_1}
      \\\hastype{\Gamma}{v_2}{\DelayE A_2}\\
      \vdash \theta = \cl{v_1} \sqcup \cl{v_2}\\
      \wfcxt{\Gamma,\tick[\theta],\Gamma'}}%
    {\hastype{\Gamma,\tick[\theta],\Gamma'}{\select\,v_1\,v_2}{((A_1
        \times \DelayE A_2) + (\DelayE A_1
        \times A_2)) + (A_1 \times A_2)}}$

    By Lemma~\ref{lem:vinterp_weakening} we have that
    $\gamma \in
    \cinterp[\Delta]{\Gamma,\tick[\theta],\Gamma'}{n,\sigma'}$
    and by Lemma~\ref{lem:cinterp_tokenfree_restr},
    $\gamma|_\Gamma \in
    \cinterp[\Delta]{\Gamma,\tick[\theta]}{n,\sigma'}$. Let
    $\Theta_1 = \cleval{\cl{v_1}\gamma|_\Gamma}$,
    $\Theta_2 = \cleval{\cl{v_2}\gamma|_\Gamma}$, and
    $\Theta = \Theta_1 \cup \Theta_2$. According to the definition of
    the context relation, $\Theta$ is well-defined and a subset of
    $\dom{\Delta}$. By definition of the context relation we also
    find $\kappa \in \Theta$, $\eta_N, \eta_L$,
    $\eta'_N, \eta'_L$ such that
    $\sigma' = \eta_N\stick{\kappa}{w}\eta_L$,
    $\hastype[]{}{w}{\Delta(\kappa)}$,
    $\restr{\eta_N} = \restr{\eta'_N}$,
    $\restr{\eta_L} = \restr{\eta'_L}$,
    and $\gamma|_\Gamma \in
    \cinterp[\Delta]{\Gamma}{n+1,(\eta'_N,\clnin{\eta'_L})}$.
    By induction hypothesis, we thus
    have that
    $v_i\gamma \in \tinterp[\Delta]{\DelayE
      A_i}{n+1,(\eta'_N,\clnin{\eta'_L})}$
    for all $i \in \set{1,2}$. Since $v_i\gamma$ are values, we also
    have that
    $v_i\gamma \in \vinterp[\Delta]{\DelayE
      A_i}{n+1,(\eta'_N,\clnin{\eta'_L})}$
    by Lemma~\ref{lem:tinterp_vinterp}. By
    Lemma~\ref{lem:vinterp_restr} 
    we then have that
    $v_i\gamma \in \vinterp[\Delta]{\DelayE
      A_i}{n+1,(\eta_N,\clnin{\eta_L})}$ for
    all $i \in \set{1,2}$.  There are two cases to consider:
    \begin{itemize}
    \item Let $i \in \set{1,2}$ and $j = 3-i$ such that
      $\kappa \in \Theta_i\setminus\Theta_j$:

      By Lemma~\ref{lem:adv}, there is a reduction
      $\heval{\adv\,v_i\gamma}{\sigma'}{u_i}{\sigma''}$ with
      $u_i \in \vinterp[\Delta]{A_i}{n,\sigma''}$, which by definition
      means that
      $\heval{\select\,v_1\gamma\,v_2\gamma}{\sigma'}{\interm_1(\interm_i\pair{u_1}{u_2})}{\sigma''}$
      with $u_j = v_j\gamma$. It thus remains to be shown that
      $v_j\gamma \in \vinterp[\Delta]{\DelayE A_j}{n,\sigma''}$. There
      are two cases to consider.
      \begin{itemize}
      \item Let $v_j\gamma = l$ for some $l \in \locs[\Delta]$. From
        $l \in \vinterp[\Delta]{\DelayE
          A_j}{n+1,(\eta_N,\clnin{\eta_L})}$
        and the fact that $\kappa \nin \Theta_j$, we obtain that
        $l \in \vinterp[\Delta]{\DelayE A_j}{n+1,\clnin{\eta_L}}$ by
        using Lemma~\ref{lem:vinterp_restr}. In particular, we use the fact
        that $\restr[\Theta_j]{\eta_N} = \emptyset$ since
        $\eta_N \in \heaps[\kappa]$ and $\kappa\nin\Theta_j$. Since
        $\clnin{\eta_L} \tickle[\Delta] \sigma'$ and, by
        Lemma~\ref{lem:machine_monotone}, $\sigma' \tickle[\Delta] \sigma''$,
        we can then use Lemma~\ref{lem:vinterp_weakening} to conclude
        that $l \in \vinterp[\Delta]{\DelayE A_j}{n,\sigma''}$.
      \item Let $v_j\gamma = \inp[\kappa']$ for some clock
        $\kappa'$. But then $\hastype{\Gamma}{v_j}{\DelayE A_j}$ is
        due to $\kappa' :_{\cpush} A_j \in \Delta$ or $\kappa' :_{\cbp} A_j \in \Delta$ and thus
        $v_j\gamma \in \vinterp[\Delta]{\DelayE A_j}{n,\sigma''}$
        follows immediately by definition of the value relation.
      \end{itemize}

    \item $\kappa \in \Theta_1 \cap \Theta_2$:
      By Lemma~\ref{lem:adv}, we obtain a
      reduction $\heval{\adv\,v_1\gamma)}{\sigma'}{v'_1}{\sigma''}$ with
      $v'_1 \in \vinterp[\Delta]{A_1}{n,\sigma''}$, and 
      a reduction $\heval{\adv\,v_2\gamma}{\sigma''}{v'_2}{\sigma'''}$ with
      $v'_2 \in \vinterp[\Delta]{A_2}{n,\sigma'''}$. By definition we thus
      obtain a reduction
      $\heval{\select\,(v_1\gamma)\,(v_2\gamma)}{\sigma'}{\interm_2(\pair{v'_1}{v'_2})}{\sigma'''}$. Moreover,
      applying Lemma~\ref{lem:machine_monotone} and
      Lemma~\ref{lem:vinterp_weakening}, we obtain that
      $v'_1 \in \vinterp[\Delta]{A_1}{n,\sigma'''}$, which means that we
      have
      $\interm_2(\pair{v'_1}{v'_2}) \in \vinterp[\Delta]{A_1 \times
        A_2}{n,\sigma'''}$.
    \end{itemize}

  \item $\inferrule* {~}%
    {\hastype{\Gamma}{\zero}{\Nat}}$

    $\zero\gamma \in \vinterp{\Nat}{n,\sigma}$ follows
    immediately by definition.
    
  \item $\inferrule*%
    {\hastype{\Gamma}{t}{\Nat}}%
    {\hastype{\Gamma}{\suc \, t}{\Nat}}$

    By induction hypothesis
    $t\gamma \in \tinterp{\Nat}{n,\sigma}$, which means that
    $\heval{t\gamma}{\sigma'}{\suc^m\,\zero}{\sigma''}$ for some
    $m\in\nats$.  Hence, by definition,
    $\heval{\suc\,t\gamma}{\sigma'}{\suc^{m+1}\,\zero}{\sigma''}$ and
    $\suc^{m+1}\,\zero \in \vinterp[\Delta]{\Nat}{n,\sigma''}$.

  \item $\inferrule*%
    {\hastype{\Gamma}{s}{A} \\
      \hastype{\Gamma,x:\Nat,y:A}{t}{A} \\ \hastype{\Gamma}{u}{\Nat}}
    {\hastype{\Gamma}{\recN{s}{x}{y}{t}{u}}{A}}$

    We claim that the following holds:
    \begin{equation}
      \label{eq:recN}
      \recN{s\gamma}{x}{y}{t\gamma}{\suc^k\,\zero} \in
      \tinterp{A}{n,\sigma} \text{ for all } k \in \nats
    \end{equation}
    
    To show that
    $\recN{s\gamma}{x}{y}{t\gamma}{u\gamma} \in
    \tinterp{A}{n,\sigma}$, assume some
    $\sigma'\tickge[\Delta] \sigma$ and
    $\iota : \Delta$. By induction hypothesis
    $u\gamma \in \tinterp[\Delta]{\Nat}{n,\sigma}$, which means that
    $\heval{u\gamma}{\sigma'}{\suc^m\,\zero}{\sigma''}$. By
    \eqref{eq:recN} and Lemma~\ref{lem:machine_monotone} we have that
    $\heval{\recN{s\gamma}{x}{y}{t\gamma}{\suc^m\,\zero}}{\sigma''}{v}{\sigma'''}$
    with $v \in\tinterp[\Delta]{A}{n,\sigma'''}$. By definition of the
    machine, we also have that
    $\heval{\recN{s\gamma}{x}{y}{t\gamma}{u\gamma}}{\sigma'}{v}{\sigma'''}$.

    We conclude by showing \eqref{eq:recN} by induction on $k$.
    \begin{itemize}
    \item Case $k = 0$: Let 
      $\sigma'\tickge[\Delta] \sigma$ and $\iota : \Delta$. By
      definition, $\heval{0}{\sigma'}{0}{\sigma'}$. By induction
      hypothesis $s\gamma \in \tinterp[\Delta]{A}{n,\sigma}$,
      which means that $\heval{s\gamma}{\sigma'}{v}{\sigma''}$ for
      some $v \in \vinterp[\Delta]{A}{n,\sigma''}$. By
      definition,
      $\heval{\recN{s\gamma}{x}{y}{t\gamma}{0}}{\sigma'}{v}{\sigma''}$
      follows.
    \item Case $k = l + 1$. Let 
      $\sigma'\tickge[\Delta] \sigma$ and $\iota : \Delta$. By
      definition,
      $\heval{\suc^{k}\,0}{\sigma'}{\suc(\suc^l\,\zero)}{\sigma'}$. By
      induction (on $k$), we have that
      $\heval{\recN{s\gamma}{x}{y}{t\gamma}{\suc^l\,\zero}}{\sigma'}{v}{\sigma''}$
      with $v \in \vinterp[\Delta]{A}{n,\sigma''}$. By
      Lemma~\ref{lem:machine_monotone} and
      Lemma~\ref{lem:vinterp_weakening}, we have
      $\gamma \in \cinterp[\Delta]{\Gamma}{n,\sigma''}$ and
      thus
      \[\gamma[x\mapsto \suc^l\,\zero,y\mapsto v] \in
      \cinterp[\Delta]{\Gamma,x:\Nat,y:A}{n,\sigma''}.\] By
      induction we thus obtain that
      \[(t\gamma)[\suc^l\,\zero/x,v/y] = t\gamma[x\mapsto
      \suc^l\,\zero,y\mapsto v] \in
      \tinterp[\Delta]{A}{n,\sigma''},\] which means that there
      is a reduction
      $\heval{(t\gamma)[\suc^l\,\zero/x,v/y]}{\sigma''}{w}{\sigma'''}$
      with $w \in \vinterp[\Delta]{A}{n,\sigma'''}$. According
      to the definition of the machine, we thus have
      \[\heval{\recN{s\gamma}{x}{y}{t\gamma}{\suc^k\,\zero}}{\sigma'}{w}{\sigma'''}.\]
    \end{itemize}
    
  \item $\inferrule*
    {\hastype{\stabilize\Gamma,x : \DelayA A}{t}{A}}
    {\hastype{\Gamma}{\fix\,x. t}{A}}$

    We will show that
    \begin{align}
      \dfix\,x.t\gamma \in \vinterp{\DelayA A}{m,\emptyset} \text{
        for
        all $m \le n$.}\label{eq:dfix}
    \end{align}

    Using \eqref{eq:dfix}, Lemma~\ref{lem:stabilize}, and
    Lemma~\ref{lem:vinterp_weakening}, we then obtain that
    $(\gamma|_{\stabilize\Gamma})[x\mapsto\dfix\,x.t\gamma] \in
    \cinterp{\stabilize\Gamma,x : \DelayA A}{n,\sigma}$. Hence,
    by induction,
    $t[\dfix\,x.t/x]\gamma =
    t(\gamma|_{\stabilize\Gamma})[x\mapsto\dfix\,x.t\gamma] \in
    \tinterp{A}{n,\sigma}$, which means that we find
    $\heval{t[\dfix\,x.t/x]\gamma}{\sigma'}{v}{\sigma''}$ with
    $v \in \vinterp[\Delta]{A}{n,\sigma''}$. By definition of the
    machine we thus obtain the desired
    $\heval{\fix\,x.t\gamma}{\sigma'}{v}{\sigma''}$.

    We prove \eqref{eq:dfix} by induction on $m$.

    If $m=0$, then \eqref{eq:dfix} follows immediately from the fact
    that $\dfix\,x.t\gamma$ is a closed term.

    Let $m= m' +1$. By Lemma~\ref{lem:vinterp_weakening} and
    Lemma~\ref{lem:stabilize},
    $\gamma|_{\stabilize\Gamma} \in
    \cinterp[\Delta]{\stabilize\Gamma}{m',\emptyset}$. By the
    induction hypothesis (on \eqref{eq:dfix}) we have
    $\dfix\,x.t\gamma \in \vinterp[\Delta]{\DelayA A}{m',\emptyset}$
    and
    thus \[(\gamma|_{\stabilize\Gamma})[x\mapsto\dfix\,x.t\gamma] \in
    \cinterp[\Delta]{\stabilize\Gamma,x : \DelayA
      A}{m',\emptyset}.\] Hence, by induction,
    $t[\dfix\,x.t/x]\gamma =
    t(\gamma|_{\stabilize\Gamma})[x\mapsto\dfix\,x.t\gamma] \in
    \tinterp[\Delta]{A}{m',\emptyset}$, which allows us to conclude
    that
    $\dfix\,x.t\gamma \in \vinterp[\Delta]{\DelayA A}{m,\emptyset}$.
    
  \item $\inferrule*%
    {\hastype{\Gamma}{x}{\DelayA A}\\ \wfcxt{\Gamma,\tick[\theta],\Gamma'}}%
    {\hastype{\Gamma,\tick[\theta],\Gamma'}{\adv\,x}{A}}$

    By Lemma~\ref{lem:vinterp_weakening} we have that
    $\gamma \in
    \cinterp[\Delta]{\Gamma,\tick[\theta],\Gamma'}{n,\sigma'}$ and by
    Lemma~\ref{lem:cinterp_tokenfree_restr},
    $\gamma|_\Gamma \in
    \cinterp[\Delta]{\Gamma,\tick[\theta]}{n,\sigma'}$. Let
    $\Theta = \cleval{\theta\gamma|_\Gamma}$. According the definition
    of the context relation, $\Theta$ is well-defined and we find
    $\kappa \in \Theta$, $\hastype[]{}{w}{\Delta(\kappa)}$,
    $\eta_N, \eta_L$, $\eta'_N, \eta'_L$ such that
    $\sigma' = \eta_N\stick\kappa w\eta_L$,
    $\restr{\eta_N} = \restr{\eta'_N}$,
    $\restr{\eta_L} = \restr{\eta'_L}$,
    and
    $\gamma|_\Gamma \in
    \cinterp[\Delta']{\Gamma}{n+1,(\eta'_N,\clnin{\eta'_L})}$.  By
    Lemma~\ref{lem:cinterp_tokenfree_restr}, we thus have that
    $\gamma(x) = \dfix\,y.t$ with
    $\dfix\,y.t \in \vinterp{\DelayA
      A}{n+1,(\eta'_N,\clnin{\eta'_L})}$. By definition, this implies
    that $t[\dfix\,y.t/y]\in \tinterp[\Delta]{A}{n,\emptyset}$. That
    is, we find a reduction
    $\heval{t[\dfix\,y.t/y]}{\sigma'}{v}{\sigma''}$ with
    $v \in \vinterp[\Delta]{A}{n,\sigma''}$, which by definition means
    that we also have a reduction
    $\heval{\adv\,x\gamma}{\sigma'}{v}{\sigma''}$.
    
  \item $\inferrule*
    {\hastype{\stabilize\Gamma}{t}{ A}}
    {\hastype{\Gamma}{\rbox\, t}{\Box A}}$

    We show that
    $\rbox\, t \gamma \in \vinterp{\Box A}{n,\sigma}$.  By
    Lemma~\ref{lem:stabilize},
    $\gamma|_{\stabilize\Gamma} \in
    \cinterp{\stabilize\Gamma}{n,\emptyset}$. Hence, by
    induction,
    $t\gamma = t\gamma|_{\stabilize\Gamma}\in
    \tinterp{A}{n,\emptyset}$, and thus
    $\rbox\,t\gamma \in \vinterp{\Box A}{n,\sigma}$.
  \item $\inferrule*
    {\hastype{\Gamma}{t}{\Box A}}
    {\hastype{\Gamma}{\unbox\, t}{A}}$

    By induction hypothesis, we have that
    $t\gamma \in \tinterp{\Box A}{n, \sigma}$. That is,
    $\heval{t\gamma}{\sigma'}{\rbox\, s}{\sigma''}$ for some
    $s \in \tinterp[\Delta]{A}{n,\emptyset}$. Hence,
    $\heval{s}{\sigma''}{v}{\sigma'''}$ such that
    $v \in \vinterp[\Delta]{A}{n,\sigma'''}$ which, by
    Lemma~\ref{lem:vinterp_weakening}, implies
    $v \in \vinterp[\Delta]{A}{n,\sigma'''}$. Moreover, by definition
    of the machine we have that
    $\heval{\unbox\,t}{\sigma'}{v}{\sigma'''}$.
  \item $\inferrule*
    {\hastype{\Gamma}{t}{\Fix \;\alpha.A}}
    {\hastype{\Gamma}{\out\,t}{A[\DelayE(\Fix \; \alpha.A)/\alpha]}}$

    By induction hypothesis
    $t\gamma \in \tinterp{\Fix\,\alpha.A}{n,\sigma}$, which
    means that $\heval{t\gamma}{\sigma'}{\into\,v}{\sigma''}$ for some
    $v\in
    \vinterp[\Delta]{A[\DelayE(\Fix\,\alpha.A)/\alpha]}{n,\sigma''}$. Moreover,
    by definition of the machine we consequently have
    $\heval{\out\, t\gamma}{\sigma'}{v}{\sigma''}$.
  \item $\inferrule*
    {\hastype{\Gamma}{t}{A[\DelayE(\Fix\,\alpha.A)/\alpha]}}
    {\hastype{\Gamma}{\into\,t}{\Fix\,\alpha.A}}$

    By induction hypothesis
    $t\gamma \in
    \tinterp{A[\DelayE(\Fix\,\alpha.A)/\alpha]}{n,\sigma}$,
    which means that $\heval{t\gamma}{\sigma'}{v}{\sigma''}$ with
    $v\in
    \vinterp[\Delta]{A[\DelayE(\Fix\,\alpha.A)/\alpha]}{n,\sigma''}$.
    Hence, by definition,
    $\heval{\into\,t\gamma}{\sigma'}{\into\,v}{\sigma''}$ and
    $\into\, v\in \vinterp[\Delta]{\Fix\,\alpha.A}{n,\sigma''}$.
    \end{itemize}
\end{proof}



\end{document}